\title{Hardness and  Approximation Schemes for Discrete Packing and Domination\thanks{A preliminary version of this paper appeared in the 12th Annual International Conference on Combinatorial Optimization and Applications (COCOA) 2018 \cite{Madireddy2018}.}}
\author{Raghunath Reddy Madireddy$^1$ \and  Apurva Mudgal$^2$ \and Supantha Pandit$^3$}
\date{
	$^1$Birla Institute of Technology and Science, Pilani, Hyderabad Campus,  India \\ \texttt{raghunath@hyderabad.bits-pilani.ac.in}\\%
	$^2$Indian Institute of Technology Ropar, Punjab, India. \\ \texttt{apurva@iitrpr.ac.in}\\
    $^3$Dhirubhai Ambani Institute of Information and Communication Technology, Gandhinagar, Gujarat, India. \\ \texttt{pantha.pandit@gmail.com}\\[2ex]%
}
\newcommand{\colb}[1]{{\color{black}{{\textit{#1}}}}}
\newcommand{\colr}[1]{{\color{red}{{#1}}}}
\newcommand\norm[1]{\left\lVert#1\right\rVert}
\newcolumntype{P}[1]{>{\centering\arraybackslash}p{#1}}
\newcolumntype{M}[1]{>{\centering\arraybackslash}m{#1}}
\newtheorem{claim1}{Claim}
\newcommand{\np}{$\mathsf{NP}$}
\newcommand{\apx}{$\mathsf{APX}$}
\newcommand{\ptas}{$\mathsf{PTAS}$}
\newcommand{\qptas}{$\mathsf{QPTAS}$}
\newcommand{\seg}{\mathsf{seg}}
\newmdenv[
  backgroundcolor=gray!15,linewidth=1pt,
  topline=false,
  bottomline=false,
  rightline=false,
  skipabove=\topsep,
  skipbelow=\topsep,
]{siderules}
\definecolor{ccol}{RGB}{152,0,152}
\newcommand{\mds}{Minimum Dominating Set}
\newcommand{\mis}{Maximum Independent Set}
\newcommand{\mdis}{\textit{IS}}
\newcommand{\mdds}{\textit{DS}}
\newcommand{\spds}{$\mathsf{SPECIAL}$-$\mathsf{3DS}$}
\newcommand{\spsc}{$\mathsf{SPECIAL}$-$\mathsf{3SC}$}
\newcommand{\misp}{\textit{MISP-3}}
\newcommand{\mdsp}{\textit{MDSP-3}}
\newenvironment{proof*}[1]
  {%
   \begin{proof}}
  {\end{proof}}
\def\namedlabel#1#2{\begingroup
   \def\@currentlabel{#2}%
   \label{#1}\endgroup
}
\newtheorem{theorem}{Theorem}
\newtheorem{lemma}{Lemma}
\newtheorem{definition}{Definition}
\begin{document}
	\maketitle
	
	\begin{abstract}
	We present polynomial-time approximation schemes based on \colb{local search} technique for both \colb{geometric (discrete) independent set (\mdis)} and  \colb{geometric (discrete) dominating set (\mdds)} problems, where the objects are arbitrary radii disks and arbitrary side length axis-parallel squares. Further, we show that the \mdds~problem is \apx-hard for various shapes in the plane. Finally, we prove that both \mdis~and \mdds~problems are \np-hard for unit disks intersecting a horizontal line and axis-parallel unit squares intersecting a straight line with slope $-1$. \\

		\noindent\textbf{Keywords:} Discrete Independent Set, Discrete Dominating Set, Local Search, $\mathsf{PTAS}$,  \np-hard, \apx-hard, Disks, Axis-parallel  Squares, Axis-parallel  Rectangles.
	\end{abstract}

\section{Introduction} \label{np_hard}
 The Maximum Independent Set and the Minimum Dominating Set problems attract researchers due to their numerous applications in various fields of computer science like VLSI design, network routing, etc. The input to both problems consists of a set of geometric objects $\mathcal{R}$ in the plane.  In the \mis~problem, we need to find a maximum size sub-collection of objects $\mathcal{R}^\prime \subseteq \mathcal{R}$ such that no two objects in $\mathcal{R}^\prime$ intersect. In the \mds~problem, we need to find a minimum size sub-collection of objects $\mathcal{R}^\prime \subseteq \mathcal{R}$ such that for every object $O \in (\mathcal{R} \setminus \mathcal{R}^\prime)$ there exists at least one object $O^\prime \in \mathcal{R}^\prime$ such that $O$ and $O^\prime$ intersect. 


The problems considered in this paper are discrete variants of the \mis~and \mds~problems. We formally define these problems as follows
\begin{siderules}
    \textbf{\textcolor{red!70!blue}{Maximum Discrete Independent Set ({\mdis})}}. Let $\mathcal{R}$ be a set of objects and $\mathcal{P}$ be a set of points in the plane.  Compute a maximum size subset $\mathcal{R}^\prime \subseteq \mathcal{R}$ such that no two objects in $\mathcal{R}^\prime$ cover the same point from  $\mathcal{P}$. 
\end{siderules}

\begin{siderules}
\textbf{\color{red!70!blue}Minimum Discrete Dominating  Set (\mdds)}. Let $\mathcal{R}$ be a set of objects and $\mathcal{P}$ be a set of points in the plane.  Compute a minimum size subset $\mathcal{R}^\prime \subseteq \mathcal{R}$ such that for every object $O \in \mathcal{R} \setminus \mathcal{R}^\prime$,  $O\cap O^\prime\cap {\cal P} \neq \emptyset$ for some $O^\prime \in \mathcal{R}^\prime$. 
\end{siderules}

In this paper, we study the hardness results and polynomial-time approximation schemes\footnote{A polynomial-time approximation scheme (\ptas) is a family of algorithms $\{A_\epsilon\}$, where there is an algorithm for each $\epsilon > 0$, such that $A_\epsilon$ is a $(1 + \epsilon)$-approximation
algorithm (for minimization problems) or a $(1 - \epsilon)$-approximation algorithm (for maximization
problems) \cite{Williamson2011}. The running time is bounded by a polynomial in the size of the instance and $\epsilon$.} (\ptas es) of the \mdis~and \mdds~problems for various geometric objects such as disks, axis-parallel squares, axis-parallel rectangles, and some other shapes. 

We note that the \mdis~and \mdds~problems are at least as hard as the \mis~and \mds~problems, respectively. This can be established by placing a point in each intersection region formed by the given objects in the corresponding instances of the \mis~and \mds~problems.

\subsection{Previous work}
The \mis~problem is known as \np-hard for several classes of objects like unit disks \cite{Clark1990}, unit squares \cite{Fowler1981}, etc. 
Further, \ptas es are also known for unit squares and unit disks \cite{Hunt1998,Matsui2000,Das2015}.   On the other hand, Chan and Har-Peled \cite{Chan2012} gave a \ptas~for the \mis~problem with pseudo-disks based on the local search algorithm.  For axis-parallel rectangles, Adamaszek and Wiese \cite{Adamaszek2014} provided a \qptas. Chuzhoy and Ene \cite{Chuzhoy2016}  also have provided a \qptas~with improved running time. In 2021,  Mitchell \cite{Mitchell21} provided a breakthrough result for the \mis~problem on axis-parallel rectangles and provided a constant factor approximation algorithm. After that, a series of improved constant factor approximation algorithms are reported: factor 6 by Galvez et al. \cite{GalvezKMMPW22}, factor 3 by Galvez et al. \cite{GalvezKMMPW22}, factor $(2+\epsilon)$ by Galvez et al. \cite{GalvezKMMPW22}, factor $\frac{10}{3}$ by Mitchell \cite{Mitchell21}, factor $(3+\epsilon)$ by Mitchell \cite{Mitchell21}.

The \mdis~problem was first studied by Chan and Har-Peled  \cite{Chan2012}. They show that an $\mathsf{LP}$-based algorithm gives an $O(1)$-approximation for pseudo-disks. To our knowledge,  this is the best approximation factor known till now for the \mdis~problem, even for special classes of pseudo-disks like disks, squares, etc. On the other hand,  Chan and Grant \cite{Chan2014} have shown that the \mdis~problem is \apx-hard for various classes of objects like axis-parallel rectangles containing a common point, axis-parallel strips,  ellipses sharing a common point, downward shadows of segments, unit balls in $\mathbb{R}^3$ containing the origin, and other shapes. (see Theorem 1.5 in \cite{Chan2014}).

 The \mds~problem is \np-complete for unit disk graphs \cite{Clark1990}  and a \ptas~is known for the same \cite{Hunt1998}.  Recently, Gibson and Pirwani \cite{Gibson2010} obtained a \ptas~for \mds~problem   for arbitrary radii disks by local search method first used in \cite{Chan2012} and \cite{Mustafa2010}.  However, Erlebach and van Leeuwen \cite{Erlebach2008} have shown that the \mds~problem is \apx-hard for several intersection graphs of objects such as axis-parallel rectangles, ellipses, and other shapes.  Recently, by using local search method,  Bandyapadhyay et al. \cite{Bandyapadhyay2018} gave a $(2 + \epsilon)$ approximation algorithm for the \mds~problem with diagonal-anchored (A set of axis-parallel rectangles is said to be diagonal-anchored, if given a diagonal with slope $-1$ then either the lower-left or the upper-right corner of each rectangle is on the diagonal.) axis-parallel rectangles, for any $\epsilon>0$. They studied $\mathcal{L}$-types of objects which are essentially rectangles when the $\mathcal{L}$-shapes are diagonal-anchored. They gave a local search based \ptas~for a special case where the rectangles are anchored from the same side of the diagonal. 

\subsection{Our contributions}

\begin{itemize}
    \item[\ding{228}] In \cite{Chan2012}, Chan and Har-Peled noted that,  \colb{``Unlike in the original independent set (\mis) problem, it is not clear if the local search yields a good approximation for \mdis~problem, even in the unweighted case"}. In this paper, we answer this partially affirmatively by providing \ptas es for the \mdis~problem with disks and axis-parallel squares. More specifically, we prove the following.

\begin{itemize}
    \item The \mdis~problem admits \ptas es for arbitrary radii disks (Theorem \ref{ptas_dis_disks}) and arbitrary side length squares (Theorem \ref{ptas_dis_squares}).
    
    \item The \mdds~problem admits \ptas es for arbitrary radii disks (Theorem \ref{ptas_dds_disks}) and arbitrary side length squares (Theorem \ref{ptas_dds_squares}).
\end{itemize}

The \ptas es for the \mdis~problem are obtained by extending the local search algorithm given in \cite{Chan2012}. Whereas the \ptas es for the \mdds~problem are obtained by extending the local search method of Gibson and Pirwani \cite{Gibson2010}. \\

\item[\ding{228}] To prove the \apx-hardness results for the \mdds~problem on various objects, we first introduce a special case of the \mds~problem with set systems, the \spds~problem (see Definition \ref{special_3ds_def}) and prove that it is \apx-hard. The proof is inspired by the \apx-hardness result of the \spsc~problem studied by Chan and Grant \cite{Chan2014}. Next, we use the \spds~problem to prove that the \mdds~problem on the following classes of geometric objects are \apx-hard (Theorem \ref{apx-hardness-theorem}). The classes of objects we consider are mentioned in \cite{Chan2014}. 

	\begin{description} 
		
		\item[{\bf A1:}]  Axis-parallel rectangles in $\mathbb{R}^2$, even when all the rectangles have an upper-left corner inside a square with side length $\epsilon$ and lower-right corner inside a square with side length $\epsilon$ for an arbitrary small $\epsilon>0$.
		
		\item[{\bf A2:}] Axis-parallel  ellipses in $\mathbb{R}^2$, even when all the ellipses contain the origin. 
		
		\item[{\bf A3:}] Axis-parallel strips in $\mathbb{R}^2$. 
		\item[{\bf A4:}]   Axis-parallel rectangles in $\mathbb{R}^2$,  even when every pair of the rectangles intersect either zero or four times. 
		\item[{\bf A5:}] Downward shadows of   segments in the plane.

		\item[{\bf A6:}]  Downward shadows of cubic polynomials in the plane. 
		
		\item[{\bf A7:}] Unit ball in $\mathbb{R}^3$, even when the origin is inside every unit ball.
		
		\item[{\bf A8:}] Axis-parallel cubes of similar size in $\mathbb{R}^3$ containing a common point.
		
		\item[{\bf A9:}] Half-spaces in $\mathbb{R}^4$. 
		
		\item[{\bf A10:}] Fat semi-infinite wedges in $\mathbb{R}^2$ with apices near the origin. 
	\end{description}

We note that for classes $\textbf{A1}$-$\textbf{A10}$, the \mdis~problem is known to be \apx-hard \cite{Chan2014}. Further,  in \cite{Chan2014}, authors also have proved that the set cover problem is   \apx-hard for all  classes  of objects $\textbf{A1}$-$\textbf{A10}$ and   hitting  set is  \apx-hard for  four classes of objects $\textbf{A3}, \textbf{A4}, \textbf{A7}$, and $\textbf{A9}$.  Recently,   in \cite{Madireddy2017}, the authors have shown that the hitting set problem is \apx-hard for the remaining classes of objects. We further show that  both \mdis~and \mdds~problems are \apx-hard for $(i)$ fat triangles of similar size\footnote{The diameter of the triangles are in the range  $(2 - \delta, 2]$, for a small $\delta>0$ \cite{HarPeled2009}.},  and $(ii)$ similar circles (see Theorem \ref{more_apx_results}). 

\item[\ding{228}] We also show that both \mdis~and \mdds~problems are \np-hard for unit disks intersecting a horizontal line and axis-parallel unit squares intersecting a straight line of slope $-1$.  Our \np-hardness results are inspired by the results of Fraser and L\'opez-Ortiz \cite{Fraser2017} and Mudgal and Pandit \cite{Mudgal2015}.  We note that in these restricted cases,  \mis~problem can be solved in polynomial time for unit disks \cite{NANDY2017} and unit squares \cite{Mudgal2015}. Further, the \mds~problem can also be solved in polynomial-time for unit squares \cite{Pandit17}. Our \np-hardness results show the gradation of the complexity between continuous and discrete versions of the problems. 

\end{itemize}

\subsection{Organization of the paper}

The remainder of the paper is organized as follows. In Section \ref{ptas_mdis}, we present \ptas es for the \mdis~problem with disks of arbitrary radii and squares of arbitrary side lengths. Section \ref{mdds_ptas} extends these results by providing \ptas es for the \mdds~problem using the same set of objects. Section \ref{apx_hard_results} contains the \apx-hardness results, including the proof of Theorem \ref{apx_hardness_mdds} and related problems. Finally, in Section \ref{np_hard_mdis}, we establish \np-hardness for both the \mdis~and \mdds~problems when restricted to unit disks intersecting a horizontal line and axis-parallel unit squares intersecting a line of slope $-1$.



\section{\texorpdfstring{\boldmath{\ptas}}~: Maximum Discrete Independent Set Problem} \label{ptas_mdis}
This section presents \ptas es for the \mdis~problem with arbitrary radii disks and arbitrary side length squares. These \ptas es are obtained by extending the local search technique of Chan and Har-Peled \cite{Chan2012} for the \mis~problem with pseudo-disks.   

\subsection{The algorithm}

Let $(\mathcal{P}, \mathcal{R})$ be the input to the \mdis~problem where $\mathcal{P}$ is a set of points, and  $\mathcal{R}$ is a set of objects in the plane. Further, let $m=|\mathcal{R}|$ and $n= |\mathcal{P}|$. Without loss of generality, we assume that no object completely covers another object in $\mathcal{R}$. A set $\mathcal{L} \subseteq \mathcal{R}$ is a \colb{feasible solution} to the \mdis~problem if no two objects in $\mathcal{L}$ cover the same point from $\mathcal{P}$. 
For a given integer $t > 1$, a feasible solution $\mathcal{L}$ is  \colb{$t$-locally optimal} if we cannot obtain another feasible solution  $\mathcal{L}^\prime \subseteq \mathcal{R}$ of larger size,  by replacing at most $t$ objects from $\mathcal{L}$  with at most $t+1$ objects from $\mathcal{R}$.

Algorithm \ref{localsearchalgo_dis} describes the procedure to compute a $t$-locally optimal solution for the \mdds~problem.
Note that, in every local exchange (step 5), the size of $\mathcal{L}$ is increased by at least one. Hence, the local exchange can be possible at most $m$ times. However, every such step needs to go over all possible sets $\mathcal{R}^\prime$ and $\mathcal{L}^\prime$.
Since $|\mathcal{R}^\prime| \leq t+1$, there are $O(m^{t+1})$ possibilities for its value.
Similarly, there are $O(m^t)$ different possible values for $|\mathcal{L}^\prime|$. Checking whether
$(\mathcal{L} \setminus \mathcal{L}^\prime) \cup \mathcal{R}^\prime$  is feasible requires $O(nm)$ time.
Hence, Algorithm \ref{localsearchalgo_dis} returns a $t$-locally optimal solution $\mathcal{L} \subseteq \mathcal{R}$ in  $O(nm^{2t+3})$-time.

		
		

\begin{algorithm}[H]
\caption{$t$-level local search for \mdis~problem}
\label{localsearchalgo_dis}
\SetAlgoLined
\DontPrintSemicolon
$\mathcal{L} \gets \emptyset$\;
\For{$\mathcal{R}^\prime \subseteq \mathcal{R} \setminus \mathcal{L}$ of size at most $t+1$}{
    \For{$\mathcal{L}^\prime \subseteq \mathcal{L}$ of size at most $t$}{
        \If{$(\mathcal{L} \setminus \mathcal{L}^\prime) \cup \mathcal{R}^\prime$ is a feasible solution and $|(\mathcal{L} \setminus \mathcal{L}^\prime) \cup \mathcal{R}^\prime| \geq |\mathcal{L}| + 1$}{
            $\mathcal{L} \gets (\mathcal{L} \setminus \mathcal{L}^\prime) \cup \mathcal{R}^\prime$   \tcp*{local exchange step} 
        }
    }
}
\end{algorithm}

            
            


   
        
            
            
        



In the following, we first show that Algorithm \ref{localsearchalgo_dis} returns a $t$-locally optimal solution that has the size at least $(1-O(\frac{1}{\sqrt{t}}))$ times the size of the optimal solution to the \mdis~problem  when the objects are arbitrary radii disks. Later, we show that the same is also true for the arbitrary side length axis-parallel squares. We run the above algorithm with $t=O(1/\epsilon^2)$ to provide the desired $(1+\epsilon)$-approximation.

\subsection{Preliminaries} \label{premilinaries}

Assume that $\mathcal{R}$ is a set of arbitrary radii disks. Without loss of generality, we assume that no three disk centers and points in $\mathcal{P}$ are collinear, and no more than three disks are tangent to a circle \cite{Gibson2010,Wan2011}.  For a  disk $D$, let  $\mathsf{cen}(D)$ and $\mathsf{radius}(D)$ denote the center and radius of $D$ respectively. Let $||x-y||$ denote the Euclidean distance between the points $x$ and $y$ in the plane. 

\begin{definition}
For a disk $D$ and a point $p$ in the plane, we define  $\mathsf{\Phi}(D, p) = ||\mathsf{cen}(D)-p||-\mathsf{radius}(D)$.

\end{definition}


For the given instance $(\mathcal{P}, \mathcal{R})$ of the \mdis~problem, let
 $\mathcal{L} \subseteq \mathcal{R}$ be the $t$-locally optimal  solution return by  Algorithm \ref{localsearchalgo_dis}  and  let $\mathcal{O} \subseteq \mathcal{R}$ be an optimal solution.

One can assume that $\mathcal{L} \cap \mathcal{O} = \emptyset$. To see this, we follow the argument of Mustafa and Ray \cite{Mustafa2010}.
Suppose that this statement is not true. Let $\mathcal{T}  = \mathcal{L} \cap \mathcal{O}$, $\mathcal{L}^* = \mathcal{L} \setminus \mathcal{T}$, $\mathcal{O}^* = \mathcal{O} \setminus \mathcal{T}$. Further, let $\mathcal{P}^* \subseteq \mathcal{P}$ be the set of points that are not covered by any disk in $\mathcal{T}$ and $\mathcal{R}^* \subseteq \mathcal{R}$ be the set of disks that are independent of the disks in $\mathcal{T}$. No disk in $\mathcal{R}^*$ covers points in $\mathcal{P}^*$.  Note that $\mathcal{L}^*$ and $\mathcal{O}^*$ are disjoint. Further, $\mathcal{O}^*$ is an independent set of maximum size for the discrete independent set problem for $\mathcal{P}\setminus  \mathcal{P}^*$ and $\mathcal{R}^*$. Therefore, if $|\mathcal{L}^*| \geq (1 - \epsilon) |\mathcal{O}^*|$ (for a small $\epsilon >0$), then $|\mathcal{L}| \geq (1 - \epsilon) |\mathcal{O}|$. Furthermore, any beneficial $t$-local exchange for $\mathcal{L}^*$ and $\mathcal{P}^*$ is a beneficial $t$-local exchange for $\mathcal{L}$ and $\mathcal{P}$. Thus, we can apply our analysis to $\mathcal{L}^*$ and $\mathcal{P}^*$. Hence, in the rest of the section, we assume that $\mathcal{L} \cap \mathcal{O} = \emptyset$.

 For a disk $D \in \mathcal{L} \cup \mathcal{O}$, let $\mathsf{cell}(D)$ be the set of points $p$ in the plane such that $\mathsf{\Phi}(D, p) \leq \mathsf{\Phi}(D^\prime, p) \text{ for all } D^\prime \in \mathcal{L} \cup \mathcal{O}$,  i.e., $\mathsf{cell}(D) = \{p \mid \mathsf{\Phi}(D, p) \leq \mathsf{\Phi}(D^\prime, p),  ~\forall D^\prime \in \mathcal{L} \cup \mathcal{O}\}$.  The collection of all cells of disks  in $\mathcal{L} \cup \mathcal{O}$ defines the \colb{Additive Weighted Voronoi Diagram $(\mathsf{AWVD})$}, i.e., $ \mathsf{AWVD} = \bigcup_{D \in \mathcal{L} \cup \mathcal{O}} \mathsf{cell}(D)$. We use $\seg(p, q)$ to denote the line segment with endpoints $p$ and $q$. 
 
 We now mention two properties of cells in the $\mathsf{AWVD}$.

\begin{lemma}[\cite{Gibson2010}] 
	The following two properties are true for each disk $D$ in any set of disks such that no disk is contained inside another disk.
	\begin{description}
		\item[{\bf I:}] $\mathsf{cell}(D)$ is non-empty. In particular,  the point $\mathsf{cen}(D)$ is contained only in $\mathsf{cell}(D)$.
		\item[{\bf II:}] $\mathsf{cell}(D)$ is star-shaped, i.e., for any point $p \in \mathsf{cell}(D)$, every point on  the segment $\seg(p,\mathsf{cen}(D))$ is in $\mathsf{cell}(D)$. 
	\end{description}
	
In particular, these properties hold for the set $\mathcal{L} \cup \mathcal{O}$.
	
	\label{startproperty}
\end{lemma}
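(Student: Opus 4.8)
The plan is to establish both properties directly from the definition of $\mathsf{\Phi}$, using only the standing assumption that no disk of $\mathcal{L}\cup\mathcal{O}$ contains another. For Property~\textbf{I}, I would evaluate $\mathsf{\Phi}$ at $\mathsf{cen}(D)$ itself: there $\mathsf{\Phi}(D,\mathsf{cen}(D)) = -\mathsf{radius}(D)$, while $\mathsf{\Phi}(D',\mathsf{cen}(D)) = ||\mathsf{cen}(D')-\mathsf{cen}(D)|| - \mathsf{radius}(D')$ for every other $D' \in \mathcal{L}\cup\mathcal{O}$. Since $D \not\subseteq D'$ forces $||\mathsf{cen}(D')-\mathsf{cen}(D)|| > \mathsf{radius}(D') - \mathsf{radius}(D)$, the inequality $\mathsf{\Phi}(D,\mathsf{cen}(D)) < \mathsf{\Phi}(D',\mathsf{cen}(D))$ is strict; hence $\mathsf{cen}(D) \in \mathsf{cell}(D)$, so the cell is non-empty, and the strict inequality also rules out membership of $\mathsf{cen}(D)$ in any other cell, giving the ``only in $\mathsf{cell}(D)$'' part.

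For Property~\textbf{II}, I would reduce the claim to the analogous statement for a single competing disk. Writing $H_{D'} = \{p : \mathsf{\Phi}(D,p) \le \mathsf{\Phi}(D',p)\}$, we have $\mathsf{cell}(D) = \bigcap_{D'\in\mathcal{L}\cup\mathcal{O}} H_{D'}$, and an intersection of sets each star-shaped about the common point $\mathsf{cen}(D)$ is again star-shaped about $\mathsf{cen}(D)$; so it suffices to prove each $H_{D'}$ is star-shaped about $\mathsf{cen}(D)$. Fixing $D'$ and writing $c=\mathsf{cen}(D)$, $c'=\mathsf{cen}(D')$, for $p \in H_{D'}$ I would parametrize $\seg(c,p)$ by $q(s) = c + s(p-c)$, $s\in[0,1]$, and study $g(s) = ||c-q(s)|| - ||c'-q(s)|| = s\,||p-c|| - ||c'-q(s)||$. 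The main point is that $g$ is non-decreasing: for $s_1 \le s_2$ the triangle inequality gives $||c'-q(s_2)|| - ||c'-q(s_1)|| \le ||q(s_2)-q(s_1)|| = (s_2-s_1)\,||p-c||$, so $g(s_2)-g(s_1) \ge 0$. Consequently $g(s) \le g(1) = ||c-p|| - ||c'-p|| \le \mathsf{radius}(D) - \mathsf{radius}(D')$, which is exactly the condition $q(s) \in H_{D'}$; it remains only to note $q(0)=c \in H_{D'}$, which is the no-containment inequality already used for Property~\textbf{I}.

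The single delicate point is the monotonicity of $g$ along the ray from $\mathsf{cen}(D)$ through $p$ — but this is precisely the triangle inequality applied to the moving point $q(s)$, so I do not expect a real obstacle; a minor care point is that $g$ is only piecewise smooth (it has a corner if $c'$ happens to lie on $\seg(c,p)$), which is why I would argue monotonicity from the triangle inequality rather than by differentiation. Everything else is bookkeeping: assembling the pairwise regions into $\mathsf{cell}(D)$ via the fact that star-shapedness about a fixed point is preserved under intersection.
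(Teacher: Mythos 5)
Your proposal is correct and uses essentially the same argument as the paper: Property~\textbf{I} is the paper's containment contradiction stated in contrapositive form, and Property~\textbf{II} rests on the same triangle inequality applied to the competing center along $\seg(\mathsf{cen}(D),p)$, merely repackaged as monotonicity of $g$ and an intersection of pairwise star-shaped regions rather than the paper's direct contradiction. No gaps.
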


\begin{proof}
 ({\bf I}): Assume that $\mathsf{cen}(D)$ is contained in  $\mathsf{cell}(D^\prime)$ for some disk $D^\prime$ such that $D^\prime \neq D$. Then, $\mathsf{\Phi}(D^\prime, \mathsf{cen}(D)) \leq \mathsf{\Phi}(D, \mathsf{cen}(D)) = ||\mathsf{cen}(D) - \mathsf{cen}(D)|| - \mathsf{radius}(D) = - \mathsf{radius}(D)$. Thus, $||\mathsf{cen}(D^\prime) - \mathsf{cen}(D)|| - \mathsf{radius}(D^\prime) \leq - \mathsf{radius}(D)$ which implies 
$||\mathsf{cen}(D^\prime) - \mathsf{cen}(D)|| +  \mathsf{radius}(D) \leq  \mathsf{radius}(D^\prime)$. 
Since $D \neq D^\prime$, the disk $D$ is completely contained inside $D^\prime$, and this contradicts the assumption that no disk in $\mathcal{R}$ is completely contained inside any other disk in $\mathcal{R}$. Hence, $\mathsf{cen}(D)$ is only in $\mathsf{cell}(D)$. 

\smallskip
\noindent ({\bf II}): Let $x$ be a point on $\seg(\mathsf{cen}(D),p)$ such that $x \in \mathsf{cell}(D^\prime)$ for some $D^\prime \in \mathcal{L} \cup \mathcal{O}$ and $D \neq D^\prime$. Then, $\mathsf{\Phi}(D^\prime, x) \leq \mathsf{\Phi}(D, x)$.

We have,  $||\mathsf{cen}(D^\prime) - p||$  $ \leq ||\mathsf{cen}(D^\prime) - x || + ||x - p||$. By subtracting 
$\mathsf{radius}(D^\prime)$ from both sides of this inequality we have,
$||\mathsf{cen}(D^\prime) - p|| - \mathsf{radius}(D^\prime)  \leq ||\mathsf{cen}(D^\prime) - x|| + ||x - p|| - \mathsf{radius}(D^\prime)$. 
This implies that 
$\mathsf{\Phi}(D^\prime, p) \leq \mathsf{\Phi}(D^\prime, x)  +  ||x - p|| \leq \mathsf{\Phi}(D, x)  +  ||x -p || \leq  \mathsf{\Phi}(D, p)$.

 Thus, $p$ also belongs to $\mathsf{cell}(D^\prime)$, which is not possible. \hfill   
\end{proof}

\begin{lemma}[\cite{Wan2011}]\label{properties_WVD}
	Let $D_1$ and $D_2$ be two disks in $\mathcal{L} \cup \mathcal{O}$. Let $x$ be a point in the plane such that $\mathsf{\Phi}(D_1, x) \leq \mathsf{\Phi}(D_2, x)$. If $D_2$ covers $x$, then $D_1$ also covers $x$.
\end{lemma}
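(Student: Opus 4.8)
The plan is to reduce the statement to the elementary observation that the quantity $\mathsf{\Phi}(D, x)$ is exactly a signed measure of whether $x$ lies inside $D$. Concretely, by definition $\mathsf{\Phi}(D, x) = \|\mathsf{cen}(D) - x\| - \mathsf{radius}(D)$, so $D$ covers $x$ (i.e.\ $x \in D$) if and only if $\|\mathsf{cen}(D) - x\| \le \mathsf{radius}(D)$, which is equivalent to $\mathsf{\Phi}(D, x) \le 0$. The first step is to record this equivalence explicitly for both $D_1$ and $D_2$.

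Given that reformulation, the argument is a short chain of inequalities. Since $D_2$ covers $x$, we have $\mathsf{\Phi}(D_2, x) \le 0$. Combining this with the hypothesis $\mathsf{\Phi}(D_1, x) \le \mathsf{\Phi}(D_2, x)$ yields $\mathsf{\Phi}(D_1, x) \le 0$, and by the equivalence from the first step this means precisely that $D_1$ covers $x$. This completes the proof.

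I do not expect any real obstacle here: the lemma is essentially a restatement of the monotone relationship between $\mathsf{\Phi}(D, \cdot)$ and membership in $D$, and no use of the star-shapedness of cells, of general position, or of the $\mathsf{AWVD}$ structure is needed — the only facts used are the definition of $\mathsf{\Phi}$ and the definition of a disk covering a point. The one thing worth stating carefully is that the hypotheses are used in the correct direction, i.e.\ the inequality $\mathsf{\Phi}(D_1,x)\le\mathsf{\Phi}(D_2,x)$ together with $\mathsf{\Phi}(D_2,x)\le 0$ (and not the reverse) is what gives the conclusion.
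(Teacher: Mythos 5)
Your proposal is correct and follows exactly the same route as the paper's own proof: observe that $\mathsf{\Phi}(D,x)\le 0$ is equivalent to $D$ covering $x$, then chain $\mathsf{\Phi}(D_1,x)\le\mathsf{\Phi}(D_2,x)\le 0$. Nothing further is needed.
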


\begin{proof}
If $D_2$ covers the point $x$, then $\mathsf{\Phi}(D_2, x) \leq 0$. This implies that $\mathsf{\Phi}(D_1, x) \leq 0$. Thus, $||\mathsf{cen}(D_1) - x||-\mathsf{radius}(D_1) \leq 0$. Hence, the disk $D_1$ also covers the point $x$. \hfill  
\end{proof}

 Let $G = (V, E)$ be a given graph. For a vertex $v \in V$, let $N(v)$ be the set of vertices adjacent to $v$ in $G$. For a subset $V^\prime \subseteq V$ of vertices, let  $N(V^\prime)$ be the set of all adjacent vertices of the vertices in $V^\prime$, i.e., $N(V^\prime) = \bigcup_{v \in V^\prime} N(v)$. Further, let ${N^+(V^\prime)} = N(V^\prime) \cup V^\prime$. We need the following result that is implied by the planar separator theorem \cite{Frederickson1987}.


\begin{lemma}[\cite{Frederickson1987}] \label{planar_sepa_lemma}
	For a given  planar graph $G = (V, E)$ and a parameter $r \geq 1$, there exists  a subset $X \subseteq V$ of size at most $c_1 |V| / \sqrt{r}$, and a division of $V \setminus X$ into $|V|/r$ sets $V_1, V_2, \ldots, V_{|V|/r}$ such that $(i) ~ |V_i| \leq c_2 r$, $(ii) ~ N(V_i) \cap V_j = \emptyset$ for $i \neq j$, and $(iii) ~ |N(V_i) \cap X| \leq c_3 \sqrt{r}$ for some constants $c_1, c_2$, and $c_3$. 
\end{lemma}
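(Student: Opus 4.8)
The plan is to deduce this from the classical planar separator theorem — every $N$-vertex planar graph has a set of $O(\sqrt{N})$ vertices whose removal splits the remaining vertices into two parts of at most $2N/3$ vertices each — by applying it recursively in two phases and then regrouping the resulting pieces into exactly $\lceil |V|/r\rceil$ parts.

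Phase 1 (shrink the pieces). Starting from $G$, repeatedly take any current piece with more than $r$ vertices, split it with a balanced separator, and put that separator into $X$; stop once every piece has at most $r$ vertices. Regard the process as a binary recursion tree $T$ whose nodes are the pieces: the two children of a node are the two sides of its separator, and the vertex set of a node is the disjoint union of the vertex sets of its two children and its separator. Consequently any two incomparable nodes of $T$ have disjoint vertex sets (consider their lowest common ancestor), and piece sizes drop by a factor of at least $2/3$ along every root-to-leaf path. Grouping the internal nodes according to which dyadic range $[r(3/2)^j,\, r(3/2)^{j+1})$ contains their size $n_v$, the nodes within one group are pairwise incomparable, hence contain at most $|V|/\!\big(r(3/2)^j\big)$ vertices, hence at most that many pieces; summing the separator cost $O(\sqrt{n_v})$ over the group gives $O\!\big(|V|(3/2)^{-j/2}/\sqrt{r}\big)$, and summing this geometric series over $j\ge 0$ yields $|X| = O(|V|/\sqrt{r})$ together with $O(|V|/r)$ leaf pieces, each of at most $r$ vertices. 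Property (ii) already holds for the leaf pieces: an edge joining two distinct leaf pieces must cross the separator installed at their lowest common ancestor in $T$, so it has an endpoint in $X$.

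Phase 2 (shrink the boundaries) and regrouping. A leaf piece may still be adjacent to more than $c_3\sqrt{r}$ vertices of $X$; while this happens for some piece $P$, split it further by applying the planar separator theorem to the subgraph induced by $P$ together with one representative vertex for each member of $N(P)\cap X$, putting unit weight only on these representatives, so that the resulting separator lies inside $P$ and divides $N(P)\cap X$ with at most a $2/3$ fraction on each side. Iterating, the number of $X$-neighbours of a piece shrinks by a constant factor per split, apart from the new separator vertices it contributes; with $c_3$ chosen large enough $c_3\sqrt{r}$ is a stable threshold, so only a logarithmic number of splits per piece is needed, every piece ends with $O(\sqrt{r})$ neighbours in $X$, piece sizes never exceed $r$, and property (ii) is preserved throughout. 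Finally, greedily merge the pieces into exactly $\lceil |V|/r\rceil$ groups, each the union of $O(1)$ pieces; this keeps each group of size $O(r)$, keeps $|N(\cdot)\cap X| = O(\sqrt{r})$, and preserves (ii) since a union of sets pairwise non-adjacent modulo $X$ is again non-adjacent modulo $X$. Choosing $c_1, c_2, c_3$ to be the constants produced by these estimates gives the lemma.

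The step I expect to be the main obstacle is the bookkeeping in Phase 2: one must show that the separator vertices created while driving every boundary below $c_3\sqrt{r}$ contribute only a further $O(|V|/\sqrt{r})$ to $|X|$, so that the Phase-1 bound survives. This is the delicate part of the argument in \cite{Frederickson1987}, precisely because — unlike in Phase 1 — the sub-pieces produced by splitting a piece can share $X$-neighbours, so the clean ``disjoint on an incomparable family'' counting used in Phase 1 no longer applies verbatim; instead the cost of the Phase-2 splits must be amortised against the decrease of a suitable boundary potential together with the geometric decay of the per-piece boundary sizes, exploiting that after Phase 1 there are only $O(|V|/r)$ pieces. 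Once this accounting is carried out, combining it with the Phase-1 estimate gives $|X|\le c_1|V|/\sqrt{r}$ and finishes the proof.
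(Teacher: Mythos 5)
The paper does not actually prove this lemma: it is imported as a black box from Frederickson \cite{Frederickson1987}, so your attempt has to be judged against the original source rather than against anything in the paper. Your strategy is exactly Frederickson's two-phase $r$-division construction, and your Phase 1 is correct and essentially complete: the dyadic grouping of the internal nodes of the recursion tree by piece size, the observation that a child's size drops below its parent's dyadic range so that nodes within one group are pairwise incomparable and hence vertex-disjoint, the resulting geometric series giving $|X|=O(|V|/\sqrt{r})$ and $O(|V|/r)$ pieces, the derivation of property $(ii)$ from the no-edges-between-sides guarantee of the separator theorem at the lowest common ancestor, and the final merge into $\lceil |V|/r\rceil$ groups of $O(1)$ pieces each are all sound.

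The genuine gap is the one you flag yourself: Phase 2 is asserted rather than proved, and it is precisely the part that does not follow from the Phase-1 counting. Two things are missing. First, you need the quantitative claim that splitting a piece $P$ with boundary $b=|N(P)\cap X|>c_3\sqrt{r}$ by a weighted separator of size $O(\sqrt{|P|+b})$ leaves each sub-piece with boundary at most $(1-\delta)b$ for a fixed $\delta>0$; since each side inherits up to $2b/3$ old boundary vertices plus all $O(\sqrt{|P|+b})$ new ones, this forces a specific choice of $c_3$ (using $|P|\le r<b^2/c_3^2$) and should be stated, though it is routine. Second, and more seriously, the ``boundary potential'' you invoke is never defined, so the claim that all Phase-2 separator vertices together add only $O(|V|/\sqrt{r})$ to $|X|$ is unsupported. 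Closing it requires an ingredient you never establish: a bound of $O(|V|/\sqrt{r})$ on the \emph{total} boundary $\sum_i |N(V_i)\cap X|$ after Phase 1 (itself nontrivial, since one separator vertex can lie on the boundary of many pieces), against whose geometric decrease the $O(\sqrt{b})$ cost of each Phase-2 split is then amortised. Without that accounting the conclusion $|X|\le c_1|V|/\sqrt{r}$, and hence the lemma, is not yet proved; the skeleton is the right one, but the hard quantitative step of \cite{Frederickson1987} is exactly the one left open.
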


\subsection{Analysis of the algorithm}

In this section, we show that the $t$-level local search in Algorithm \ref{localsearchalgo_dis} is a \ptas~for the \mdis~problem with arbitrary radii disks and arbitrary side length axis-parallel squares. 

    \begin{theorem}
The \mdis~problem with arbitrary radii disks admits a \ptas, i.e., For any integer $t > 1$,  the $t$-level local search in Algorithm \ref{localsearchalgo_dis} produces a solution $\mathcal{L}$ of size $\geq (1-\epsilon) |\mathcal{O}|$,  $\epsilon = O(\frac{1}{\sqrt{t}})$ in $O(nm^{O(1/\epsilon^2)})$-time, where $\cal O$ is an optimum solution of the \mdis~problem, $m$ is the number of disks, and $n$ is the number of points.  

   \label{ptas_dis_disks}
\end{theorem}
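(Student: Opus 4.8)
The plan is to follow the standard local-search analysis framework of Chan--Har-Peled and Mustafa--Ray, adapted to the discrete setting. The running time has already been established in the algorithm description, so the heart of the matter is the approximation guarantee $|\mathcal{L}| \geq (1-\epsilon)|\mathcal{O}|$. First I would build a bipartite ``exchange graph'' $G = (V, E)$ with $V = \mathcal{L} \cup \mathcal{O}$, where we may assume $\mathcal{L} \cap \mathcal{O} = \emptyset$ by the reduction already spelled out in the Preliminaries. An edge is placed between a disk $D_L \in \mathcal{L}$ and a disk $D_O \in \mathcal{O}$ whenever they ``conflict'' in a way the local search must repair: concretely, whenever there is a point $p \in \mathcal{P}$ covered by $D_O$ such that the only disk of $\mathcal{L}$ covering $p$ is $D_L$ (so that deleting $D_L$ and inserting $D_O$ does not immediately create a double-cover at $p$). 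The key structural step is to show this graph $G$ is planar. Here I would use the Additive Weighted Voronoi Diagram: for each $p \in \mathcal{P}$ relevant to an edge, route a curve from $\mathsf{cen}(D_L)$ to $\mathsf{cen}(D_O)$ that passes through $p$ and stays inside $\mathsf{cell}(D_L) \cup \mathsf{cell}(D_O)$, exploiting the star-shapedness (Lemma \ref{startproperty}, part II) and Lemma \ref{properties_WVD} to guarantee $p$ lies in one of these two cells. Since distinct cells have disjoint interiors, curves for different edges are interior-disjoint, giving a planar embedding.

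Once planarity is in hand, I would apply the Frederickson planar separator partition (Lemma \ref{planar_sepa_lemma}) to $G$ with parameter $r = \Theta(t)$, obtaining the separator $X$ and the pieces $V_1, \ldots, V_{|V|/r}$. For each piece $V_i$, consider the local exchange that removes $\mathcal{L} \cap N^+(V_i)$ and adds $\mathcal{O} \cap V_i$. The crucial claim to verify is that this is a \emph{feasible} exchange: the resulting set $(\mathcal{L} \setminus (\mathcal{L} \cap N^+(V_i))) \cup (\mathcal{O} \cap V_i)$ covers no point of $\mathcal{P}$ twice. This uses the edge definition of $G$ — a disk $D_O \in \mathcal{O} \cap V_i$ that we add can only double-cover a point $p$ with some surviving $\mathcal{L}$-disk or with another added $\mathcal{O}$-disk; the former is excluded because any $\mathcal{L}$-disk covering such a $p$ would be a neighbor of $D_O$ hence removed (being in $N(V_i)$), and the latter because $\mathcal{O}$ itself is independent. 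Since the exchange removes at most $|N^+(V_i) \cap \mathcal{L}|$ disks and adds $|\mathcal{O} \cap V_i|$ disks, and $t$-local optimality forbids a net gain, we get $|\mathcal{O} \cap V_i| \leq |\mathcal{L} \cap N^+(V_i)| \leq |\mathcal{L} \cap V_i| + |N(V_i) \cap X| \leq |\mathcal{L} \cap V_i| + c_3\sqrt{r}$, provided the exchange size is within the allowed budget $t$ (which is why $r$ must be chosen as a suitable constant fraction of $t$, using $|V_i| \leq c_2 r$).

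Summing over all $|V|/r$ pieces and accounting for the separator vertices in $X$ (of which there are at most $c_1|V|/\sqrt{r}$), I would obtain $|\mathcal{O}| = |\mathcal{O} \cap X| + \sum_i |\mathcal{O} \cap V_i| \leq |X| + \sum_i (|\mathcal{L} \cap V_i| + c_3\sqrt{r}) \leq |\mathcal{L}| + c_1|V|/\sqrt{r} + c_3 |V|/\sqrt{r}$. Since $|V| = |\mathcal{L}| + |\mathcal{O}| \leq 2|\mathcal{O}|$ (as $\mathcal{L}$ cannot be larger than an optimum solution — any independent set is a valid solution), this rearranges to $|\mathcal{O}| - |\mathcal{L}| = O(|\mathcal{O}|/\sqrt{r}) = O(|\mathcal{O}|/\sqrt{t})$, i.e. $|\mathcal{L}| \geq (1 - O(1/\sqrt{t}))|\mathcal{O}|$. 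Setting $t = \Theta(1/\epsilon^2)$ yields the claimed $(1-\epsilon)$-approximation in time $O(nm^{O(1/\epsilon^2)})$ from the earlier running-time bound.

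The main obstacle I anticipate is the planarity proof — specifically, making the curve-routing argument rigorous in the \emph{discrete} setting. Unlike the original \mis~local search where one routes curves through intersection regions of conflicting objects, here the relevant ``witness'' is a point $p \in \mathcal{P}$, and one must argue carefully (i) that every edge of $G$ has such a witness point lying in $\mathsf{cell}(D_L) \cup \mathsf{cell}(D_O)$, (ii) that one can choose, for each edge, a \emph{single} representative witness and route the curve through it within those two cells without crossing curves of other edges, and (iii) handle the genericity assumptions (no three collinear centers, etc.) so the embedding is clean. A secondary subtlety is ensuring the edge definition is exactly the right one so that both the planarity argument and the feasibility-of-exchange argument go through simultaneously; getting this definition right is where the discrete case genuinely differs from \cite{Chan2012}, and it is the technical crux that the cited remark of Chan and Har-Peled alludes to.
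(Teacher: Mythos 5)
Your overall framework matches the paper's: assume $\mathcal{L}\cap\mathcal{O}=\emptyset$, build a planar bipartite graph on $\mathcal{L}\cup\mathcal{O}$, apply the separator partition of Lemma \ref{planar_sepa_lemma} with $r=\Theta(t)$, show the exchange $(\mathcal{L}\setminus N^+(V_i))\cup(\mathcal{O}\cap V_i)$ is feasible, and sum the resulting inequalities. The counting at the end is fine. The genuine gap is in your planarity argument. You define edges by \emph{conflict} (a shared covered point $p\in\mathcal{P}$) and propose to embed each edge as a curve from $\mathsf{cen}(D_L)$ to $\mathsf{cen}(D_O)$ through $p$ that ``stays inside $\mathsf{cell}(D_L)\cup\mathsf{cell}(D_O)$.'' Lemma \ref{properties_WVD} does guarantee that $p$ lies in one of these two cells (any third cell's disk would also cover $p$ and violate feasibility of $\mathcal{L}$ or $\mathcal{O}$), but only in \emph{one} of them: if, say, $p\in\mathsf{cell}(D_O)$ with $\mathsf{\Phi}(D_O,p)<\mathsf{\Phi}(D_L,p)$, then $p$ is not on the boundary of $\mathsf{cell}(D_L)$, and the segment $\seg(p,\mathsf{cen}(D_L))$ in general traverses cells of other disks before reaching $\mathsf{cell}(D_L)$. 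So the curve cannot be kept inside the union of the two cells, and interior-disjointness of the cells does not give you non-crossing edges. The routing argument as stated fails.

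The repair is exactly the step you flag as the ``technical crux'' but do not carry out: one must show that a conflicting pair necessarily has \emph{adjacent} Voronoi cells, so that the conflict graph is a subgraph of the AWVD dual (which is planar by star-shapedness, Lemma \ref{startproperty}). The paper does this by defining the edge set directly via cell adjacency and proving feasibility of the exchange with a case analysis: if $O$ and $L$ both cover $p$ with $\mathsf{\Phi}(O,p)<\mathsf{\Phi}(L,p)$, walk from $p$ toward $\mathsf{cen}(L)$ and let $q$ be the entry point into $\mathsf{cell}(L)$; the disk $D$ with $\mathsf{\Phi}(D,q)=\mathsf{\Phi}(L,q)$ satisfies $\mathsf{\Phi}(D,p)<\mathsf{\Phi}(L,p)$ by the triangle inequality (using the general-position assumption), hence covers $p$ by Lemma \ref{properties_WVD}; if $D\neq O$ this double-covers $p$ within $\mathcal{O}$ or within $\mathcal{L}$, a contradiction, so $D=O$ and the two cells are adjacent. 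Whether you put this argument in the planarity step (your edge definition) or in the feasibility step (the paper's edge definition), it is the one piece of the proof that cannot be omitted, and your proposal does not supply it.
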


\begin{proof}

The main idea of the proof is to construct a planar bipartite graph $G = (V, E)$, where each node in $V$ corresponds to each disk in $\mathcal{L} \cup \mathcal{O}$ and the edges connect the vertices corresponding to the disks in $\mathcal{L}$ and $\mathcal{O}$ that satisfy some properties.  Next, we apply Lemma \ref{planar_sepa_lemma} on $G$. We show that for any $V_i$ (part of the result of Lemma \ref{planar_sepa_lemma}), $|\mathcal{O} \cap V_i| \leq |\mathcal{L}\cap V_i| + |N(V_i) \cap X|$. This gives us the required relationship between  $|\mathcal{L}|$ and $|\mathcal{O}|$.

We define a graph $G = (V, E)$, which can be viewed as a subgraph of the dual of  $\mathsf{AWVD}$ of disks in $\mathcal{L} \cup \mathcal{O}$. Recall that  $\mathcal{L} \cap \mathcal{O} = \emptyset$. 
\begin{enumerate}
	\item For every disk $D \in \mathcal{L} \cup \mathcal{O}$, we place a vertex in $G$  at $\mathsf{cen}(D)$.
	
	\item For every $L \in \mathcal{L}$ and $O \in \mathcal{O}$,  place an edge between $\mathsf{cen}(L)$ and $\mathsf{cen}(O)$ if $\mathsf{cell}(L)$ and $\mathsf{cell}(O)$ share a common boundary. 
\end{enumerate}

By using the star-shaped property of cells, one can draw $G$ such that no two edges intersect \cite{Aurenhammer1991}. Thus, the graph $G = (V, E)$ is planar and bipartite.

We now apply Lemma \ref{planar_sepa_lemma} on the graph $G$ with  $r = t /(c_2 + c_3)$, where $c_2,c_3$ are the constants as in Lemma \ref{planar_sepa_lemma}. Then, $|{N^+(V_i)}| \leq |V_i| +  |N(V_i)| \leq c_2 r + c_3 \sqrt{r} \leq  (c_2 + c_3)  r \leq  t$. For each $i$,   let $\mathcal{O}_i = V_i \cap \mathcal{O}$, $\mathcal{L}_i = V_i \cap \mathcal{L}$, and $X_i = N(V_i) \cap X$, where $X\subseteq V$ is the same as defined in Lemma \ref{planar_sepa_lemma}.  

We now prove that  $\mathcal{Y}_i = (\mathcal{L} \setminus {N^+(V_i)}) \cup  \mathcal{O}_i$  is a feasible solution for the \mdis~problem.  We note that any subset of a feasible solution is also a feasible solution of the \mdis~problem. Hence, $\mathcal{L} \setminus {N^+(V_i)}$ and  $\mathcal{O}_i$ are also  feasible solutions of the \mdis~problem. For the sake of contradiction, assume that  $\mathcal{Y}_i$  is not a feasible solution. Hence, there exist two disks $O \in \mathcal{O}_i$ and $L \in (\mathcal{L} \setminus {N^+(V_i)})$ such that both $O$ and $L$ cover the same point $p \in \mathcal{P}$. Note that, $O$ and $L$ are the unique disks in $\mathcal{O}_i$ and $\mathcal{L} \setminus {N^+(V_i)}$, respectively, that cover the point $p$. We consider that $p \in \mathsf{cell}(O)$ (the case when $p \in \mathsf{cell}(L)$, a similar argument can be given).  Since $p \in \mathsf{cell}(O)$, $\mathsf{\Phi}(O, p) \leq  \mathsf{\Phi}(L, p)$. There are two possible cases. (The argument follows the proof of Lemma 3 in \cite{Gibson2010}.)


\begin{description}
\item[\it Case 1:] Suppose $\mathsf{\Phi}(O, p) = \mathsf{\Phi}(L, p)$. Then $p \in \mathsf{cell}(L)$. Hence, $\mathsf{cell}(O)$ and $\mathsf{cell}(L)$ share a common boundary in $\mathsf{AWVD}$ and further, $O \in \mathcal{O}$ and $L \in \mathcal{L}$. Thus,  there exists an edge between $\mathsf{cen}(O)$ and $\mathsf{cen}(L)$ in graph $G$. Hence  $L \in N(O)$ which implies  $L \notin \mathcal{L} \setminus {N^+(V_i)}$.

\item[\it Case 2:] Suppose $\mathsf{\Phi}(O, p) < \mathsf{\Phi}(L, p)$. Take a walk from $p$ to $\mathsf{cen}(L)$ along the line segment 
 $\seg(p, \mathsf{cen}(L))$. Note that the segment may cross several cells. Let $q$ be the point on this segment entering $\mathsf{cell}(L)$ (see Fig.  \ref{caas_b}). Therefore, $\mathsf{\Phi}(D, q) = \mathsf{\Phi}(L, q)$ for  some $D \in \mathcal{L} \cup \mathcal{O}$. 
Define $B_i = N^+(V_i) \setminus V_i$ to be the boundary of $i$-th patch. Since $O \in V_i$ and $L$ are outside $V_i \cup B_i$, they are not connected in $G$. Thus, if $D = O$, then $O$ and $L$ would be connected in $G$, which is impossible.  Therefore, we assume that $D \neq O$. We now prove that $D$ covers $p$. Since no three disk centers and points in $\mathcal{P}$ are collinear, we have	$||\mathsf{cen}(D) -  p||  < ||p - q || + ||\mathsf{cen}(D) -  q||$, this implies that $\mathsf{\Phi}(D, p) < ||p - q|| + \mathsf{\Phi}(D, q)$  $ = ||p - q|| + \mathsf{\Phi}(L, q) = \mathsf{\Phi}(L, p)$.  	Since  $\mathsf{\Phi}(D, p) < \mathsf{\Phi}(L, p) $ and $L$ covers $p$, from Lemma \ref{properties_WVD} we have that $D$ also covers $p$.  Suppose $D \in \mathcal{O}$. Then, $p$ is covered by the two disks $O$ and $D$ in   $\mathcal{O}$, which is impossible.   Suppose $D \in \mathcal{L}$. In this case, $p$ is also covered by the two disks  $D$ and $L$ in  $\mathcal{L}$, which is impossible. 
		
	\end{description}
	
	\begin{figure}[ht!]
 	    \begin{center}
 	    \includegraphics[scale=0.7]{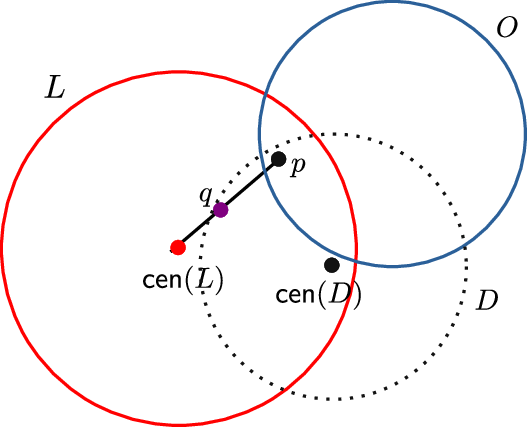}
 	    \caption{An  illustration of an existence of  a point $q$ on the segment $\seg(p, \mathsf{cen}(L))$. }
 	    \label{caas_b}    
 	    \end{center}
 	    
 	\end{figure}

Therefore, $\mathcal{Y}_i$  is a feasible solution for the \mdis~problem.

We now proceed as in \cite{Chan2012}. If  $|\mathcal{O}_i| > |\mathcal{L}_i| + | X_i|$, then by replacing disks of $\mathcal{L} \cap {N^+(V_i)} $ in $\mathcal{L}$ with disks in $\mathcal{O}_i$, we get  a better solution. This contradicts the $t$-local optimality of $\mathcal{L}$. Hence, $|\mathcal{O}_i| \leq  |\mathcal{L}_i| + | X_i|$.  Thus, 
\begin{center}
	\begin{tabular} {l l }
		$|\mathcal{O} |$ & $\leq \Sigma_i |\mathcal{O}_i| + |X|$  $\leq  \Sigma_i |\mathcal{L}_i| + \Sigma_i |X_i| + |X|$ \\
		& $\leq |\mathcal{L}| + c_3 \sqrt{r} \frac{|V|}{r} + c_1 \frac{|V|}{\sqrt{r}}$ $ \leq |\mathcal{L}| + (c_1 + c_3) \frac{|V|}{\sqrt{r}}$\\
 &$= |\mathcal{L}| + (c_1 + c_3) \frac{|\mathcal{O}| + |\mathcal{L}|}{\sqrt{r}} $.

	\end{tabular}
\end{center}

Recall that, we have $r = t / (c_2 + c_3)$. Substituting this value of $r$ in the above inequality, we get \\
\centerline{$|\mathcal{O} |  \leq |\mathcal{L}| + c \frac{|\mathcal{O}| + |\mathcal{L}|}{\sqrt{t}} $, where $c = (c_1 +c_3) \sqrt{c_2+c_3}$~. } 


Assuming $t \ge 4 c^2$ and set $c^\prime  = 4c$  we have,
\begin{center}
	\begin{tabular} {l l }
		$|\mathcal{O}|$ & $ \leq  |\mathcal{L}| \frac{1+c/\sqrt{t}}{1-c/\sqrt{t}} =  |\mathcal{L}|(1+c/\sqrt{t}) (1 + (  c/\sqrt{t}) + (  c/\sqrt{t})^2 + \ldots  )$ \\
		& $\leq |\mathcal{L}| (1+c/\sqrt{t}) (1 + 2 c/\sqrt{t})$ ~~~~~~ since $c/\sqrt{t} \leq 1/2     $\\
		& $=  |\mathcal{L}| (1+3c/\sqrt{t} +  2 c^2/{t})$ \\
       & $\leq  |\mathcal{L}| (1+4c/\sqrt{t})$ ~~~~~~~~~~~~since $2 c^2/{t} \leq c/\sqrt{t}$   \\
      & $ = |\mathcal{L}| (1+c^\prime/\sqrt{t})$ \\
	\end{tabular}
\end{center}

This implies that $|\mathcal{O}| \leq \bigl(1 + O(\frac{1}{\sqrt{t}})\bigl)|\mathcal{L}|$. Hence, we conclude that, by choosing  $t = O(1/\epsilon^2)$, the local search given in Algorithm \ref{localsearchalgo_dis} gives a $(1-\epsilon)$-approximation algorithm for the \mdis~problem with disks.     \hfill   
\end{proof}

\begin{theorem}
    
The \mdis~problem with arbitrary side length axis-parallel squares admits a \ptas, i.e., For any integer $t > 1$,  the $t$-level local search in Algorithm \ref{localsearchalgo_dis} produces a solution $\mathcal{L}$ of size $\geq (1-\epsilon) |\mathcal{O}|$,  $\epsilon = O(\frac{1}{\sqrt{t}})$ in $O(nm^{O(1/\epsilon^2)})$-time, where $\cal O$ is an optimum solution of the \mdis~problem, $m$ is the number of squares of arbitrary side length, and $n$ is the number of points.  

\label{ptas_dis_squares}

\end{theorem}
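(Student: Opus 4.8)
The plan is to mimic the proof of Theorem \ref{ptas_dis_disks} essentially verbatim, replacing the Additive Weighted Voronoi Diagram of disks with an analogous planar subdivision for axis-parallel squares. The key observation is that the entire argument for disks relied on only three ingredients: (i) a ``distance'' function $\mathsf{\Phi}(D,p)$ satisfying the triangle-like inequality $\mathsf{\Phi}(D,p) \le \mathsf{\Phi}(D,q) + \|p-q\|$ together with the covering monotonicity of Lemma \ref{properties_WVD}; (ii) the fact that the resulting cells are non-empty and star-shaped (Lemma \ref{startproperty}), so that the dual graph can be drawn planar and bipartite; and (iii) the planar separator decomposition of Lemma \ref{planar_sepa_lemma}. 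For squares, I would first set up the analogous framework: for an axis-parallel square $S$ with center $\mathsf{cen}(S)$ and half-side length $\mathsf{r}(S)$, define $\mathsf{\Phi}(S,p) = \|\mathsf{cen}(S) - p\|_\infty - \mathsf{r}(S)$, the $L_\infty$ analogue of the weighted distance. Here $S$ covers $p$ iff $\mathsf{\Phi}(S,p) \le 0$, and since $\|\cdot\|_\infty$ is a metric satisfying the triangle inequality, the exact analogues of Lemma \ref{startproperty} and Lemma \ref{properties_WVD} go through: if no square contains another, every cell contains its center and is star-shaped with respect to the $L_\infty$-geodesic from the center (one should use $L_\infty$-straight segments, or note that axis-parallel segments to the center suffice), and if $\mathsf{\Phi}(S_1,x) \le \mathsf{\Phi}(S_2,x)$ and $S_2$ covers $x$ then so does $S_1$.

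Next I would define the ``$L_\infty$-Additive Weighted Voronoi Diagram'' of the squares in $\mathcal{L} \cup \mathcal{O}$ exactly as before, place a vertex at each square's center, and connect $\mathsf{cen}(L)$ to $\mathsf{cen}(O)$ whenever $\mathsf{cell}(L)$ and $\mathsf{cell}(O)$ share a boundary arc. Using the star-shapedness, this graph $G$ is planar and bipartite. Then I apply Lemma \ref{planar_sepa_lemma} with $r = t/(c_2+c_3)$ to obtain the sets $V_i$, $X_i$ and argue, exactly as in the disk case, that $\mathcal{Y}_i = (\mathcal{L} \setminus N^+(V_i)) \cup \mathcal{O}_i$ is feasible: if some point $p$ were covered by both $O \in \mathcal{O}_i$ and $L \in \mathcal{L}\setminus N^+(V_i)$, then assuming WLOG $p \in \mathsf{cell}(O)$ we walk from $p$ toward $\mathsf{cen}(L)$ and find an entry point $q$ into $\mathsf{cell}(L)$ where $\mathsf{\Phi}(D,q) = \mathsf{\Phi}(L,q)$ for some $D$; the triangle inequality for $\|\cdot\|_\infty$ gives $\mathsf{\Phi}(D,p) \le \|p-q\|_\infty + \mathsf{\Phi}(D,q) = \mathsf{\Phi}(L,p)$, and Lemma \ref{properties_WVD}'s analogue forces $D$ to cover $p$, contradicting either that $O$ is the unique covering square in $\mathcal{O}_i$ or that $L$ is the unique one in $\mathcal{L}\setminus N^+(V_i)$ (the $D=O$ case being ruled out by the separator property $N(V_i)\cap V_j = \emptyset$). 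Then $t$-local optimality gives $|\mathcal{O}_i| \le |\mathcal{L}_i| + |X_i|$, and the same summation as in Theorem \ref{ptas_dis_disks} yields $|\mathcal{O}| \le (1 + O(1/\sqrt t))|\mathcal{L}|$; choosing $t = O(1/\epsilon^2)$ finishes the proof with the stated running time.

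The main obstacle, and the one point requiring genuine care rather than blind transcription, is verifying that the $L_\infty$-weighted Voronoi cells are still star-shaped and that the dual graph is planar. With the Euclidean metric the bisector of two additively-weighted sites is a (branch of a) hyperbola and star-shapedness is clean; with the $L_\infty$ metric the bisectors are piecewise-linear and can in degenerate configurations be two-dimensional, so I would first invoke a general-position assumption (perturb centers and weights so that no two squares have coincident bisector regions and no point lies on three cell boundaries, analogous to the ``no three centers collinear / no four squares tangent to a common square'' assumptions used in \cite{Gibson2010}) to ensure each cell is a topological disk. Under general position the same proof of Lemma \ref{startproperty}(II) works, using that for $x$ on the axis-parallel-monotone path from $p$ to $\mathsf{cen}(S)$ we have $\|\mathsf{cen}(S')-p\|_\infty \le \|\mathsf{cen}(S')-x\|_\infty + \|x-p\|_\infty$; one must be slightly careful to route the ``star-center path'' so that the $L_\infty$ triangle inequality is tight in the right direction, but this is standard. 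Once planarity and bipartiteness of $G$ are established, the rest is identical to the disk case, so I expect the proof to be short: it will largely consist of pointing out that substituting $\|\cdot\|_2$ by $\|\cdot\|_\infty$ preserves every property used, and then citing the argument of Theorem \ref{ptas_dis_disks} mutatis mutandis.
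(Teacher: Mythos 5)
Your proposal is correct and follows essentially the same route as the paper: the paper's proof of Theorem \ref{ptas_dis_squares} likewise defines $\mathsf{\Phi}(S,x)=\norm{c_S,x}_\infty-\frac{l_S}{2}$, observes that Lemmas \ref{startproperty} and \ref{properties_WVD} carry over to the resulting $L_\infty$ additive weighted Voronoi diagram, and then invokes the argument of Theorem \ref{ptas_dis_disks} verbatim. Your additional attention to general position for the piecewise-linear (possibly degenerate) $L_\infty$ bisectors is a sensible refinement that the paper leaves implicit, but it does not change the structure of the argument.
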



\begin{proof}
Let $\mathcal{R}$ be the set of axis-parallel squares with arbitrary side lengths. Apply $t$-level local search given in Algorithm \ref{localsearchalgo_dis} with $t = O( 1/\epsilon^2)$. The analysis is similar to the analysis of arbitrary radii disks,  except that for any two points $p$ and $q$ in the plane, $||p - q||$ is defined with respect to the infinity norm ${L}_\infty$ instead of $L_2$-norm as in \cite{Aschner2013}. For any square $S$, the center and the side length  of $S$ are denoted by $c_S$ and $l_S$, respectively. For a given point $x$ in the plane and a square $S$, we define  $\mathsf{\Phi}(S, x) = \norm{c_S, x}_\infty - \frac{l_S}{2}$. In particular, if $x$ is on the boundary of $S$ then  $\mathsf{\Phi}(S, x) = 0$, if $x$ is inside $S$ then $\mathsf{\Phi}(S, x) $ is negative, and if $x$ is outside $S$ then $\mathsf{\Phi}(S, x)$ is positive. For any given square $S$, the definition of $\mathsf{cell}(S)$ is the same as the case for disks above.   For the given instance $(\mathcal{P}, \mathcal{R})$ of the \mdis~problem (where the objects in $\mathcal{R}$ are axis-parallel squares), let  $\mathcal{L} \subseteq \mathcal{R}$ be the $t$-locally optimal  solution return by  Algorithm \ref{localsearchalgo_dis}  and  let $\mathcal{O} \subseteq \mathcal{R}$ be an optimal solution. The Additive Weighted Voronoi Diagram $(\mathsf{AWVD})$ is defined as the union of all the cells for squares in $\mathcal{L} \cup \mathcal{O}$. We note that, both Lemma \ref{startproperty} and \ref{properties_WVD} remain true for the kind of $\mathsf{AWVD}$ that we defined for squares. Further, in the same lines of the proof of Theorem \ref{ptas_dis_disks}, we can prove that $|\mathcal{O}| \leq (1 + O(\frac{1}{\sqrt{t}}))|\mathcal{L}|$, where $t$ is the local search parameter. By choosing  $t = O(1/\epsilon^2)$, Algorithm \ref{ptas_dis_disks} gives a $(1-\epsilon)$-approximation algorithm for the \mdis~problem with axis-parallel squares.   \hfill   \end{proof}

\section{\texorpdfstring{\boldmath{\ptas}:}~ Minimum Discrete Dominating Set Problem } \label{mdds_ptas}
In this section, we first give a \ptas~for the \mdds~problem with arbitrary radii disks by using a local search algorithm similar to \cite{Gibson2010}. Further, we show that the same local search algorithm will give a \ptas~for the \mdds~problem with arbitrary side length axis-parallel squares. 

\subsection{The algorithm}
 $\mathcal{P}$ be a  set of $n$ points and  $\mathcal{R}$ be a set of $m$ disks in the plane. Let $\mathcal{R}_\mathsf{cen} $ be the set of all the centers of the disks in $\mathcal{R}$. We assume that no three points from $\mathcal{R}_\mathsf{cen} \cup \mathcal{P}$ are collinear and no more than three disks are tangent to a circle \cite{Gibson2010,Wan2011}. Further,  without loss of generality, assume that no point in $\mathcal{P}$ lies on the boundary of a disk in $\mathcal{R}$. If not, one can slightly perturb the plane such that the assumption becomes true (see \cite{Gibson2010}).  

 Let $D$ and $D^\prime$ be the two disks in $\mathcal{R}$ such that both $D$ and $D^\prime$ cover a point $p \in \mathcal{P}$, then we say that $D$ is a \textit{dominator} of $D^\prime$ and vice versa.    A set $\mathcal{R}^\prime \subseteq \mathcal{R}$ of disks is said to be a  \textit{feasible solution} to the \mdds~problem, if for every disk $O \in (\mathcal{R} \setminus \mathcal{R}^\prime)$, there exists at least one dominator in  $ \mathcal{R}^\prime$.    For a given integer $t > 1$, we say that a feasible solution   $\mathcal{L} \subseteq \mathcal{R}$ is   $t$-\textit{locally optimal} if one cannot obtain a smaller size feasible solution $\mathcal{L}^\prime \subseteq \mathcal{R}$  by replacing at most $t$ disks from $\mathcal{L}$  with at most $t-1$ disks from $\mathcal{R}$. One can obtain a $t$-locally optimal solution to the \mdds~problem by using a local search method similar to  Algorithm \ref{localsearchalgo_dis}. Set $\mathcal{L} \gets \mathcal{R}$. For $\mathcal{L}^\prime \subseteq \mathcal{L}$ of size at most $t$ and for every $\mathcal{R}^\prime \subseteq \mathcal{R} \setminus \mathcal{L}$ of size at most $t-1$, verify whether $(\mathcal{L} \setminus \mathcal{L}^\prime) \cup \mathcal{R}^\prime$ is a  feasible solution and $|(\mathcal{L} \setminus \mathcal{L}^\prime) \cup \mathcal{R}^\prime| \leq |\mathcal{L}| -1$. If yes, replace  $\mathcal{L}$ with $(\mathcal{L} \setminus \mathcal{L}^\prime) \cup \mathcal{R}^\prime$ (local exchange). Repeat this procedure until no further local exchange is possible. Further, the procedure returns a $t$-locally optimal solution in  $O(nm^{2t+3})$-time. 

\subsection{Preliminaries}

Let $\mathcal{L} \subseteq \mathcal{R}$ be a $t$-locally optimal solution returned by the local search algorithm, and let $\mathcal{O} \subseteq \mathcal{R}$ be an optimal solution for the \mdds~problem. We can modify the solution $\mathcal{L}$ such that no disk $L$ in $\mathcal{L}$ that covers a set of points which is a proper subset of the set of points covered by any disk $D$ in $\mathcal{R}$. If this is not the case, then $L$ is replaced by the disk $D$ in $\mathcal{L}$, i.e., the modified $\mathcal{L}$ becomes $(\mathcal{L}\setminus \{L\} ) \cup \{D\}$.
The modified $\mathcal{L}$ is still a solution, i.e., a discrete dominating set for the input $({\cal R}, {\cal P})$ as $D$ dominates the same set (and possibly more) of disks as $L$ dominates. The size of the modified solution is at most the size of the old solution.  
By applying a similar argument, we assume that no disk $O$ in $\mathcal{O}$ that covers a set of points, which is a proper subset of the set of points covered by any disk $D$ in $\mathcal{R}$.

One can assume that $\mathcal{L} \cap \mathcal{O} = \emptyset$ by applying the argument given in Section \ref{premilinaries}. 
We now have the following {\colb{Locality Condition}} \cite{Gibson2010}, that is used to prove the existence of a \ptas.

\begin{lemma}[\textbf{{Locality Condition} \cite{Gibson2010}}]:
	There exists a planar bipartite graph $G = (\mathcal{L} \cup \mathcal{O}, E)$  such that for every object $X \in \mathcal{R}$, there exists an edge between $L \in \mathcal{L}$ and $O \in \mathcal{O}$ where both $L$ and $O$ are dominators of $X$. (Note that the definition of dominator is essentially different in \cite{Gibson2010}).
	\label{locality_condition_def} 

\end{lemma}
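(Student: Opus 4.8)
The plan is to mimic the construction of the graph $G$ used in the proof of Theorem \ref{ptas_dis_disks}, but now working with the Additive Weighted Voronoi Diagram ($\mathsf{AWVD}$) of the disks in $\mathcal{L}\cup\mathcal{O}$ and adding edges for \emph{every} adjacent pair of cells, one from $\mathcal{L}$ and one from $\mathcal{O}$. Concretely, first I would place a vertex of $G$ at $\mathsf{cen}(D)$ for each $D\in\mathcal{L}\cup\mathcal{O}$, and, for each $L\in\mathcal{L}$ and $O\in\mathcal{O}$, put an edge $\{L,O\}$ whenever $\mathsf{cell}(L)$ and $\mathsf{cell}(O)$ share a boundary arc in the $\mathsf{AWVD}$. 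By the star-shapedness of cells (Lemma \ref{startproperty}(II)) the dual can be drawn without crossings exactly as in \cite{Aurenhammer1991, Chan2012}, so $G$ is planar; since $\mathcal{L}\cap\mathcal{O}=\emptyset$ and all edges go between an $\mathcal{L}$-vertex and an $\mathcal{O}$-vertex, $G$ is bipartite. That disposes of the structural part.

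The substantive part is to show that for every $X\in\mathcal{R}$ there is an edge $\{L,O\}\in E$ with $L,O$ both dominators of $X$. Fix $X\in\mathcal{R}$. Since $\mathcal{L}$ is a feasible dominating set, either $X\in\mathcal{L}$ or some $L\in\mathcal{L}$ dominates $X$; similarly for $\mathcal{O}$. The cleanest way to argue is through a common witness point: because $L$ and $X$ are dominators there is a point $p_L\in\mathcal{P}$ covered by both $L$ and $X$, and because $O$ and $X$ are dominators there is $p_O\in\mathcal{P}$ covered by both $O$ and $X$. Following the proof of Lemma~3 in \cite{Gibson2010}, I would walk along the segment $\seg(p_L,p_O)$ (or from a witness point toward a suitable center) inside the $\mathsf{AWVD}$; each point of this segment lies in $X$'s ``$\Phi$-influence'' region in the sense that $\mathsf{\Phi}(X,\cdot)\le 0$ is maintained — this is where Lemma \ref{properties_WVD} is used: whenever the segment passes from $\mathsf{cell}(D)$ to $\mathsf{cell}(D')$ at a boundary point $q$ we have $\mathsf{\Phi}(D,q)=\mathsf{\Phi}(D',q)$, and if $X$ covers $q$ then both $D$ and $D'$ cover $q$, so each cell crossed along the way belongs to a disk dominating $X$. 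Hence the sequence of cells crossed between a cell of some $\mathcal{L}$-disk dominating $X$ and a cell of some $\mathcal{O}$-disk dominating $X$ must contain two consecutive cells, one from $\mathcal{L}$ and one from $\mathcal{O}$, both of whose disks dominate $X$; the boundary they share gives the desired edge of $G$. (The edge cases where $X$ itself lies in $\mathcal{L}$ or $\mathcal{O}$, or where the witness points coincide, are handled directly, since then $X$ is its own dominator on one side and the walk degenerates.)

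I expect the main obstacle to be making the ``walk through cells'' argument fully rigorous: one must choose the right segment so that the property ``$X$ covers every boundary crossing point $q$'' actually holds along the whole path, and one must rule out degenerate configurations using the general-position assumptions (no three centers/points collinear, no four disks tangent to a common circle, no point on a disk boundary). The key technical claim is that if a point $q$ is covered by $X$ and lies on the common boundary of $\mathsf{cell}(D)$ and $\mathsf{cell}(D')$, then both $D$ and $D'$ cover $q$ — which is precisely Lemma \ref{properties_WVD} applied in both directions — so the set of cells meeting $X$'s covered region forms a connected region of the $\mathsf{AWVD}$, and any connected region touching both an $\mathcal{L}$-cell and an $\mathcal{O}$-cell contains an adjacent $\mathcal{L}$–$\mathcal{O}$ pair. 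Once that connectivity statement is established, the existence of the required edge, and hence the Locality Condition, follows immediately.
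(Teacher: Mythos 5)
Your structural setup (vertices at disk centers, AWVD-dual edges between adjacent cells, planarity via star-shapedness, bipartiteness from $\mathcal{L}\cap\mathcal{O}=\emptyset$) is correct and coincides with the paper's Phase~I construction of the edge set $E_1$. The gap is in the cell-walk argument. Your key claim --- that if a boundary point $q$ of two cells is covered by $X$ then both cells' disks cover $q$ --- does not follow from Lemma \ref{properties_WVD}: that lemma transfers coverage from $D_2$ to $D_1$ only when $\mathsf{\Phi}(D_1,q)\le\mathsf{\Phi}(D_2,q)$ and \emph{$D_2$ itself} covers $q$, and here $X$ is an arbitrary disk of $\mathcal{R}$ with no cell in the diagram, so $\mathsf{\Phi}(X,q)\le 0$ gives no comparison with the potentials of the cell owners. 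A cell crossed by $\seg(p_L,p_O)$ may belong to a disk whose potential at $q$ is positive, i.e.\ which does not cover $q$ at all. Worse, even when the owner of a crossed cell does cover the crossing point, this does not make it a dominator of $X$, because domination is discrete: the two disks must share a point of $\mathcal{P}$, and the crossing points are generally not input points. Consequently the cells of the dominators of $X$ need not contain two adjacent cells, one from $\mathcal{L}$ and one from $\mathcal{O}$, and the pure AWVD-dual graph can fail the locality condition.

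This failure is exactly why the paper splits $\mathcal{R}$ into $\mathcal{R}_1$ (disks having a single witness point of $\mathcal{P}$ covered simultaneously by an $\mathcal{L}$-disk and an $\mathcal{O}$-disk --- for these your walk essentially works and is carried out in Lemma \ref{E_1_proof}) and $\mathcal{R}_2$ (all others). For $D\in\mathcal{R}_2$ the witness points for the $\mathcal{L}$-side and the $\mathcal{O}$-side are necessarily distinct, and the paper must add genuinely new edges $E_2$: it extracts from $\seg(p_1,p_2)$ a minimal edge-segment $\seg(x_1,x_2)$ joining the cell boundary of an $\mathcal{L}$-dominator to that of an $\mathcal{O}$-dominator with no other dominator's cell in between (Lemma \ref{segment_edge}), routes an edge through it, and then spends the bulk of the section on an edge-perturbation procedure to restore planarity, since these new edges pass through cells of non-dominators and can cross the $E_1$ edges and each other. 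Your proposal omits this entire half of the proof, and the connectivity statement it relies on is false in general.
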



\begin{lemma}[\cite{Gibson2010}] \label{lemma_ptas_dds}
If the set of disks  $\mathcal{L} \cup \mathcal{O}$ satisfies the locality condition then  $|\mathcal{L}| \leq (1 + \epsilon) |\mathcal{O}|$ for $\epsilon = O(\frac{1}{\sqrt{t}})$.
\end{lemma}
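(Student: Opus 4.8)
The plan is to mirror the analysis already carried out in the proof of Theorem~\ref{ptas_dis_disks}, but now with the planar bipartite graph $G$ supplied by the Locality Condition (Lemma~\ref{locality_condition_def}) playing the role previously played by the dual of the $\mathsf{AWVD}$. First I would apply the planar separator result (Lemma~\ref{planar_sepa_lemma}) to $G=(\mathcal{L}\cup\mathcal{O},E)$ with the parameter choice $r=t/(c_2+c_3)$, obtaining the exceptional set $X$ with $|X|\le c_1(|\mathcal{L}|+|\mathcal{O}|)/\sqrt{r}$ and the partition of $V\setminus X$ into patches $V_1,\dots,V_{|V|/r}$ with $|V_i|\le c_2 r$, $|N(V_i)\cap X|\le c_3\sqrt{r}$, and $N(V_i)\cap V_j=\emptyset$ for $i\ne j$. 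As in the earlier proof, this guarantees $|N^+(V_i)|\le (c_2+c_3)r\le t$, so the disks in $\mathcal{L}\cap N^+(V_i)$ form an admissible candidate set for a $t$-local exchange.

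The key step is to show that, for each $i$, the set $\mathcal{Z}_i = \bigl(\mathcal{L}\setminus N^+(V_i)\bigr)\cup \bigl(\mathcal{O}\cap N^+(V_i)\bigr)$ is a \emph{feasible} discrete dominating set, i.e. every disk $X\in\mathcal{R}$ still has a dominator in $\mathcal{Z}_i$. Since $\mathcal{O}$ is a dominating set, $X$ has some dominator $O\in\mathcal{O}$; if $O\in N^+(V_i)$ we are done. Otherwise $O\notin N^+(V_i)$; I would use the Locality Condition to produce an edge in $G$ between a dominator $L\in\mathcal{L}$ of $X$ and a dominator $O'\in\mathcal{O}$ of $X$. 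The remaining dominators of $X$ in $\mathcal{L}$ are the ones we must account for: if every $\mathcal{L}$-dominator of $X$ lies in $N^+(V_i)$ (so all are removed), then by the Locality edge the paired $\mathcal{O}$-dominator is adjacent to such an $L$ and hence lies in $N^+(V_i)$, so it survives in $\mathcal{Z}_i$; conversely if some $\mathcal{L}$-dominator of $X$ lies outside $N^+(V_i)$, then that disk itself remains and dominates $X$. This case analysis — making careful use of the precise statement ``there exists an edge between $L\in\mathcal{L}$ and $O\in\mathcal{O}$ where both are dominators of $X$'' — is the main obstacle, since one must rule out the possibility that $X$ loses all of its dominators; the separator property $N(V_i)\cap V_j=\emptyset$ and the definition $N^+(V_i)=V_i\cup N(V_i)$ are what make the bookkeeping close.

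Once feasibility of $\mathcal{Z}_i$ is established, $t$-local optimality of $\mathcal{L}$ forbids the exchange $\mathcal{L}\mapsto \mathcal{Z}_i$ from strictly decreasing the size, which (since $|\mathcal{L}\cap N^+(V_i)|\le t$ and $|\mathcal{O}\cap N^+(V_i)\setminus X|$ is the number of added disks beyond those in $X$) yields the local inequality
\[
|\mathcal{L}\cap V_i| \le |\mathcal{O}\cap V_i| + |N(V_i)\cap X|.
\]
Summing over all patches $i$ and adding $|X|$ to cover the vertices in $X$ gives
\[
|\mathcal{L}| \le \sum_i |\mathcal{O}\cap V_i| + \sum_i |N(V_i)\cap X| + |X|
\le |\mathcal{O}| + c_3\sqrt{r}\,\frac{|V|}{r} + c_1\frac{|V|}{\sqrt{r}}
\le |\mathcal{O}| + (c_1+c_3)\frac{|\mathcal{L}|+|\mathcal{O}|}{\sqrt{r}}.
\]
Substituting $r=t/(c_2+c_3)$ and rearranging exactly as in the proof of Theorem~\ref{ptas_dis_disks} — isolating $|\mathcal{L}|$, using the geometric series bound $\tfrac{1+c/\sqrt t}{1-c/\sqrt t}\le 1+4c/\sqrt t$ for $t\ge 4c^2$ with $c=(c_1+c_3)\sqrt{c_2+c_3}$ — gives $|\mathcal{L}|\le (1+O(1/\sqrt t))|\mathcal{O}|$, which is the claimed bound with $\epsilon=O(1/\sqrt t)$. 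I expect the only genuinely new work relative to the independent-set case to be the feasibility argument for $\mathcal{Z}_i$ in the second paragraph; the separator application and the final arithmetic are identical in form to what has already been done.
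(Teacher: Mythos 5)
Your overall skeleton (planar separator with $r=t/(c_2+c_3)$, a per-patch inequality $|\mathcal{L}\cap V_i|\le|\mathcal{O}\cap V_i|+|N(V_i)\cap X|$, summation, and the same arithmetic as in Theorem~\ref{ptas_dis_disks}) is exactly the route the paper takes. However, there is a genuine gap in your feasibility argument, and it sits precisely at the step you flag as the main obstacle. You choose the exchange set $\mathcal{Z}_i=(\mathcal{L}\setminus N^+(V_i))\cup(\mathcal{O}\cap N^+(V_i))$, i.e.\ you delete \emph{all} of $\mathcal{L}\cap N^+(V_i)$, including the $\mathcal{L}$-disks lying in the boundary $N(V_i)\setminus V_i\subseteq X$. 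To restore domination of an object $Y$ whose $\mathcal{L}$-dominators are all deleted, you argue that ``the paired $\mathcal{O}$-dominator is adjacent to such an $L$ and hence lies in $N^+(V_i)$.'' That inference is false: adjacency to a vertex of $N^+(V_i)$ does not place a vertex in $N^+(V_i)$; only adjacency to a vertex of $V_i$ does. Concretely, if the locality edge for $Y$ is $(L,O')$ with $L\in N(V_i)\setminus V_i$ (so $L$ is in the separator $X$ but still removed under your scheme), then $O'$ may lie in some $V_j$ with $j\ne i$, hence outside $N^+(V_i)$; then $L$ is deleted, $O'$ is not added, and $Y$ can be left with no dominator. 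So $\mathcal{Z}_i$ need not be feasible, and the local inequality does not follow from it.

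The repair is to remove less: take the exchange that deletes only $\mathcal{L}_i=\mathcal{L}\cap V_i$ and adds $\mathcal{O}\cap N^+(V_i)=\mathcal{O}_i\cup(\mathcal{O}\cap N(V_i))$ (this is what the paper does, phrased as ``replacing the disks in $\mathcal{L}_i$ with disks in $\mathcal{O}_i\cup N(V_i)$''). Feasibility is then immediate from the Locality Condition: for any object with locality edge $(L,O)$, either $L\notin V_i$ and $L$ survives, or $L\in V_i$ and then $O\in N(V_i)$ is added. Since $N(V_i)\subseteq V_i\cup X$, the added set has size at most $|\mathcal{O}_i|+|N(V_i)\cap X|$, and $|\mathcal{L}_i|\le c_2r\le t$ keeps the exchange within the $t$-local search, so local optimality yields exactly the inequality $|\mathcal{L}_i|\le|\mathcal{O}_i|+|N(V_i)\cap X|$ you wanted. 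From there your summation and the closing arithmetic go through unchanged.
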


\begin{proof}
We first note that the proof is in similar lines to the proof of Theorem \ref{ptas_dis_disks} (also similar to \cite{Chan2012}, \cite{Mustafa2010}, and \cite{Gibson2010}). Further, we adopt the terminologies used in the proof of Theorem \ref{ptas_dis_disks} and Lemma \ref{planar_sepa_lemma}. 

Assume that the set of disks  $\mathcal{L} \cup \mathcal{O}$ satisfies the locality condition.
Further, let $G = (V, E)$ be any planar graph as in the locality condition with $V = \mathcal{O} \cup \mathcal{L}$. Let $r = t /(c_2+c_3)$ (note that $t$ is the local search parameter). We note that $|V_i \cup N(V_i)| \leq c_2 r + c_3 \sqrt{r} \leq t$. Since the \mdds~problem is a minimization problem and $\mathcal{L}$ is local optimum, we note that $|\mathcal{L}_i| \leq |\mathcal{O}_i| + |N(V_i)|$, otherwise an improvement is possible for the local solution $\mathcal{L}$ by replacing the disks in $\mathcal{L}_i$ with disks in $\mathcal{O}_i \cup N(V_i)$. By using the similar arguments in Theorem \ref{ptas_dis_disks}, we can obtain  $|\mathcal{L}| \leq \bigl(1 + O(1/\sqrt{t}) \bigr) |\mathcal{O}|$.\end{proof}

\subsection{Analysis of the algorithm}

In the following, we construct a graph $G = (V, E)$ which satisfies the locality condition given in Lemma \ref{locality_condition_def}.  Our construction of $G$ is inspired by the results in \cite{Madireddy2016}. Partition the set $\mathcal{R}$ into two sets $\mathcal{R}_1$ and $\mathcal{R}_2$ as follows:  
\begin{enumerate}
	\item $\mathcal{R}_1$ is the collection of disks in $\mathcal{R}$ such that for every disk  $D \in \mathcal{R}_1$ there exists  at least one point $p\in \mathcal{P}$ that is  covered by  $D$ and is also covered by  at least one disk in  $\mathcal{L}$ as well as  at least one disk in $\mathcal{O}$. 
	\item $\mathcal{R}_2 = \mathcal{R} \setminus \mathcal{R}_1$. 
\end{enumerate}

For every disk  $X \in \mathcal{L} \cup \mathcal{O}$, we consider a vertex in the graph $G$ at $\mathsf{cen}(X)$. The edge set $E$ is constructed in two phases, i.e., $E = E_1 \cup E_2$. The edge-sets $E_i$ (for $i=1 \text{ and } 2$) ensure that the locality condition is satisfied for the disks in $\mathcal{R}_i$. For a set of points $\{a, x_1, x_2, \ldots, x_k, b\}$, with $k \geq 1$ is an integer, the curve $\mathscr{C}(a, x_1, x_2, \ldots, x_k, b)$ connecting the points $a$ and $b$ is the chain of the segments $\seg(a,x_1), \seg(x_1, x_2), \seg(x_2, x_3), \ldots, \seg(x_{k-1}, x_k), \seg(x_k, b)$ such that no two of $\linebreak$ them  intersect except at the endpoints. 

\subsubsection{Phase I: Construction of the edge set $E_1$}

We first construct the $\mathsf{AWVD}$ (Additive Weighted Voronoi Diagram)  of the disks in $\mathcal{L} \cup \mathcal{O}$ as in Section \ref{ptas_mdis}.  For every disk $L \in \mathcal{L}$ and every disk $O \in \mathcal{O}$, we place an edge in $E_1$ with endpoints  $\mathsf{cen}(L)$ and $\mathsf{cen}(O)$ if and only if there exists a point $q$ in the plane, but not necessarily from $P$,  such that  $q$ is on the boundary of both $\mathsf{cell}(L)$ and $\mathsf{cell}(O)$. In particular, the edge is the curve $\mathscr{C}(\mathsf{cen}(L), q, \mathsf{cen}(O))$  (see Fig.  \ref{edgsetconstr_1} for an illustration).

\begin{figure}[ht!]
 	    \begin{center}
 	    \includegraphics[scale=01]{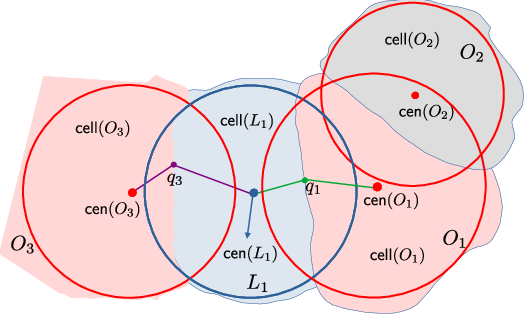}
 	    \caption{Here $L_1 \in \mathcal{L}$ and $O_1, O_2, O_3 \in \mathcal{O}$. Two edges $\mathscr{C}(\mathsf{cen}(L_1), q_1, \mathsf{cen}(O_1))$ (in green) and   $\mathscr{C}(\mathsf{cen}(L_1), q_3, \mathsf{cen}(O_3))$ (in purple) 	    added to set $E_1$.  }
 	    \label{edgsetconstr_1}    
 	    \end{center}
 	    
 	\end{figure}

\begin{lemma}\label{E_1_proof}
	The graph $G = (V, E_1)$ satisfies the locality condition for the disks in $\mathcal{R}_1$. 
\end{lemma}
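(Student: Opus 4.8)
The plan is to check, first, that $G=(V,E_1)$ is planar and bipartite, and then that it witnesses the locality condition (Lemma~\ref{locality_condition_def}) for every disk of $\mathcal{R}_1$. Bipartiteness is immediate since, by construction, $E_1$ joins only vertices $\mathsf{cen}(L)$ with $L\in\mathcal{L}$ to vertices $\mathsf{cen}(O)$ with $O\in\mathcal{O}$. For planarity I would repeat the argument used in the planarity step of Theorem~\ref{ptas_dis_disks} (cf.\ \cite{Aurenhammer1991}): each edge is drawn as the curve $\mathscr{C}(\mathsf{cen}(L),q,\mathsf{cen}(O))$, and by the star-shaped property of Voronoi cells (Lemma~\ref{startproperty}) this curve can be routed inside $\mathsf{cell}(L)\cup\mathsf{cell}(O)$, so distinct edges live in essentially disjoint regions and can be drawn without crossings.

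The technical core is a \emph{segment-covering} fact, in the spirit of Case~2 of the proof of Theorem~\ref{ptas_dis_disks} and of Lemma~3 of \cite{Gibson2010}: if a disk $W\in\mathcal{L}\cup\mathcal{O}$ covers a point $p$, then every cell of the $\mathsf{AWVD}$ of $\mathcal{L}\cup\mathcal{O}$ that meets the segment $\seg(p,\mathsf{cen}(W))$ belongs to a disk that also covers $p$. I would prove this by a short metric computation: for $x\in\seg(p,\mathsf{cen}(W))\cap\mathsf{cell}(W')$, collinearity of $p,x,\mathsf{cen}(W)$ gives $\mathsf{\Phi}(W,x)=\mathsf{\Phi}(W,p)-\|x-p\|$; the cell condition gives $\mathsf{\Phi}(W',x)\le\mathsf{\Phi}(W,x)$; and the triangle inequality gives $\mathsf{\Phi}(W',p)\le\mathsf{\Phi}(W',x)+\|x-p\|$. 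Chaining these yields $\mathsf{\Phi}(W',p)\le\mathsf{\Phi}(W,p)\le 0$, so $W'$ covers $p$. Applied with $x=p$, this also shows the cell containing $p$ (unique, by general position) belongs to a disk covering $p$.

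Next I would fix an arbitrary $D\in\mathcal{R}_1$ and a witness point $p\in\mathcal{P}$ that is covered by $D$, by some $\hat L\in\mathcal{L}$, and by some $\hat O\in\mathcal{O}$ (such $p,\hat L,\hat O$ exist by the definition of $\mathcal{R}_1$, and $\hat L\ne\hat O$ because $\mathcal{L}\cap\mathcal{O}=\emptyset$). Consider the polygonal path $\pi$ formed by concatenating the reverse of $\seg(p,\mathsf{cen}(\hat L))$ with $\seg(p,\mathsf{cen}(\hat O))$; it runs from $\mathsf{cen}(\hat L)\in\mathsf{cell}(\hat L)$ to $\mathsf{cen}(\hat O)\in\mathsf{cell}(\hat O)$, and by the segment-covering fact every cell it meets belongs to a disk covering $p$. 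Under the general position assumptions (no three points of $\mathcal{R}_\mathsf{cen}\cup\mathcal{P}$ collinear, no four disks tangent to a common circle, and $p$ interior to a single cell, perturbing if necessary) the path $\pi$ crosses cell boundaries transversally at non-vertex points, so it visits a finite sequence of cells $\mathsf{cell}(W_0),\dots,\mathsf{cell}(W_m)$ with $W_0=\hat L$, $W_m=\hat O$, in which any two consecutive cells share a one-dimensional boundary piece. Since $W_0\in\mathcal{L}$ and $W_m\in\mathcal{O}$, some consecutive pair has $W_j\in\mathcal{L}$ and $W_{j+1}\in\mathcal{O}$; letting $q$ be the crossing point of $\pi$ between these two cells, $q$ lies on the boundary of both, so by the Phase~I construction $\mathscr{C}(\mathsf{cen}(W_j),q,\mathsf{cen}(W_{j+1}))$ is an edge of $E_1$. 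Finally, $W_j$ and $W_{j+1}$ both cover $p$, and $D$ covers $p\in\mathcal{P}$, so both are dominators of $D$ --- exactly what Lemma~\ref{locality_condition_def} requires for $D$. As $D\in\mathcal{R}_1$ was arbitrary, this proves the lemma.

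The step I expect to be the main obstacle is the degeneracy bookkeeping behind ``the sequence of cells visited by $\pi$'': ensuring that $p$ lies in the interior of a single cell (so the corner of $\pi$ at $p$ causes no trouble) and that $\pi$ never runs along a cell boundary nor passes through a vertex where three or more cells meet, so that each transition of $\pi$ corresponds to a genuine one-dimensional shared boundary and hence to an edge of $E_1$. Everything else is a short parity/connectivity argument on top of the segment-covering computation. I would also dispatch the harmless special cases $D=W_j$ or $D=W_{j+1}$, where one of the two exhibited dominators of $D$ is $D$ itself, which still satisfies Lemma~\ref{locality_condition_def}.
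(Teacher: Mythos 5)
Your proposal is correct, and it reaches the conclusion by a recognizably different mechanism than the paper, even though both rest on the same core metric fact. The shared core is the computation you isolate as the ``segment-covering'' fact: for $x\in\seg(p,\mathsf{cen}(W))\cap\mathsf{cell}(W')$, collinearity plus the triangle inequality plus Lemma~\ref{properties_WVD} force $W'$ to cover $p$ whenever $W$ does; this is exactly the calculation in Case~(b) of the paper's proof (and in Case~2 of Theorem~\ref{ptas_dis_disks}). Where you diverge is in how the $\mathcal{L}$--$\mathcal{O}$ adjacency is located. The paper makes extremal choices: it takes $O$ to be the disk whose cell contains $p$ (the global $\mathsf{\Phi}$-minimizer at $p$) and $L$ to be the $\mathsf{\Phi}$-minimizer at $p$ over $\mathcal{L}$ alone, handles the tie $\mathsf{\Phi}(O,p)=\mathsf{\Phi}(L,p)$ as an immediate edge, and otherwise walks only along $\seg(\mathsf{cen}(L),p)$ to the single exit point $x$ of $\mathsf{cell}(L)$; the minimality of $L$ over $\mathcal{L}$ then forces the neighboring cell at $x$ to belong to $\mathcal{O}$, so the edge is found at one specific crossing. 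You instead take arbitrary covering disks $\hat L,\hat O$, traverse the full two-segment path through $p$, and extract the edge by a parity/connectivity argument on the sequence of cells visited. Your route buys a cleaner, reusable statement and avoids the minimizer bookkeeping, at the cost of the degeneracy analysis you correctly flag (transversal crossings, Voronoi vertices, $p$ on a cell boundary) --- though note that the edge criterion in Phase~I only asks for a common boundary \emph{point}, so even a crossing through a Voronoi vertex or a tie at $p$ still yields an edge, which makes those worries benign; the paper's extremal choice sidesteps most of them and also disposes of the tie case in one line. Both proofs leave the same harmless loose end when $D$ itself is one of the two exhibited dominators, which you explicitly dispatch. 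Your remarks on bipartiteness and planarity of $(V,E_1)$ are fine but not needed here: the paper defers planarity of the full graph $G=(V,E_1\cup E_2)$ to the edge-perturbation section, and this lemma is only invoked for the existence of a dominator--dominator edge for each $D\in\mathcal{R}_1$.
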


\begin{proof}
Let $D$ be a disk in $\mathcal{R}_1$. Then there exists a point $p \in \mathcal{P}$ that is covered by $D$ as well as at least one disk in $\mathcal{L}$ and at least one disk in  $\mathcal{O}$. We consider that $p \in \mathsf{cell}(O)$ for some $O \in \mathcal{O}$ (when $p \in \mathsf{cell}(L)$ for some $L \in \mathcal{L}$, a similar argument can be given). Therefore, $\mathsf{\Phi}(O, p) \leq \mathsf{\Phi}(O^\prime, p) $ for all $O^\prime \in \mathcal{L} \cup  \mathcal{O}$.  Similarly, let $L \in \mathcal{L}$ be a disk such that $\mathsf{\Phi}(L, p) \leq \mathsf{\Phi}(L^\prime, p)$ for all $L^\prime \in \mathcal{L}$. Observe that disks  $O$ and  $L$  cover the point $p$. Otherwise, it contradicts that $D \in \mathcal{R}_1$. Now, there are two possible cases:

	\begin{description}
		\item[Case (a)] Suppose that $\mathsf{\Phi}(O, p) = \mathsf{\Phi}(L, p)$. 
		Then, both $\mathsf{cell}(O)$ and $\mathsf{cell}(L)$ share a common point $p$ and hence  there exists an edge between $\mathsf{cen}(D)$ and $\mathsf{cen}(L)$ in $E_1$.
		
		\item [Case (b)] 
		Suppose that $\mathsf{\Phi}(O, p) < \mathsf{\Phi}(L, p)$.  Consider the segment $\seg(\mathsf{cen}(L),  p)$. Note that  $\seg(\mathsf{cen}(L),  p)$ may contain points belonging to several cells.  Let $x \in \seg(\mathsf{cen}(L), p)$ be a point on the boundary of $\mathsf{cell}(L)$. Let $D^\prime \in \mathcal{L} \cup \mathcal{O}$ be a disk such that $\mathsf{\Phi}(D^\prime, x) = \mathsf{\Phi}(L, x)$. We now show that $D^\prime$ is a disk in $\mathcal{O}$ and covers $p$. Recall the assumption that no three points in $\mathcal{P}$ and disk centers are collinear. Then we have,
		 $||\mathsf{cen}(D^\prime) -  p|| < ||p -  x|| + ||\mathsf{cen}(D^\prime) -  x||$, that implies $\mathsf{\Phi}(D^\prime, p) < ||p -  x|| + \mathsf{\Phi}(D^\prime,  x)$  $ = ||p -  x|| + \mathsf{\Phi}(L, x) = \mathsf{\Phi}(L, p)$. By Lemma \ref{properties_WVD}, we conclude that the disk $D^\prime$ covers the point $p$. Further, $D^\prime$ is a disk in $\mathcal{O}$, as otherwise, the choice of $L$ is wrong. Note that, both disks $D^\prime$ and $L$ are the dominators of $D$ and $\mathsf{\Phi}(D^\prime, x) = \mathsf{\Phi}(L, x)$. Thus, there exists an edge  between $\mathsf{cen}(D^\prime)$ and $\mathsf{cen}(L)$ in $E_1$. 
	\end{description} 
Therefore, the lemma is proved.  \end{proof}

\subsubsection{Phase II: Construction of the edge set $E_2$}
For every disk $D \in \mathcal{R}$, let $\mathcal{P}^D \subseteq \mathcal{P}$ be the set of points that are covered by the disk $D$. Let $\mathcal{O}_{D} \subseteq \mathcal{O}$ and $\mathcal{L}_D \subseteq \mathcal{L}$  be the sets of dominators of $D$.  Further, let  $\mathcal{P}_\mathcal{L}^D \subseteq \mathcal{P}^D$ be the set of points covered by at least one disk in $\mathcal{L}_D$ and $\mathcal{P}_\mathcal{O}^D \subseteq \mathcal{P}^D$ be the set of points covered by at least one disk in $\mathcal{O}_D$. Note that  sets  $\mathcal{P}_\mathcal{L}^D$ and $ \mathcal{P}_\mathcal{O}^D$ are non-empty and disjoint, i.e.,  $\mathcal{P}_\mathcal{L}^D \cap \mathcal{P}_\mathcal{O}^D = \emptyset$. Further, $\mathcal{L}_D \cap \mathcal{O}_D = \emptyset$.  

\begin{lemma} \label{intersection_of_linesegments}
Let $p_1 \in \mathcal{P}_\mathcal{L}^D$ and $p_2 \in \mathcal{P}_\mathcal{O}^D$ be two points for some $D \in \mathcal{R}_2$. Further, let $p_1^\prime  \in \mathcal{P}_\mathcal{L}^{D^\prime}$ and $p_2^\prime \in \mathcal{P}_\mathcal{O}^{D^\prime}$ be the two points such that $p_1 \neq p_1^\prime $ and $p_2 \neq p_2^\prime $ for some $D^\prime \in \mathcal{R}_2$.  If the segments $\seg(p_1, p_2)$ and $\seg(p_1^\prime,  p_2^\prime)$ intersect then $\{p_1,p_2\}\cap \mathcal{P}^{D^\prime}\neq \emptyset$ or  $\{p_1^\prime,p_2^\prime\}\cap \mathcal{P}^{D}\neq \emptyset$. 
\end{lemma}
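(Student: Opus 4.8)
The plan is to argue by contradiction. Suppose both $\{p_1,p_2\}\cap \mathcal{P}^{D^\prime}=\emptyset$ and $\{p_1^\prime,p_2^\prime\}\cap \mathcal{P}^{D}=\emptyset$, yet the segments $\seg(p_1,p_2)$ and $\seg(p_1^\prime,p_2^\prime)$ intersect at some point $z$. The key structural fact I want to exploit is the following: for the disk $D\in\mathcal{R}_2$, every dominator of $D$ lies \emph{either} in $\mathcal{L}$ \emph{or} in $\mathcal{O}$ but the point witnessing each dominator is never simultaneously covered by an $\mathcal{L}$-disk and an $\mathcal{O}$-disk (that is exactly what $D\notin\mathcal{R}_1$ means). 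So $p_1\in\mathcal{P}_\mathcal{L}^D$ is covered by some $L\in\mathcal{L}_D$ but by no disk of $\mathcal{O}_D$, and $p_2\in\mathcal{P}_\mathcal{O}^D$ is covered by some $O\in\mathcal{O}_D$ but by no disk of $\mathcal{L}_D$; symmetrically for $p_1^\prime,p_2^\prime$ with respect to $D^\prime$.

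**The geometric core.** Both $p_1$ and $p_2$ are covered by $D$, so the whole segment $\seg(p_1,p_2)$ lies inside the disk $D$ by convexity; likewise $\seg(p_1^\prime,p_2^\prime)\subseteq D^\prime$. Hence the intersection point $z$ lies in $D\cap D^\prime$. I would then track how the "$\mathcal{L}$ vs.\ $\mathcal{O}$" status changes along each segment. Walking along $\seg(p_1,p_2)$ from $p_1$ to $p_2$, the endpoint $p_1$ is covered by an $\mathcal{L}$-dominator of $D$ and $p_2$ is covered by an $\mathcal{O}$-dominator of $D$. The same holds on $\seg(p_1^\prime,p_2^\prime)$ with $D^\prime$ in place of $D$. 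The idea is to use the $\mathsf{AWVD}$ cell structure (Lemma~\ref{startproperty}, Lemma~\ref{properties_WVD}) restricted to the dominators involved: as one moves along a segment contained in $D$, whenever one crosses from a region where the nearest (in the $\mathsf{\Phi}$-sense) relevant disk belongs to $\mathcal{L}$ into one where it belongs to $\mathcal{O}$, the crossing point is covered by both an $\mathcal{L}$-disk and an $\mathcal{O}$-disk that are both dominators of $D$ — and if that crossing point happens to be a point of $\mathcal{P}$ covered by $D^\prime$ we are done, and if the crossing happens on the other segment we get membership the other way. The combinatorial punchline: since $z$ lies on both segments and $z\in D\cap D^\prime$, a status-mismatch at $z$ forces one of $p_1,p_2$ to be dominated via $D^\prime$ (so it lies in $\mathcal{P}^{D^\prime}$) or one of $p_1^\prime,p_2^\prime$ to lie in $\mathcal{P}^{D}$, contradicting our assumption. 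Concretely I expect the cleanest route is: consider the point of $\mathcal{P}$ whose cell contains $z$ — or rather, observe $z\in D$ forces $D$ itself to be near $z$; compare $\mathsf{\Phi}(D,z)$ against the disks covering the four endpoints and derive that $D$ must cover one of $p_1^\prime,p_2^\prime$, or $D^\prime$ must cover one of $p_1,p_2$.

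**Where the difficulty lies.** The main obstacle is making the "status change forces a common point" argument rigorous without a point of $\mathcal{P}$ handy at the crossing — the definition of $\mathcal{R}_1$ versus $\mathcal{R}_2$ is about points of $\mathcal{P}$, not arbitrary points of the plane. So I would need to be careful: the contradiction should come from an actual point among $p_1,p_2,p_1^\prime,p_2^\prime$, using that $\seg(p_1,p_2)\subseteq D$ and $\seg(p_1^\prime,p_2^\prime)\subseteq D^\prime$ together with convexity of disks. The likely clean statement: since $z\in\seg(p_1,p_2)\subseteq D$ and $z\in\seg(p_1^\prime,p_2^\prime)$, both endpoints $p_1^\prime$ and $p_2^\prime$ of the second segment, together with $z\in D$, cannot both avoid $D$ unless $D$ is "between" them in a way incompatible with $D$ being a disk containing $z$ but neither endpoint — and here one uses that a disk containing a point of a segment but neither endpoint is impossible by convexity. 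Wait: $z\in\seg(p_1^\prime,p_2^\prime)$ and $z\in D$, but $D$ need not contain $p_1^\prime$ or $p_2^\prime$. So instead the contradiction must be squeezed from the $\mathsf{AWVD}$ relations: I would compare $\mathsf{\Phi}$ values of the dominating disks at $z$, use that the disk covering $p_1$ (an $\mathcal{L}$-disk) and the disk covering $p_2$ (an $\mathcal{O}$-disk) both cover $z$ when appropriate, and similarly on the primed side, then invoke Lemma~\ref{properties_WVD} to propagate coverage back to an endpoint. That propagation-to-an-endpoint step is the crux and where I'd spend the most care.

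\begin{proof}[Proof sketch]
Suppose for contradiction that $\seg(p_1,p_2)$ and $\seg(p_1^\prime,p_2^\prime)$ intersect at a point $z$ while $\{p_1,p_2\}\cap\mathcal{P}^{D^\prime}=\emptyset$ and $\{p_1^\prime,p_2^\prime\}\cap\mathcal{P}^{D}=\emptyset$. Since $p_1,p_2$ are covered by $D$ and disks are convex, $z\in D$; likewise $z\in D^\prime$. Let $L\in\mathcal{L}$ cover $p_1$ and $O\in\mathcal{O}$ cover $p_2$ (both dominators of $D$); because $D\in\mathcal{R}_2$, no disk of $\mathcal{O}$ covers $p_1$ and no disk of $\mathcal{L}$ covers $p_2$, and symmetrically on the primed side with $L^\prime,O^\prime$ and $D^\prime$. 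Consider the cell of $z$ in the $\mathsf{AWVD}$ of $\mathcal{L}\cup\mathcal{O}$; comparing $\mathsf{\Phi}(L,\cdot)$ and $\mathsf{\Phi}(O,\cdot)$ along $\seg(p_1,p_2)$ (on which the "covering side" flips from $\mathcal{L}$ at $p_1$ to $\mathcal{O}$ at $p_2$), and along $\seg(p_1^\prime,p_2^\prime)$, together with Lemma~\ref{properties_WVD}, forces $z$ to be covered by a disk of $\mathcal{L}$ and a disk of $\mathcal{O}$ that are dominators of both $D$ and $D^\prime$; propagating this coverage back along one of the four half-segments (again by Lemma~\ref{properties_WVD} and convexity) yields that $D^\prime$ covers one of $p_1,p_2$ or $D$ covers one of $p_1^\prime,p_2^\prime$, contradicting the assumption. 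Hence $\{p_1,p_2\}\cap\mathcal{P}^{D^\prime}\neq\emptyset$ or $\{p_1^\prime,p_2^\prime\}\cap\mathcal{P}^{D}\neq\emptyset$.
\end{proof}
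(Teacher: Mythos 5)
Your proposal has a genuine gap, and it sits exactly where you yourself flag the ``crux''. The lemma is a purely geometric statement about the two disks $D$ and $D^\prime$ and the four points; the additive weighted Voronoi diagram of $\mathcal{L}\cup\mathcal{O}$ plays no role here, and the route you sketch through it does not close. Concretely: (i) the claim that the crossing point $z$ is ``covered by a disk of $\mathcal{L}$ and a disk of $\mathcal{O}$ that are dominators of both $D$ and $D^\prime$'' does not follow from anything you set up --- $z$ is an arbitrary point of the plane, the fact that $p_1$ is covered by some $L\in\mathcal{L}_D$ says nothing about points further along $\seg(p_1,p_2)$, and Lemma~\ref{properties_WVD} only transfers coverage between two disks of $\mathcal{L}\cup\mathcal{O}$ at a single point that one of them is already known to cover; (ii) even granting that, there is no mechanism for ``propagating coverage back along a half-segment'' to conclude that $D^\prime$ --- a disk of $\mathcal{R}_2$, which in general is not in $\mathcal{L}\cup\mathcal{O}$ at all --- covers $p_1$ or $p_2$. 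The sketch therefore asserts the conclusion at the two decisive steps rather than proving it.

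The intended proof is far more elementary and uses only the facts you correctly establish at the outset: $\seg(p_1,p_2)\subseteq D$, $\seg(p_1^\prime,p_2^\prime)\subseteq D^\prime$, and the crossing point $z\in D\cap D^\prime$. Suppose for contradiction that none of $p_1,p_2$ lies in $\mathcal{P}^{D^\prime}$ and none of $p_1^\prime,p_2^\prime$ lies in $\mathcal{P}^{D}$. Since no input point lies on a disk boundary and disks are convex, $\seg(p_1,p_2)\cap D^\prime$ is then a proper sub-segment containing $z$ whose two endpoints lie on $\partial D^\prime$ and strictly inside $D$; symmetrically, $\seg(p_1^\prime,p_2^\prime)\cap D$ is a sub-segment containing $z$ with both endpoints on $\partial D$ and strictly inside $D^\prime$. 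These are two chords of the convex lens $D\cap D^\prime$, one with both endpoints on the $\partial D^\prime$-arc of the lens boundary and the other with both endpoints on the $\partial D$-arc; their endpoints do not interleave along the lens boundary, so the chords cannot cross --- yet both contain $z$. Equivalently, the configuration would force $\partial D$ and $\partial D^\prime$ to intersect more than twice, which is impossible for two circles. This boundary-crossing count is the missing idea in your write-up.
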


\begin{proof}
  We note that the segment $\seg(p_1, p_2)$ is completely inside $D$ and $\seg(p_1^\prime,  p_2^\prime)$ is completely inside $D^\prime$. If $p_1$ and $p_2$ are not covered by $D^\prime$, then $\seg(p_1,  p_2)$ intersects the disk's boundary $D^\prime$ exactly two times. Similarly, if both $p_1^\prime $ and $p_2^\prime$ are not covered by the disk $D$, then $\seg(p_1^\prime,  p_2^\prime)$ intersects the boundary of the disk $D$ twice. Note that no point in $\mathcal{P}$  lies on the boundary of any disk in $\mathcal{R}$.  Thus, if the statement of the lemma is not true, then boundaries of both $D$ and $D^\prime$ intersect four times, which is impossible.   	 
\end{proof}

\begin{lemma} \label{segment_edge}
   Let  $D  \in  \mathcal{R}_2$ be a disk. Further, let $p_1 \in \mathcal{P}_\mathcal{L}^D$ and $p_2 \in \mathcal{P}_\mathcal{O}^D$ be two points covered by $D$. Then, there exists a  segment $\seg(x_1, x_2) \subseteq \seg(p_1, p_2)$ such that 
   \begin{enumerate} 
       \item $x_1$ is a boundary point of $\mathsf{cell}(L)$ for some $L \in \mathcal{L}_D$, 
       \item $x_2$ is a boundary point of $\mathsf{cell}(O)$ for some $O \in \mathcal{O}_D$, and 
       \item there exists no other dominator,  $X \in \mathcal{L}_D \cup \mathcal{O}_D$, of $D$ such that $\mathsf{cell}(X)$ covers a point from $\seg(x_1, x_2)$. 
   \end{enumerate}

  \label{exist_non_dominator}
\end{lemma}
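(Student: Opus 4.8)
The plan is to traverse the segment $\seg(p_1,p_2)$, read off the sequence of cells of the $\mathsf{AWVD}$ of $\mathcal{L}\cup\mathcal{O}$ that it meets, and extract from that sequence the first ``clean'' transition from a cell of an $\mathcal{L}_D$-dominator to a cell of an $\mathcal{O}_D$-dominator. The first thing I would establish is that the endpoints sit in the right cells. Let $Z\in\mathcal{L}\cup\mathcal{O}$ with $p_1\in\mathsf{cell}(Z)$. Since $p_1\in\mathcal{P}_\mathcal{L}^D$, some $L_0\in\mathcal{L}_D$ covers $p_1$, so $\mathsf{\Phi}(Z,p_1)\le\mathsf{\Phi}(L_0,p_1)\le 0$ and hence $Z$ covers $p_1$ (Lemma~\ref{properties_WVD}). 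As $D$ also covers $p_1\in\mathcal{P}$, the disk $Z$ is a dominator of $D$. If $Z\in\mathcal{O}$, then $p_1$ would be covered simultaneously by $D$, by $L_0\in\mathcal{L}$ and by $Z\in\mathcal{O}$, forcing $D\in\mathcal{R}_1$ and contradicting $D\in\mathcal{R}_2$. Therefore $Z\in\mathcal{L}_D$; symmetrically (using $\mathcal{P}_\mathcal{O}^D$) the owner of the cell containing $p_2$ lies in $\mathcal{O}_D$. By the general-position convention of this section (a slightly perturbed plane) I will assume that $p_1,p_2$ lie in the interiors of their cells and, more importantly, that $\seg(p_1,p_2)$ does not pass through any vertex of the $\mathsf{AWVD}$, so that at most two cells meet at any point of the segment.

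Next I would carry out the extraction. Partition $\seg(p_1,p_2)$ into maximal closed subsegments $I_0,I_1,\dots,I_k$, where $I_j\subseteq\mathsf{cell}(Z_j)$ for some $Z_j\in\mathcal{L}\cup\mathcal{O}$, consecutive subsegments share a single boundary point, $p_1\in I_0$ and $p_2\in I_k$. By the previous paragraph $Z_0\in\mathcal{L}_D$ and $Z_k\in\mathcal{O}_D$; in particular $Z_0\neq Z_k$, so $k\ge 1$. Let $j^*$ be the largest index with $Z_{j^*}\in\mathcal{L}_D$ and $j^{**}$ the smallest index exceeding $j^*$ with $Z_{j^{**}}\in\mathcal{O}_D$ (both exist, since $j^*<k$ and $Z_k\in\mathcal{O}_D$). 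Put $L:=Z_{j^*}$ and $O:=Z_{j^{**}}$, let $x_1$ be the endpoint of $I_{j^*}$ shared with $I_{j^*+1}$, and let $x_2$ be the endpoint of $I_{j^{**}}$ shared with $I_{j^{**}-1}$. Then $x_1\in\partial\mathsf{cell}(L)$ with $L\in\mathcal{L}_D$ and $x_2\in\partial\mathsf{cell}(O)$ with $O\in\mathcal{O}_D$, giving conditions~1 and~2; and $\seg(x_1,x_2)=I_{j^*+1}\cup\cdots\cup I_{j^{**}-1}$, degenerating to the single point $x_1=x_2$ when $j^{**}=j^*+1$, so in all cases $\seg(x_1,x_2)\subseteq\seg(p_1,p_2)$.

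It remains to verify condition~3. Each $Z_j$ with $j^*<j<j^{**}$ is, by the extremal choices of $j^*$ and $j^{**}$, neither in $\mathcal{L}_D$ nor in $\mathcal{O}_D$, and (since the segment avoids $\mathsf{AWVD}$ vertices) every point of the relative interior of $\seg(x_1,x_2)$ lies in the closed cell of one or two of these $Z_j$; moreover $\mathsf{cell}(L)$ and $\mathsf{cell}(O)$ themselves do not reach into that relative interior, because they lie on the $p_1$- and $p_2$-side of $x_1$ and $x_2$ respectively. Hence no dominator cell meets the relative interior of $\seg(x_1,x_2)$. The only points of $\seg(x_1,x_2)$ left to check are $x_1$ and $x_2$: exactly two cells meet at $x_1$, namely $\mathsf{cell}(L)$ and $\mathsf{cell}(Z_{j^*+1})$, and by maximality of $j^*$ we have $Z_{j^*+1}\notin\mathcal{L}_D$, while if $Z_{j^*+1}\in\mathcal{O}_D$ then necessarily $j^{**}=j^*+1$ and $O=Z_{j^*+1}$; either way the only dominators whose cells touch $x_1$ are among $\{L,O\}$, and the same holds at $x_2$. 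Therefore no dominator $X\in(\mathcal{L}_D\cup\mathcal{O}_D)\setminus\{L,O\}$ has $\mathsf{cell}(X)$ meeting $\seg(x_1,x_2)$, which is condition~3.

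I expect the delicate part to be this last step, specifically ruling out a third $\mathcal{L}_D$- or $\mathcal{O}_D$-cell that touches the transition segment at $x_1$, $x_2$, or (via an $\mathsf{AWVD}$ vertex) in between. The cell-sequence argument handles it cleanly once we know $\seg(p_1,p_2)$ avoids all vertices of the $\mathsf{AWVD}$, so that at most two cells meet at each point of the segment; this is guaranteed by the general-position hypotheses already in force in this section (no three points of $\mathcal{R}_\mathsf{cen}\cup\mathcal{P}$ collinear, no more than three disks tangent to a common circle, a perturbed plane), which force $\mathsf{AWVD}$ vertices to have degree three and keep them off the segment. Everything else is bookkeeping about the maximal cell-subsegments together with Lemma~\ref{properties_WVD} and the definition of $\mathcal{R}_2$.
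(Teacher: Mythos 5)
Your proof is correct and rests on the same idea as the paper's: extract an extremal transition sub-segment of $\seg(p_1,p_2)$ from an $\mathcal{L}_D$-cell to an $\mathcal{O}_D$-cell and observe that, by the extremal choice, no dominator's cell can meet it. The paper's version is a terse two-line minimality argument; your explicit cell-sequence decomposition additionally supplies the facts the paper only asserts, namely that the cells containing $p_1$ and $p_2$ are owned by disks in $\mathcal{L}_D$ and $\mathcal{O}_D$ respectively (via Lemma~\ref{properties_WVD} and $D\in\mathcal{R}_2$), and makes precise why condition~3 holds at the endpoints $x_1,x_2$.
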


\begin{proof}

Let $\seg(x_1, x_2)$ be a minimal portion of $\seg(p_1, p_2)$ such that $x_1 \in \mathsf{cell}(L)$ and $x_2 \in \mathsf{cell}(O)$ for some disks $L \in \mathcal{L}_D$ and $O \in \mathcal{O}_D$. We note that such $L$ and $O$ exist because $p_1 \in \mathcal{P}_\mathcal{L}^D$ and $p_2 \in \mathcal{P}_\mathcal{O}^D$. Suppose that there exists a disk $X \in \mathcal{L}_D$ (the case $X \in  \mathcal{O}_D$ is similar) that covers a point $x \in \seg(x_1, x_2)$. Then the choice of $\seg(x_1, x_2)$ is wrong since $\seg(x, x_2)$ is minimal than $\seg(x_1, x_2)$ and satisfies the first two conditions of the lemma. Hence, no such disk $X \in   \mathcal{L}_D$ exists. \hfill  
\end{proof}

For a disk $D \in \mathcal{R}_2$, we call a segment $\seg(x_1, x_2) \subseteq \seg(p_1, p_2)$, where $p_1 \in \mathcal{P}_\mathcal{L}^D$ and $p_2 \in \mathcal{P}_\mathcal{O}^D$, that satisfy the conditions in Lemma \ref{segment_edge} as \textit{\colb{edge-segment}} of $D$. Note that, in general, the segment  $\seg(x_1, x_2)$ connects the boundaries of two cells in  $\mathsf{AWVD}$; one cell corresponding to a disk in $\mathcal{L}$ and other cell is corresponding to a disk in $\mathcal{O}$. 

Let $\mathcal{S}$ be the set of all possible non-overlapping edge segments such that for each disk $D$ in $\mathcal{R}_2$ there exists an edge segment $\seg(x_1, x_2) \subseteq \seg(p_1, p_2)$ where $p_1 \in \mathcal{P}_\mathcal{L}^D$ and $p_2 \in \mathcal{P}_\mathcal{O}^D$. We note that such $\mathcal{S}$ exists due to Lemma \ref{intersection_of_linesegments}.   In the following, we describe the construction of the edge set $E_2$: 

\begin{description}
    \item [Step 1.] Let $\seg(x, x^\prime)$ be a segment in $\mathcal{S}$ such that $\seg(x, x^\prime) \subseteq \seg(p, p^\prime)$ where $p \in \mathcal{P}_\mathcal{L}^D$ and $p^\prime \in \mathcal{P}_\mathcal{O}^D$ for some disk $D \in \mathcal{R}_2$.

    \item [Step 2.] Let  $L \in \mathcal{L}_D$ and $O \in \mathcal{O}_D$ be the two disks such that $x \in \mathsf{cell}(L)$ and $x^\prime  \in \mathsf{cell}(O)$.  

    \item [Step 3.] Place an edge $\mathscr{C}(\mathsf{cen}(L), x, x^\prime, \mathsf{cen}(O))$ 	in  $E_2$. The segment $\seg(\mathsf{cen}(L), x)$ is completely inside $\mathsf{cell}(L)$, the  segment $\seg(x^\prime,  \mathsf{cen}(O))$ is completely inside $\mathsf{cell}(O)$, and  the segment $\seg(x, x^\prime)$  is completely inside the disk $D$. See Fig.  \ref{fig:step3} for an illustration. 
\end{description}

\begin{figure}[h!]
    \centering
    \includegraphics[scale=1.25]{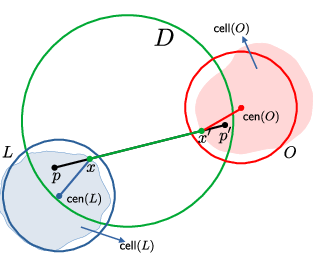}
    \caption{Illustration of segment $\seg(x, x^\prime)$ (in green). The edge $\mathscr{C}(\mathsf{cen}(L), x, x^\prime, \mathsf{cen}(O))$ is the chain of $\seg(\mathsf{cen}(L), x)$ (in blue),  $\seg(x, x^\prime)$ (in green), and $\seg(x^\prime,  \mathsf{cen}(O))$ (in red). }
    \label{fig:step3}
\end{figure}

\subsubsection{Proving the planarity of $G$}

We note that in the graph $G= (V, E_1 \cup E_2)$, for every disk $D \in \mathcal{R}$, there exists an edge between a dominator of $D$ in $\mathcal{L}$ and a dominator of $D$ in  $\mathcal{O}$. No two edges in $E_1$ intersect, but an edge in $E_1$ and an edge in $E_2$ may intersect, or two edges in $E_2$ may intersect.  Thus, the graph $G = (V, E_1 \cup E_2)$ may not be planar. Hence, the locality condition, given in  Lemma  \ref{locality_condition_def},  may not be satisfied.  In the following, we show that one can obtain a planar graph by  \textit{\colb{edge perturbation}}  of some edges without violating the locality condition for the disks in $\mathcal{R}$.  For the sake of ease of notation, we assume that each edge in $E = E_1 \cup E_2$ is of the form $\mathscr{C}(\mathsf{cen}(L), x, x^\prime, \mathsf{cen(O)})$ for some $L \in \mathcal{L}, O \in \mathcal{O}$, with $x$ and $x^\prime$ are points in the plane. In particular, if $\mathscr{C}(\mathsf{cen}(L), x, x^\prime, \mathsf{cen(O)}) \in E_1$, then $x$ and $x^\prime$ are the same points, i.e., $x=x^\prime$.  
 
 Let  
 $e_1 = \mathscr{C}(\mathsf{cen}(L_1), x_1, x_1^\prime,  \mathsf{cen}(O_1))$ and $e_2 = \mathscr{C}(\mathsf{cen}(L_2), x_2, x_2^\prime, \mathsf{cen}(O_2))$
 be the two edges in $E_1 \cup E_2$ for some $L_1, L_2 \in \mathcal{L}$ and $O_1, O_2 \in \mathcal{O}$ (see Fig. \ref{edge_intersect}). Assume that   $L_1$ and $O_1$ are dominators of a disk $D_1 \in \mathcal{R}_2$ and  $L_2$ and $O_2$ are dominators of a disk $D_2 \in \mathcal{R}_2$, i.e., both $e_1, e_2 \in E_2$ (the case where either $e_1 \in E_1$ or $e_2 \in E_1$ is similar, see Figure \ref{edge_intersect_b} for a pictorial evidence). Further, assume that $\seg(x_1, x_1^\prime)$ is a portion of $\seg(p_1, p_1^\prime)$ and $\seg(x_2, x_2^\prime)$ is a portion of $\seg(p_2, p_2^\prime)$ for some points $p_1, p_2, p_1^\prime, p_2^\prime \in \mathcal{P}$ such that  $p_1 \in \mathcal{P}^{L_1} \cap \mathcal{P}^{D_1}$, $p_1^\prime \in \mathcal{P}^{O_1} \cap \mathcal{P}^{D_1} $,  $p_2 \in \mathcal{P}^{L_2} \cap \mathcal{P}^{D_2}$, and $p_2^\prime  \in \mathcal{P}^{D_2} \cap \mathcal{P}^{O_2}$.

 \begin{figure}[ht!]
\begin{center}
{\subfigure[ ]{\includegraphics[scale=1.25]{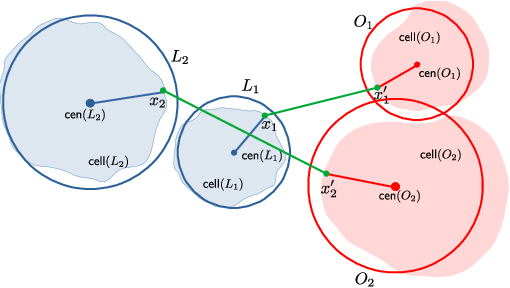}
\label{edge_intersect_a}
}}
{\subfigure[ ]{\includegraphics[scale=1.25]{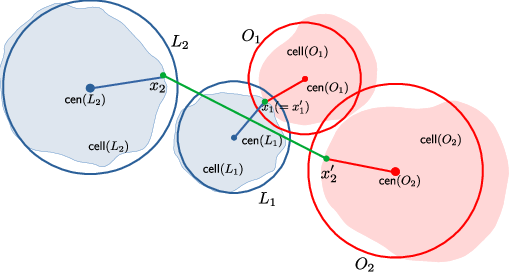}
\label{edge_intersect_b}
}}
\end{center}
\caption{A possible placement of disks and points such that edges $e_1 = \mathscr{C}(\mathsf{cen}(L_1), x_1, x_1^\prime,  \mathsf{cen}(O_1))$ and $e_2 = \mathscr{C}(\mathsf{cen}(L_2), x_2, x_2^\prime, \mathsf{cen}(O_2))$ in $E = E_1 \cup E_2$ intersect. \subref{edge_intersect_a} Both edges $e_1, e_2 \in E_2$  \subref{edge_intersect_b} $e_1 \in E_1$ and $e_2 \in E_2$. }
\label{edge_intersect}
\end{figure}

 We note that the segments  $\seg(x_1, x_1^\prime)$ and $\seg(x_2, x_2^\prime)$ also do not intersect otherwise $\seg(p_1, p_1^\prime)$ and $\seg(p_2, p_2^\prime)$ intersect, this contradicts the fact that no two segments in $\mathcal{S}$ intersect. 
 
 \begin{lemma} 
  The following pairs of segments do not intersect: (i) $\seg(\mathsf{cen}(L_1), x_1)$ and $\seg(\mathsf{cen}(L_2), x_2)$, (ii)
 $\seg(\mathsf{cen}(L_1), x_1)$ and $\seg(\mathsf{cen}(O_2), x_2^\prime)$, (iii) $\seg(\mathsf{cen}(O_1), x_1^\prime)$ and $\seg(\mathsf{cen}(L_2), x_2)$, and (iv)
 $\seg(\mathsf{cen}(O_1), x_1^\prime)$ and $\seg(\mathsf{cen}(O_2), x_2^\prime)$.
 \end{lemma}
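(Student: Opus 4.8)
The plan is to read off everything from the star-shaped structure of the cells of the $\mathsf{AWVD}$ of $\mathcal{L}\cup\mathcal{O}$. First I would record the facts that do the work. By Lemma~\ref{startproperty}, for every disk $X\in\mathcal{L}\cup\mathcal{O}$ the cell $\mathsf{cell}(X)$ is star-shaped with respect to $\mathsf{cen}(X)$, and $\mathsf{cen}(X)$ lies in no other cell; moreover the cells of the disks in $\mathcal{L}\cup\mathcal{O}$ tile the plane with pairwise disjoint interiors. By the construction of $E_1$ and Steps~1--3 of the construction of $E_2$, the point $x_1$ is a boundary point of $\mathsf{cell}(L_1)$, so by star-shapedness the whole spoke $\seg(\mathsf{cen}(L_1),x_1)$ lies in $\mathsf{cell}(L_1)$; likewise $\seg(\mathsf{cen}(O_1),x_1^\prime)\subseteq\mathsf{cell}(O_1)$, $\seg(\mathsf{cen}(L_2),x_2)\subseteq\mathsf{cell}(L_2)$, and $\seg(\mathsf{cen}(O_2),x_2^\prime)\subseteq\mathsf{cell}(O_2)$. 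Using the non-degeneracy conditions already imposed on $\mathcal{R}_{\mathsf{cen}}\cup\mathcal{P}$ (and, if needed, an infinitesimal perturbation of $\mathcal{P}$, which by the construction of $\mathcal{S}$ moves the endpoints $x_i,x_i^\prime$ into general position), each ray from a center to its spoke endpoint crosses the boundary of its own cell transversally and away from a vertex of the $\mathsf{AWVD}$; hence the \emph{open} spoke lies in the interior of its cell, and the only boundary point of that cell on the spoke is the endpoint $x_i$ (resp.\ $x_i^\prime$).

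Given this, all four parts reduce to a single statement: two spokes lying in the cells of two disks $X_1,X_2\in\mathcal{L}\cup\mathcal{O}$ do not cross. Since $\mathcal{L}\cap\mathcal{O}=\emptyset$, in parts (ii) and (iii) the two cells automatically belong to distinct disks, while in parts (i) and (iv) the remaining possibility is $X_1=X_2$, i.e.\ $L_1=L_2$ or $O_1=O_2$. If $X_1\neq X_2$, any common point of the two spokes lies in $\mathsf{cell}(X_1)\cap\mathsf{cell}(X_2)$, hence on the shared boundary $\partial\mathsf{cell}(X_1)\cap\partial\mathsf{cell}(X_2)$; but by the first paragraph the only boundary point of $\mathsf{cell}(X_1)$ met by the first spoke is its far endpoint, and similarly for the second spoke, so such a common point would have to be a shared endpoint of the two spokes. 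Their far endpoints are among $x_1,x_1^\prime,x_2,x_2^\prime$, which are pairwise distinct and not collinear with the relevant centers by the generic choice of $\mathcal{S}$, and their near endpoints $\mathsf{cen}(X_1)\neq\mathsf{cen}(X_2)$; hence no common endpoint exists and the two spokes are disjoint.

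Finally, for $X_1=X_2$ in parts (i) and (iv), the edges $e_1$ and $e_2$ share the vertex $\mathsf{cen}(X_1)$ of $G$, and the two spokes are straight segments out of this common point inside the same star-shaped cell; a generic choice of edge-segments keeps $x_1,x_2,\mathsf{cen}(X_1)$ non-collinear, so the spokes meet only at $\mathsf{cen}(X_1)$, which is a shared vertex and does not create a crossing in $G$. The main obstacle is the bookkeeping behind the non-degeneracy claim: one must check that the perturbation of $\mathcal{S}$ used to push each open spoke into the interior of its cell and to separate spokes sharing a vertex is consistent with the assumptions made earlier (no point of $\mathcal{P}$ on a disk boundary, no three points of $\mathcal{R}_{\mathsf{cen}}\cup\mathcal{P}$ collinear, no four disk boundaries concurrent) and with Lemmas~\ref{intersection_of_linesegments}--\ref{segment_edge}; once that is granted, the conclusion is immediate from star-shapedness and the disjointness of cell interiors.
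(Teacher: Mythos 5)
Your proposal is correct and follows essentially the same route as the paper: both arguments confine each spoke to its own star-shaped cell, use the disjointness of cell interiors to force any intersection point onto the far endpoints, and then rule out $x_1=x_2$ via the non-overlapping choice of the edge-segments in $\mathcal{S}$ (the paper phrases this as the segments $\seg(p_1,p_1^\prime)$ and $\seg(p_2,p_2^\prime)$ intersecting, rather than as a genericity assumption). Your explicit treatment of the degenerate case $L_1=L_2$ (resp.\ $O_1=O_2$), where the spokes share the graph vertex $\mathsf{cen}(L_1)$, is a detail the paper's proof leaves implicit.
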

\begin{proof}
	Suppose $\seg(\mathsf{cen}(L_1), x_1)$ and $\seg(\mathsf{cen}(L_2), x_2)$ intersect at a point $x$. Since $\seg(\mathsf{cen}(L_1), x_1)$ is completely inside $\mathsf{cell}(L_1)$ and  $\seg(\mathsf{cen}(L_2), x_2)$ is completely inside $\mathsf{cell}(L_2)$, the point $x$ cannot be an interior point to both  $\mathsf{cell}(L_1)$ and $\mathsf{cell}(L_2)$. Hence, $x=x_1=x_2$. Thus, both segments $\seg(p_1, p_1^\prime)$ and $\seg(p_2, p_2^\prime)$ intersect at $x$, which is not the common endpoint of the two segments. This contradicts that no two segments in $\mathcal{S}$ intersect. 
	
	The other three cases are similar. \hfill     
\end{proof}

Thus, if both the edges $e_1$ and $e_2$ intersect, then it must be the case that exactly one of the following four pairs of segments intersects. $(i)$ $\seg(\mathsf{cen}(L_1), x_1)$ and $\seg(x_2, x_2^\prime)$, $(ii)$ $\seg(\mathsf{cen}(O_1), x_1^\prime)$ and $\seg(x_2, x_2^\prime)$, $(iii)$ $\seg(\mathsf{cen}(L_2), x_2)$ and $\seg(x_1, x_1^\prime)$, and $(iv)$ $\seg(\mathsf{cen}(O_2), x_2^\prime)$ and $\seg(x_1, x_1^\prime)$. We now describe the edge perturbation step for the first case. The other three cases are similar. 

 Suppose that $\seg(\mathsf{cen}(L_1), x_1)$ and $\seg(x_2, x_2^\prime)$  intersects at a point $x \in \mathsf{cell}(L_1)$. Hence,   $\seg(\mathsf{cen}(L_1), x_1)$ and $\seg(p_2, p_2^\prime)$ also intersects at the same point $x$. Further, the boundary of the disk $L_1$ intersects $\seg(p_2, p_2^\prime)$.    Otherwise, $L_1$ covers both $p_2$ and $p_2^\prime$. Thus, $p_2^\prime \in L_1 \cap O_2$. This contradicts that $D_2 \in \mathcal{R}_2$. In particular, $L_1$ cannot cover $p_2^\prime$.

 \begin{lemma}
     The segment $\seg(p_2, p_2^\prime)$  intersects the boundary of $L_1$ exactly two times. \label{twice_intersect}
 \end{lemma}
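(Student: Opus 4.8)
The plan is to argue by contradiction using the fact that $\seg(p_2, p_2^\prime)$ is a chord of the disk $D_2$ (hence a connected segment whose endpoints both lie inside $D_2$), together with the disk $L_1$ being a convex region. Since a line intersects the boundary of a disk (a circle) in at most two points, the segment $\seg(p_2,p_2^\prime)$ crosses $\partial L_1$ at most twice; so the only thing to rule out is that it crosses zero times or exactly once. First I would recall from the discussion immediately preceding the lemma that $L_1$ cannot cover $p_2^\prime$: if it did, then $p_2^\prime \in L_1 \cap O_2 \cap \mathcal{P}$, contradicting $D_2 \in \mathcal{R}_2$ (a disk in $\mathcal{R}_2$ has no point covered simultaneously by a disk of $\mathcal{L}$ and a disk of $\mathcal{O}$, yet $p_2^\prime \in \mathcal{P}_{\mathcal{O}}^{D_2}$ is already covered by $O_2 \in \mathcal{O}$). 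So $p_2^\prime \notin L_1$.

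Next I would pin down that the segment does meet $L_1$: by hypothesis $\seg(\mathsf{cen}(L_1), x_1)$ (which lies entirely inside $\mathsf{cell}(L_1)$, hence inside $L_1$ since $x_1$ is a boundary point of $\mathsf{cell}(L_1)$ and $\mathsf{cen}(L_1) \in L_1$) meets $\seg(x_2, x_2^\prime) \subseteq \seg(p_2, p_2^\prime)$ at the point $x$, and $x \in \mathsf{cell}(L_1)$. I need $x$ to be an interior point of $L_1$, or at least to produce a genuine interior point of $L_1$ on the segment. Since no point of $\mathcal{P}$ lies on the boundary of any disk (standing general-position assumption) and $L_1$ covers $p_2$ — wait, $p_2$ need not be covered by $L_1$ either; here $p_2 \in \mathcal{P}^{L_2}$, not necessarily $\mathcal{P}^{L_1}$. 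So the correct route is: the point $x \in \seg(\mathsf{cen}(L_1),x_1)$ is in the closed disk $L_1$, and because the segment $\seg(\mathsf{cen}(L_1),x_1)$ starts at the center $\mathsf{cen}(L_1)$, which is strictly interior, a sub-neighborhood of $x$ along $\seg(p_2,p_2^\prime)$ contains points strictly inside $L_1$. Thus $\seg(p_2,p_2^\prime)$ contains a point strictly inside $L_1$ and the endpoint $p_2^\prime$ strictly outside $L_1$; by continuity (intermediate value theorem applied to the function $t \mapsto \mathsf{\Phi}(L_1, \cdot)$ restricted to the segment), the segment crosses $\partial L_1$ at least once between that interior point and $p_2^\prime$.

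Now I would handle the parity. Suppose for contradiction that $\seg(p_2, p_2^\prime)$ meets $\partial L_1$ exactly once. A line meets a circle in $0$ or $2$ points (generic position excludes tangency among the relevant circles by assumption, and the tangency case can be perturbed away), so the full line through $p_2, p_2^\prime$ meets $\partial L_1$ in exactly two points, say $a$ and $b$; exactly one of them, say $a$, lies on the segment $\seg(p_2, p_2^\prime)$, and $b$ lies on one of the two rays beyond an endpoint. Because the portion of the line strictly between $a$ and $b$ is the intersection of the line with the open disk, the endpoints of the segment are separated by $a$ into one endpoint inside $L_1$ and one outside. We have shown $p_2^\prime \notin L_1$; hence $p_2 \in \operatorname{int}(L_1)$, i.e. $L_1$ covers $p_2$. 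But $p_2 \in \mathcal{P}_{\mathcal{O}}^{D_2}$? No — re-examining the setup, $p_2 \in \mathcal{P}^{L_2} \cap \mathcal{P}^{D_2} = \mathcal{P}_{\mathcal{L}}^{D_2}$-type point, covered by $L_2 \in \mathcal{L}$ and by $D_2$; so $p_2 \in L_1 \cap D_2$ is not by itself a contradiction. The contradiction comes instead from $p_2^\prime$: the configuration forces, in the "exactly one crossing" case, that $p_2$ is the inside endpoint — but we must also account for the already-known interior point $x$ of $L_1$ on $\seg(x_2,x_2^\prime)$. If there is exactly one boundary crossing $a$ on the segment, then $a$ splits $\seg(p_2,p_2^\prime)$ into an inside part and an outside part; $x$ lies in the inside part and $p_2^\prime$ in the outside part, which is consistent and gives no contradiction — so one crossing is geometrically impossible only if I instead count crossings of $\partial L_1$ with the sub-segment $\seg(x, p_2^\prime)$, which must be odd (from interior $x$ to exterior $p_2^\prime$) and at most two, hence exactly one; and crossings of $\partial L_1$ with $\seg(p_2, x)$, which is even. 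If $\seg(p_2,x)$ has zero crossings then $p_2 \in L_1$ and the total on $\seg(p_2,p_2^\prime)$ is one — but then by the same intermediate-value argument applied to the other sub-segment nothing breaks. The real argument is cleaner: I would show the segment crosses $\partial L_1$ at least twice by exhibiting $p_2 \notin L_1$ as well. Indeed $p_2 \in \mathcal{P}^{L_2}$ and $p_2 \in \mathcal{P}^{D_2}$, and the same reasoning as in the preceding paragraph of the paper (applied symmetrically, once I also use that $\seg(\mathsf{cen}(L_1),x_1)$ passing through $x$ forces $x$ interior while recalling $x \in \seg(x_2, x_2^\prime) \subseteq \seg(p_2,p_2^\prime)$) shows $L_1$ covering $p_2$ would place $p_2$ in two $\mathcal{L}$-disks $L_1, L_2$ — which is harmless — so the decisive fact I must establish is $p_2 \notin L_1$ via a separate route, likely the minimality/non-containment reduction on $\mathcal{L}$ or the structure of $\mathsf{AWVD}$. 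The main obstacle, and the step I expect to be delicate, is exactly this: cleanly showing both endpoints $p_2$ and $p_2^\prime$ lie outside $L_1$ (so that an interior point on the segment forces at least two crossings, and the circle–line bound forces exactly two), being careful about which general-position assumptions and which earlier lemmas (non-containment of disks, $\mathsf{AWVD}$ cell properties, no point on a boundary) are invoked. Once both endpoints are outside and an interior point is known, the conclusion "exactly two intersections" is immediate from convexity of the disk and the circle-meets-line-in-$\le 2$-points fact.
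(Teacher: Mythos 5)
Your overall skeleton --- show that both endpoints $p_2$ and $p_2^\prime$ lie outside $L_1$ while some point of $\seg(p_2,p_2^\prime)$ lies inside $L_1$, then invoke convexity of the disk and the fact that a line meets a circle at most twice --- is the right one, and your exclusion of $p_2^\prime$ (via $p_2^\prime\in L_1\cap O_2\cap\mathcal{P}$ contradicting $D_2\in\mathcal{R}_2$) is exactly the paper's argument. But there is a genuine gap precisely where you say you expect one: you never establish $p_2\notin L_1$, and you explicitly observe that ``$p_2$ lying in two $\mathcal{L}$-disks is harmless'' without finding a substitute. The missing idea is condition (3) of Lemma~\ref{segment_edge}, i.e., the defining property of the edge-segment $\seg(x_2,x_2^\prime)$: if $L_1$ covered $p_2$, then, since $p_2\in\mathcal{P}^{D_2}$, the disk $L_1$ would be a dominator of $D_2$ (a member of $\mathcal{L}_{D_2}$ distinct from $L_2$), and the hypothesis $x\in\mathsf{cell}(L_1)\cap\seg(x_2,x_2^\prime)$ would then contradict the stipulation that no dominator of $D_2$ other than $L_2$ and $O_2$ has a cell meeting $\seg(x_2,x_2^\prime)$. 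That single observation is the entire content of the paper's proof (stated there with the roles of $p_2$ and $p_2^\prime$ apparently transposed by a typo), and without it the lemma does not follow.

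A second, smaller defect: your justification that $\seg(p_2,p_2^\prime)$ actually meets $L_1$ rests on the inference ``$\seg(\mathsf{cen}(L_1),x_1)$ lies inside $\mathsf{cell}(L_1)$, hence inside $L_1$,'' which is false --- an additively weighted Voronoi cell is not contained in its disk (for a single disk the cell is the whole plane), so $x\in\mathsf{cell}(L_1)$ does not by itself place $x$ in the closed disk $L_1$. The paper is admittedly also terse on this point (it asserts that if the boundary were not crossed then ``$L_1$ covers both $p_2$ and $p_2^\prime$,'' which presupposes a point of the segment inside $L_1$), but since you offer an explicit justification it needs to be a correct one rather than one that identifies the cell with the disk.
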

 \begin{proof}
From the above discussion, it is clear that $L_1$ does not cover $p_2$. For the sake of contradiction, assume that $L_1$ covers $p_2^\prime$. Since point $x \in \mathsf{cell}(L_1)$ and $L_1$ is a dominator of $D_2$, the choice of $\seg(x_2, x_2^\prime$) is wrong. Hence, $L_1$ does not cover $p_2^\prime$. Therefore, $L_1$ intersect $\seg(p_2, p_2^\prime)$ exactly twice. 
 \end{proof}
 
 \noindent
\textbf{Edge perturbation step: }
 
 From Lemma \ref{twice_intersect}, it is clear that  $\seg(p_2, p_2^\prime)$ intersects the disk $L_1$ twice. Rotate the plane such that a horizontal line $l$ passes through $\seg(p_2, p_2^\prime)$. Without loss of generality, assume that $\mathsf{cen}(L_1)$ is above the line $l$. Hence, $x_1$ lies below $l$. Partition the disk $L_1$ into two connected regions that are on both sides of $\seg(p_2, p_2^\prime)$; one is above $l$ (call the region $L_1^+$) and the other one is below $l$ (call the region $L_1^-$). The point $p_1$ lies inside the region $L_1^-$, otherwise $\seg(p_1, p_1^\prime)$ and $\seg(p_2, p_2^\prime)$ intersects.  Recall that $\seg(p_2, p_2^\prime)$ completely lies inside the disk $D_2$. 
 Clearly, exactly one of the regions $L_1^+$ and $L_1^-$ falls completely inside $D_2$.  
 
 If $L_1^-$ is completely covered by $D_2$, then $L_1$ is a dominator of $D_2$, this contradicts the choice of $\seg(x_2, x_2^\prime)$.  Hence, $L_1^+$ is completely inside $D_2$. Further, there exists a point $q^\prime \in \mathcal{P}^{L_1} \setminus \mathcal{P}^{D_2}$,  which is inside the region $L_1^-$ otherwise $\mathcal{P}^{L_1} \subset \mathcal{P}^{D_2}$, this is a contradiction. 
 Further, note that the region $L_1^+$ does not contain any point from $\mathcal{P}^{L_1}$.

We now slightly modify the portion  $\seg(x_2, x_2^\prime)$ of the edge $e_2$ as follows: let $\seg(t_2, t_2^\prime)$ be the maximal portion of $\seg(x_2, x_2^\prime)$ such that $\seg(t_2, t_2^\prime)$ is completely in $\mathsf{cell}(L_1)$ (see Fig. \ref{edge_per_1}) and $x$ is an interior point to $\seg(t_2, t_2^\prime)$ due to the fact that $x$ is not the boundary point of $\mathsf{cell}(L_1)$. We know that $\mathsf{cen}(L_1)$ is not a boundary point for $\mathsf{cell}(L_1)$. Let $t$ be the point on the boundary of $\mathsf{cell}(L_1)$ with same $x$-coordinate as $\mathsf{cen}(L_1)$ and the $y$-coordinate of $t$ is greater than the $y$-coordinate of $\mathsf{cen}(L_1)$ (see Fig.  \ref{edge_per_1}). Note that $\triangle t_2 t t_2^\prime $ lies completely inside $\mathsf{cell}(L_1)$. We further note that some of the edges in $E$ may intersect the segment $\seg(\mathsf{cen}(L_1), t)$. Let $t^*$ be the first point on the segment $\seg(\mathsf{cen}(L_1), t)$ at which some edge in $E$ intersects $\seg(\mathsf{cen}(L_1), t)$ when we walk from $\mathsf{cen}(L_1)$ to $t$ along the segment $\seg(\mathsf{cen}(L_1), t)$.  If no edge in $E$ intersects $\seg(\mathsf{cen}(L_1), t)$ then set $t^* = t$. Note that $t^*$ cannot be $\mathsf{cen}(L_1)$ since we assume that no three points from the set of disk centers union the set of points in $\mathcal{P}$ are collinear. We replace $\seg(t_2, t_2^\prime)$ with a non-self-intersecting curve $\mathcal{C}(t_2, t_2^\prime)$ such that all the points on this curve are inside $\triangle t_2 t t_2^\prime$ and passes through a point on $\seg(\mathsf{cen}(L_1), t^*)$  (see Fig. \ref{edge_per_1}). Suppose that several segments of the form $\seg(\mathsf{cen}(L_1), x_j)$ intersect $\seg(x_2, x_2^\prime)$. We note that  $\seg(t_2, t_2^\prime)$ is still be the maximal portion of $\seg(x_2, x_2^\prime)$ such that $\seg(t_2, t_2^\prime)$ is completely contained in $\mathsf{cell}(L_1)$. Hence, in this case also,  $\seg(t_2, t_2^\prime)$ will be replaced with $\mathcal{C}(t_2, t_2^\prime)$. 
   A non-self-intersecting curve $\mathscr{C}_1(x_2, x_2^\prime)$ connecting the points $x_2$ and $x_2^\prime$ is the chain of  $\seg(x_2, t_2)$,  $\mathcal{C}(t_2, t_2^\prime)$, and  $\seg(t_2^\prime,  x_2^\prime)$.    Now the segment $\seg(x_2, x_2^\prime)$ in the edge $e_2 = \mathscr{C}(\mathsf{cen}(L_2), x_2, x_2^\prime, \mathsf{cen}(O_2))$ is replaced with $\mathscr{C}_1(x_2, x_2^\prime)$.

  	\begin{figure}[ht!]
 	    \begin{center}
 	    \includegraphics[scale=1.5]{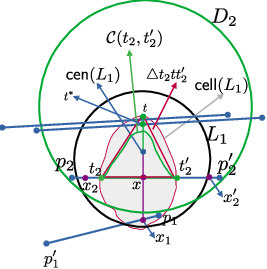}
 	    \caption{Edge perturbation step illustration. The shaded region defines $\mathsf{cell}(L_1)$. The segment $\seg(x_2, x_2^\prime)$ in the edge  is replaced with the curve $\mathscr{C}_1(x_2, x_2^\prime)$ (the chain of segment segment $\seg(x_2, t_2)$, curve $\mathcal{C}(t_2, t_2^\prime)$, and segment $\seg(t_2^\prime,  x_2^\prime)$) which  connects $x_2$ and $x_2^\prime$.  }
 	    \label{edge_per_1}    
 	    \end{center}
 	    
 	\end{figure}

Suppose that the segments $\seg(\mathsf{cen}(X_i), x_i), \seg(\mathsf{cen}(X_{i+1}), x_{i+1}), \ldots,  \seg(\mathsf{cen}(X_j), x_j) $ intersect $\seg(x_2, x_2^\prime)$ in the order from left to right (when we walk from $x_2$ to $x_2^\prime$ along $\seg(x_2, x_2^\prime)$) where $X_i, \ldots, X_j \in (\mathcal{L} \cup \mathcal{O}) \setminus \{L_2, O_2\}$.  In this case, we simultaneously apply the above edge perturbation step for all disks $X_i, X_{i+1}, X_j$. Note that no two curves introduced at this step intersect since each curve completely lies inside a different cell in the additive weighted Voronoi diagram.   We now give the general structure of the edge $e_2$. For $\lambda = i, i+1, \ldots, j$, let $\seg(t_\lambda,  t_\lambda^\prime)$ be the maximal portion of $\seg(x_2, x_2^\prime)$ such that $\seg(t_\lambda, t_\lambda^\prime)$ is completely inside $\mathsf{cell}(X_\lambda)$. Define a non-self-intersecting curve $\mathcal{C}(t_\lambda , t_\lambda ^\prime)$ as before. Define a curve $\mathscr{C}_1^*(x_2, x_2^\prime)$ as the chain of  $\seg(x_2, t_i)$,   $\mathcal{C}(t_i, t_i^\prime)$,  $\seg(t_i^\prime, t_{i+1})$,  $\mathcal{C}(t_{i+1}, t_{i+1}^\prime), \ldots, $  $\mathcal{C}(t_j, t_j^\prime)$, and  $\seg(t_j^\prime,  x_2^\prime)$. In the edge $e_2 = \mathscr{C}(\mathsf{cen}(L_2), x_2, x_2^\prime, \mathsf{cen}(O_2))$, the segment $\seg(x_2, x_2^\prime)$ is replaced with $\mathscr{C}_1^*(x_2, x_2^\prime)$. In particular,  the edge $e_2$ is the curve $\mathscr{C}^*(\mathsf{cen}(L_2), x_2, x_2^\prime, \mathsf{cen}(O_2)) $,  joining $\mathsf{cen}(L_2)$ and $\mathsf{cen}(O_2)$, which is the chain of  $\seg(\mathsf{cen}(L_2), x_2)$,  $\mathscr{C}_1^*(x_2, x_2^\prime)$, and  $\seg(x_2^\prime,  \mathsf{cen}(O_2))$. 
   In general,  after applying edge perturbation step,  an edge $e =\mathscr{C}(\mathsf{cen}(L), x, x^\prime, \mathsf{cen(O)}) \in E$, will transform to a curve $\mathscr{C}^*(\mathsf{cen}(L), x, x^\prime, \mathsf{cen}(O)) $.

Suppose that $\seg(\mathsf{cell}(L_1), x_1)$ intersects many segments of the form $\seg(x_i, x_i^\prime) $ ($i \neq 1$). Let  $\seg(x_2,  x_2^\prime), \seg(x_3, x_3^\prime), \ldots, \seg(x_j, x_j^\prime)$ be the ordering of the segments which intersect $\seg(\mathsf{cell}(L_1), x_1)$ when we walk from $\mathsf{cell}(L_1)$ to $x_1$  along the segment $\seg(\mathsf{cen}(L_1), x_1)$. Let  $x^*_{i}$ be the intersection point of $\seg(x_i, x_i^\prime)$ and $\seg(\mathsf{cell}(L_1), x_1)$. As before,  let $\seg(t_i, t_i^\prime)$ be the maximal portion of $\seg(x_i, x_i^\prime)$ such that $\seg(t_i, t_i^\prime)$ is completely inside $\mathsf{cell}(L_1)$. Similar as above, define the point $t$ on the boundary of $\mathsf{cell}(L_1)$, then define $t^*$ on $\seg(\mathsf{cen}(L_1), t)$ such that if an edge in $E$ intersects the segment $\seg(\mathsf{cen}(L_1), t)$ then it intersects $\seg(\mathsf{cen}(L_1), t)$ at a point on the segment $\seg(t, t^*)$ (if no edge in $E$ intersect $\seg(\mathsf{cen}(L_1), t)$, then set $t^* = t$). As mentioned before, $t^*$ cannot be $\mathsf{cen}(L_1)$.   Let  $t_2^*, t_3^*, \ldots, t_j^*$ be some points on the segment $\seg(\mathsf{cen}(L_1), t)$ such that  when we walk from $t^*$ to $\mathsf{cen}(L_1)$ along the segment $\seg(t^*,  \mathsf{cen}(L_1))$, the points appear in the same order. We define a non-self-intersecting curve $\mathcal{C}(t_i,  t_i^\prime)$, as discussed before, such that all the points on the curve are inside $\triangle t_i t t_i^\prime$ and it passes through $t_i^*$  (see Fig. \ref{edge_per_4}). We note that one can draw all these curves $\mathcal{C}(t_2,  t_2^\prime)$, $\mathcal{C}(t_3,  t_3^\prime), \ldots, \mathcal{C}(t_2,  t_2^\prime)$, one after another in the same order,  such that no two of these curves intersect (see Fig. \ref{edge_per_4}).

 	\begin{figure}[ht!]
 	    \begin{center}
 	    \includegraphics[scale=1.5]{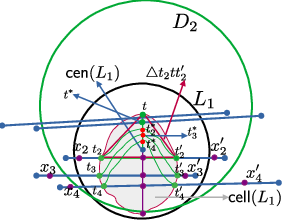}
 	    \caption{Edge perturbation step illustration when multiple edges intersect $\seg(\mathsf{cen}(L_1), x_1)$.  }
 	    \label{edge_per_4}    
 	    \end{center}
 	    
 	\end{figure}

We apply the edge perturbation method for all the pairs of edges $e$ and $e^\prime$ in $E_1 \cup E_2$ if $e$ and $e^\prime$ intersect.  Let $G$ be the resultant graph after applying the edge permutation on each pair of edges in $E_1 \cup E_2$ if needed. We note that no two curves introduced in this intersect due to the fact that, 
\begin{itemize} 
    \item We can draw all the curves inside the same cell without intersecting each other and no curve intersects other edges which pass through the points in the same cell and 
    \item The curves that belong to two different cells do not intersect since the curves are interior to the cell. 
\end{itemize}

Thus,  we have the following lemma. 

\begin{lemma} \label{lem:planar}
  The graph $G$ is a planar graph.
\end{lemma}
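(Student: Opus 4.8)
The plan is to show that the drawing of $G=(V,E_1\cup E_2)$ obtained after performing every edge-perturbation step is a plane drawing, i.e.\ that in it no two edges cross; planarity of $G$ is then immediate. The starting point is the case analysis already carried out above: by the lemma asserting that the four cross-pairs of spoke segments (one from $e_1$, one from $e_2$, among $\seg(\mathsf{cen}(L_1),x_1)$, $\seg(\mathsf{cen}(O_1),x_1^\prime)$, $\seg(\mathsf{cen}(L_2),x_2)$, $\seg(\mathsf{cen}(O_2),x_2^\prime)$) are non-crossing, together with the fact that the bridge segments $\seg(x_1,x_1^\prime)$ and $\seg(x_2,x_2^\prime)$ do not cross (no two members of $\mathcal{S}$ cross) and that distinct $E_1$-edges never cross, the only crossing that can occur between two edges is between a spoke of one and the bridge of the other --- exactly the configuration to which the perturbation step is applied. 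Hence it suffices to establish two closure properties: (a) a perturbation step destroys every crossing it is applied to, and (b) it creates no new crossing. Note that a step only reroutes portions of bridge segments and never moves a centre $\mathsf{cen}(\cdot)$, so the procedure makes monotone progress through the finitely many cell-bridge incidences and terminates; since every edge keeps its two endpoints, the locality condition of Lemma~\ref{locality_condition_def} is untouched.

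For (a): suppose a spoke $\seg(\mathsf{cen}(L_1),x_1)$ crosses a bridge $\seg(x_2,x_2^\prime)$ at a point $x$. By Lemma~\ref{twice_intersect} the chord $\seg(p_2,p_2^\prime)$ meets $\partial L_1$ exactly twice, so a neighbourhood of $x$ on the bridge lies inside $L_1$, and the maximal subsegment $\seg(t_2,t_2^\prime)\subseteq\seg(x_2,x_2^\prime)$ contained in $\mathsf{cell}(L_1)$ has $x$ in its interior. Orienting $\seg(p_2,p_2^\prime)$ along a horizontal line $l$ with $\mathsf{cen}(L_1)$ above $l$ and $x_1$ below it, the triangle $\triangle t_2\,t\,t_2^\prime$ (with apex $t$ the cell-boundary point directly above $\mathsf{cen}(L_1)$) lies in $\mathsf{cell}(L_1)$ and entirely on or above $l$, whereas the part of the spoke above $l$ is the subsegment $\seg(\mathsf{cen}(L_1),x)$, which lies below the curve $\mathcal{C}(t_2,t_2^\prime)$ because that curve is routed over $\mathsf{cen}(L_1)$ through a point of $\seg(\mathsf{cen}(L_1),t^\ast)$. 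Thus the new curve no longer meets $\seg(\mathsf{cen}(L_1),x_1)$; and when several spokes $\seg(\mathsf{cen}(X_\lambda),x_\lambda)$ cross $\seg(x_2,x_2^\prime)$, their reroutings take place in the pairwise disjoint cells $\mathsf{cell}(X_\lambda)$ and are therefore independent, so all those crossings are removed at once.

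For (b): the replacement curve $\mathcal{C}(t_2,t_2^\prime)$ lies in the interior of $\mathsf{cell}(L_1)$, so by definition of the $\mathsf{AWVD}$ it can meet only objects that also pass through $\mathsf{cell}(L_1)$ --- namely spokes incident to $\mathsf{cen}(L_1)$ and rerouting curves previously placed inside $\mathsf{cell}(L_1)$. The choice of $t^\ast$ as the first point of $\seg(\mathsf{cen}(L_1),t)$ (walking from $\mathsf{cen}(L_1)$) at which any edge of $E$ is met guarantees that $\seg(\mathsf{cen}(L_1),t^\ast)$ is free of other edges, so the curve can be pushed close enough to $\mathsf{cen}(L_1)$ that it crosses exactly the spokes it is forced to inherit and nothing else; when one spoke $\seg(\mathsf{cen}(L_1),x_1)$ is crossed by several bridges, their curves are nested in the order of their contact points $t_\lambda^\ast$ along $\seg(\mathsf{cen}(L_1),t^\ast)$, each confined to its own triangle $\triangle t_\lambda\,t\,t_\lambda^\prime$, so no two of them cross. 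This is exactly where the star-shapedness of $\mathsf{cell}(L_1)$ (Lemma~\ref{startproperty}, part~II) enters, since it is what makes these nested triangles available; and rerouting curves belonging to different cells are interior to disjoint regions and hence disjoint. The crux --- and the step I expect to be the main obstacle --- is precisely this global consolidation: verifying that a single pass of the procedure simultaneously destroys all targeted crossings and introduces none, across all cells and all edges at once, which ultimately reduces to the monotone nested structure of the triangles along each spoke. Once that is checked, the perturbed drawing is a plane embedding of $G$, so $G$ is planar.
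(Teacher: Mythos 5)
Your proposal is correct and follows essentially the same route as the paper: it classifies the only possible crossings as spoke--bridge incidences, applies the edge-perturbation step, and then argues that the rerouting curves within one cell can be nested without crossing while curves in distinct cells are disjoint because each is interior to its own Voronoi cell --- which is exactly the two-point justification the paper gives before stating the lemma. Your explicit separation into the closure properties (a) and (b), and your remark on termination via monotone progress, make the consolidation step more transparent than the paper's presentation, but the underlying argument is the same.
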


From the construction of the graph $G$ and from Lemma \ref{lem:planar}, we conclude the following lemma. 

\begin{lemma}\label{locality_condition_proof}
   The disks in $\mathcal{L} \cup \mathcal{O}$  satisfy the locality condition. 
\end{lemma}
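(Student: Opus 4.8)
The plan is to assemble this statement directly from the pieces already established, since the locality condition of Lemma~\ref{locality_condition_def} asks for exactly three things about a graph on vertex set $\mathcal{L}\cup\mathcal{O}$ (which is legitimate because we have reduced to $\mathcal{L}\cap\mathcal{O}=\emptyset$): it must be (a) bipartite, (b) planar, and (c) such that every object of $\mathcal{R}$ has a certifying edge joining two of its dominators, one in $\mathcal{L}$ and one in $\mathcal{O}$. I would take $G$ to be precisely the graph produced by the construction of $E_1$, then $E_2$, followed by the edge-perturbation step, and verify (a)--(c) in turn.

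For bipartiteness, note that every edge of $E_1$ joins $\mathsf{cen}(L)$ to $\mathsf{cen}(O)$ with $L\in\mathcal{L}$, $O\in\mathcal{O}$ by the definition of $E_1$, and every edge of $E_2$ is of the form $\mathscr{C}(\mathsf{cen}(L),x,x',\mathsf{cen}(O))$ with $L\in\mathcal{L}_D\subseteq\mathcal{L}$ and $O\in\mathcal{O}_D\subseteq\mathcal{O}$ by Step~3; hence $(\mathcal{L},\mathcal{O})$ is a bipartition. Planarity is exactly Lemma~\ref{lem:planar}. For the certifying edges, split $\mathcal{R}=\mathcal{R}_1\cup\mathcal{R}_2$: for $X\in\mathcal{R}_1$ the edge set $E_1$ already supplies a certifying edge by Lemma~\ref{E_1_proof}; for $X\in\mathcal{R}_2$, Lemma~\ref{segment_edge} (together with Lemma~\ref{intersection_of_linesegments}, which lets us pick the family $\mathcal{S}$ of edge-segments pairwise non-crossing) yields an edge-segment $\seg(x_1,x_2)$ with $x_1\in\mathsf{cell}(L)$ for a dominator $L\in\mathcal{L}_D$ and $x_2\in\mathsf{cell}(O)$ for a dominator $O\in\mathcal{O}_D$, and Step~3 turns it into the edge $\mathscr{C}(\mathsf{cen}(L),x_1,x_2,\mathsf{cen}(O))\in E_2$ joining the two dominators $L,O$ of $X$. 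Since $\mathcal{R}_1$ and $\mathcal{R}_2$ partition $\mathcal{R}$, condition (c) holds for every object.

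The point I expect to require the most care is confirming that the edge-perturbation step spoils neither (a) nor (c): each perturbation re-routes only a sub-curve $\seg(x,x')$ of an edge $\mathscr{C}(\mathsf{cen}(L),x,x',\mathsf{cen}(O))$, keeping its endpoints $x,x'$ and the cells containing them fixed, so it changes neither the vertex set nor the unordered pair of endpoints of any edge; consequently the bipartition and every certifying pair $(L,O)$ survive verbatim, and only the planar drawing changes, which is the content of Lemma~\ref{lem:planar}. A minor bookkeeping point is that for $X\in\mathcal{R}_2$ one needs the sets $\mathcal{P}_\mathcal{L}^X$ and $\mathcal{P}_\mathcal{O}^X$ to be nonempty so that $X$ has dominators in both $\mathcal{L}$ and $\mathcal{O}$ and Lemma~\ref{segment_edge} applies; this was recorded when $E_2$ was set up, and it is irrelevant for $X\in\mathcal{R}_1$ since there $E_1$ alone does the job. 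Putting (a), (b), (c) together, $G$ is a planar bipartite graph of the form required by Lemma~\ref{locality_condition_def}, and hence the disks in $\mathcal{L}\cup\mathcal{O}$ satisfy the locality condition.
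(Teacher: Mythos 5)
Your proposal is correct and follows essentially the same route as the paper, which concludes this lemma directly from the construction of $E_1$ and $E_2$ (via Lemma~\ref{E_1_proof} for $\mathcal{R}_1$ and Lemmas~\ref{intersection_of_linesegments} and~\ref{segment_edge} for $\mathcal{R}_2$) together with the planarity established in Lemma~\ref{lem:planar}. Your additional observation that the edge-perturbation step preserves endpoints, and hence the bipartition and the certifying dominator pairs, is a useful explicit remark that the paper leaves implicit.
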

Now we have the following theorem.
\begin{theorem}
The \mdds~problem with arbitrary radii disks admits a \ptas, i.e., For any integer $t > 1$, a $t$-level local search produces a solution $\mathcal{L}$ of size $\leq (1 + \epsilon) |\mathcal{O}|$,  $\epsilon = O(\frac{1}{\sqrt{t}})$, in $nm^{O(1/\epsilon^2)}$-time, where $\cal O$ is an optimum solution of the \mdds~problem, $m$ is the number of disks, and $n$ is the number of points. 
 \label{ptas_dds_disks}
\end{theorem}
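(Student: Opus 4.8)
The plan is to assemble the pieces already developed in this section; the theorem is essentially a corollary of the two lemmas just proved. First I would run the $t$-level local search described in Section 3.1, which starts from $\mathcal{L} = \mathcal{R}$ and repeatedly performs beneficial exchanges of at most $t$ disks for at most $t-1$ disks; by the running-time analysis given there this produces a $t$-locally optimal feasible solution $\mathcal{L}$ in $O(nm^{2t+3})$ time. Before analyzing $\mathcal{L}$ I would apply the normalization from the Preliminaries: replace any disk in $\mathcal{L}$ (resp.\ in a fixed optimum $\mathcal{O}$) whose covered point set is a proper subset of that of some disk in $\mathcal{R}$, which keeps $\mathcal{L}$ feasible and does not increase $|\mathcal{L}|$ (and analogously for $\mathcal{O}$), and then reduce to the case $\mathcal{L} \cap \mathcal{O} = \emptyset$ exactly as in Section \ref{premilinaries}.

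Next, the core structural input is Lemma \ref{locality_condition_proof}: the construction of the graph $G = (V, E_1 \cup E_2)$ with $V = \mathcal{L} \cup \mathcal{O}$, followed by the edge-perturbation procedure, yields a planar bipartite graph in which every disk $X \in \mathcal{R}$ has a dominator in $\mathcal{L}$ and a dominator in $\mathcal{O}$ joined by an edge; that is, $\mathcal{L} \cup \mathcal{O}$ satisfies the Locality Condition of Lemma \ref{locality_condition_def}. Feeding this into Lemma \ref{lemma_ptas_dds} --- which combines the planar separator decomposition (Lemma \ref{planar_sepa_lemma}) with the $t$-local optimality of $\mathcal{L}$, bounding $|\mathcal{L}_i| \le |\mathcal{O}_i| + |N(V_i)|$ on each part $V_i$ (since otherwise replacing $\mathcal{L}_i$ by $\mathcal{O}_i \cup N(V_i)$ is a legal improving exchange, as $|N^+(V_i)| \le t$) and summing over $i$ --- gives $|\mathcal{L}| \le \bigl(1 + O(1/\sqrt{t})\bigr)|\mathcal{O}|$.

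Finally I would translate the $1/\sqrt{t}$ bound into the claimed form: there is an absolute constant $c$ with $|\mathcal{L}| \le (1 + c/\sqrt{t})\,|\mathcal{O}|$, so choosing $t = \Theta(1/\epsilon^2)$ (concretely $t = \lceil c^2/\epsilon^2 \rceil$) makes $\mathcal{L}$ a $(1+\epsilon)$-approximate solution for the \mdds~problem, and substituting $t = O(1/\epsilon^2)$ into the $O(nm^{2t+3})$ running time yields the stated $nm^{O(1/\epsilon^2)}$ bound. I expect essentially no new obstacle here --- all the genuinely hard work (the planarity of $G$ after edge perturbation, hence the Locality Condition) is discharged by Lemma \ref{locality_condition_proof}, and the separator/exchange counting by Lemma \ref{lemma_ptas_dds}. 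The only point deserving care is verifying that the two normalization steps are compatible with the exchange argument used inside Lemma \ref{lemma_ptas_dds}: one must check that a beneficial $t$-local exchange found for the reduced instance (after removing shared disks and shrinking the point set) lifts to a beneficial exchange for the original $\mathcal{L}$, so that $t$-local optimality of $\mathcal{L}$ really does transfer to the reduced instance on which the graph $G$ is built.
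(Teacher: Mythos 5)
Your proposal is correct and follows exactly the paper's route: the paper's proof of this theorem is literally the one-line combination of Lemma \ref{lemma_ptas_dds} (separator-plus-local-optimality counting) with Lemma \ref{locality_condition_proof} (the locality condition via the perturbed planar graph), together with the running-time bound from the algorithm description. The extra care you flag about the normalization and the $\mathcal{L}\cap\mathcal{O}=\emptyset$ reduction is already discharged in the paper's preliminaries, so nothing further is needed.
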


\begin{proof}
Lemma \ref{lemma_ptas_dds} together with Lemma \ref{locality_condition_proof} gives the proof of the theorem. \hfill   
\end{proof}

\begin{theorem}
    
The \mdds~problem with arbitrary side length axis-parallel squares admits \ptas, i.e., For any integer $t > 1$,  a $t$-level local search produces a solution $\mathcal{L}$ of size $\leq (1 + \epsilon) |\mathcal{O}|$,  $\epsilon = O(\frac{1}{\sqrt{t}})$, in $O(nm^{O(1/\epsilon^2)})$-time, where $\cal O$ is an optimum solution of the \mdds~problem, $m$ is the number of squares of arbitrary side length, and $n$ is the number of points.  

\label{ptas_dds_squares}

\end{theorem}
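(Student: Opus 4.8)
The plan is to mirror the proof of Theorem~\ref{ptas_dds_disks} almost verbatim, replacing the Euclidean-disk machinery with the $L_\infty$-square machinery already introduced in the proof of Theorem~\ref{ptas_dis_squares}. Concretely, for a square $S$ with center $c_S$ and side length $l_S$ I would reuse $\mathsf{\Phi}(S,x)=\norm{c_S-x}_\infty-\tfrac{l_S}{2}$ and define $\mathsf{cell}(S)=\{p\mid \mathsf{\Phi}(S,p)\le \mathsf{\Phi}(S',p)\ \forall S'\in\mathcal{L}\cup\mathcal{O}\}$, giving an $\mathsf{AWVD}$ for $\mathcal{L}\cup\mathcal{O}$. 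As noted after Theorem~\ref{ptas_dis_squares}, Lemma~\ref{startproperty} (non-emptiness and star-shapedness of cells) and Lemma~\ref{properties_WVD} (if $\mathsf{\Phi}(S_1,x)\le\mathsf{\Phi}(S_2,x)$ and $S_2$ covers $x$ then so does $S_1$) both remain valid in this setting, after the standard general-position perturbation (no point of $\mathcal{P}\cup\{\text{centers}\}$ collinear, no point on a square boundary, no square contained in another). With these two lemmas in hand, the preliminary reductions of Section~\ref{mdds_ptas} — removing squares whose covered point set is a proper subset of another's, and assuming $\mathcal{L}\cap\mathcal{O}=\emptyset$ — go through unchanged, since they only use set-inclusion of covered point sets and the generic argument of Section~\ref{premilinaries}.

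Next I would re-run the graph construction $G=(V,E_1\cup E_2)$ with $V$ the centers of squares in $\mathcal{L}\cup\mathcal{O}$. Phase~I (edge set $E_1$, handling $\mathcal{R}_1$) transfers directly: Lemma~\ref{E_1_proof}'s proof uses only $\mathsf{\Phi}$, Lemma~\ref{properties_WVD}, and the no-three-collinear assumption, so the triangle inequality $\norm{c-p}_\infty\le\norm{c-x}_\infty+\norm{x-p}_\infty$ (valid for $L_\infty$) suffices. Phase~II (edge set $E_2$, handling $\mathcal{R}_2$) is where I expect the real work: the key geometric facts used are that a segment joining two points inside a square $D$ lies inside $D$ (true for any convex body, in particular axis-parallel squares), and that if the two endpoints of a segment are not covered by a square $D'$ then the segment crosses $\partial D'$ exactly twice (Lemma~\ref{intersection_of_linesegments}, Lemma~\ref{twice_intersect}). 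For squares, a line segment can meet $\partial D'$ in at most two points unless it runs along an edge of the square — which the general-position perturbation rules out — so these lemmas survive. I would then state explicitly that Lemmas~\ref{intersection_of_linesegments}–\ref{locality_condition_proof} hold with ``disk'' replaced by ``square'', noting that the only nontrivial step, the edge-perturbation argument establishing planarity (Lemma~\ref{lem:planar}), depends solely on the star-shapedness of cells and on the ``exactly twice'' crossing property, both of which hold.

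Finally, having established that $\mathcal{L}\cup\mathcal{O}$ satisfies the Locality Condition (Lemma~\ref{locality_condition_def}) for squares, I invoke Lemma~\ref{lemma_ptas_dds} — whose proof is purely combinatorial on the planar bipartite graph via the planar separator theorem (Lemma~\ref{planar_sepa_lemma}) and the $t$-local optimality of $\mathcal{L}$, with no geometry at all — to conclude $|\mathcal{L}|\le(1+O(1/\sqrt{t}))|\mathcal{O}|$. Choosing $t=O(1/\epsilon^2)$ and recalling the $O(nm^{2t+3})$ running time of the local-search procedure gives the claimed \ptas.

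The main obstacle is verifying that the intricate edge-perturbation construction of Phase~II is not secretly using circularity of disks. I would argue it is not: every place the original proof invokes ``the boundary of the disk'' it only needs that the boundary of a convex region is crossed at most twice by a line segment in general position, and every place it invokes ``$\mathsf{cell}(\cdot)$ is star-shaped'' it has Lemma~\ref{startproperty}, which the excerpt already certifies for squares. So the square case follows by a line-by-line translation, and the theorem is a corollary of Theorem~\ref{ptas_dds_disks}'s proof.
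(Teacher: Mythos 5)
Your proposal follows essentially the same route as the paper's own proof: the paper likewise reduces Theorem~\ref{ptas_dds_squares} to the disk argument by redefining $\mathsf{\Phi}$ and $\mathsf{cell}(\cdot)$ via the $L_\infty$ norm, observing that Lemmas~\ref{startproperty}, \ref{properties_WVD}, and the locality-condition construction carry over, and then invoking Lemma~\ref{lemma_ptas_dds}. Your version is in fact more careful than the paper's one-paragraph proof, since you explicitly identify the two geometric facts (convexity, and the at-most-two boundary crossings of a segment, which for squares rests on their being homothets of a convex body and hence pseudo-disks) on which Phase~II depends.
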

\begin{proof}    
Let $\mathcal{R}$ be the set of axis-parallel squares with arbitrary side lengths. Apply $t$-level local search given for the \mdds~problem for arbitrary radii disks with $t = O( 1/\epsilon^2)$ for the instance ($\mathcal{P}, \mathcal{R})$.  The analysis is similar to the analysis of arbitrary radii disks,  except that for any two points $p$ and $q$ in the plane, $\mathsf{dist}(p,q)$ is defined with respect to the infinity norm,  ${L}_\infty$,  instead of $L_2$-norm as in \cite{Aschner2013} (see proof of Theorem \ref{ptas_dis_squares} for the definition of the distance function and a cell). For the instance $(\mathcal{P}, \mathcal{R})$ of the \mdds~problem (where the objects in $\mathcal{R}$ are axis-parallel squares), let
 $\mathcal{L} \subseteq \mathcal{R}$ be the $t$-locally optimal  solution return by  the $t$-level local search and  let $\mathcal{O} \subseteq \mathcal{R}$ be an optimal solution. We note that Lemma \ref{lemma_ptas_dds} is still true i.e., $|\mathcal{L}| \leq (1 + \epsilon) |\mathcal{O}|$ for $\epsilon = O(\frac{1}{\sqrt{t}})$  if the squares in $\mathcal{L} \cup \mathcal{O}$ satisfy the locality condition given in Lemma \ref{locality_condition_def}. The proof of showing that the squares in $\mathcal{L} \cup \mathcal{O}$ satisfy the locality condition is similar to the case of arbitrary radii disks.    
\end{proof}

\section{\texorpdfstring{\boldmath{\apx}}~-hardness Results} \label{apx_hard_results}

In this section, we present \apx-hardness results for the \mdis~and \mdds~problems.    First,  we  define a restricted version of the \mds~problem with set systems, the \spds~problem, and show that it is \apx-hard. We use the \spds~to prove Theorem \ref{apx_hardness_mdds}. The work is inspired by the results in \cite{Chan2014}.

\begin{definition} [\boldmath{\spds}] Let $(\mathcal{U}, \mathcal{S})$ be a range space where $\mathcal{U} = \mathcal{A} \cup \mathcal{B}$, $\mathcal{A} = \{a_1, a_2, \ldots, a_n\}$, $\mathcal{B}  = \mathcal{B}^1 \cup \mathcal{B}^2 \cup \cdots \cup \mathcal{B}^6$, and  $\mathcal{B}^i = \{b_1^i, b_2^i, \ldots, b_m^i\}$ for $1 \leq i \leq 6$ such that $3m=  2n$. Further, $\mathcal{S}$ is a collection of $7m$ subsets of $\mathcal{U}$ such that
\begin{enumerate}
		
\item Every element in $\mathcal{U}$ is in exactly two sets in $\mathcal{S}$. 
		
\item For every $t$, ($1 \leq t \leq m$), there exist three integers $ 1 \leq i < j < k \leq n$ such that the sets $\{a_i, b_t^1\}$, $\{b_t^1, b_t^2\}$, $\{b_t^2, b_t^3\}$, $\{b_t^3, b_t^4, a_j\}$, $\{b_t^4, b_t^5\}$, $\{b_t^5, b_t^6\}$, and $\{b_t^6, a_k\}$ are in the collection $\mathcal{S}$. 
\end{enumerate}

The objective is to find a minimum size sub-collection $\mathcal{S}^\prime \subseteq \mathcal{S}$ such that for every $S \in \mathcal{S}$,  either $S \in \mathcal{S}^\prime$ or there exists a set $S^\prime \in \mathcal{S}^\prime$ such that $S \cap S^\prime \neq \emptyset$. 
	\label{special_3ds_def}
\end{definition}

We use the \colb{$L$-reduction} \cite{PAPADIMITRIOU1991}  to prove that the \spds~is \apx-hard. 
Let $X$ and $Y$ be two optimization problems. 
A polynomial-time computable function $f$ from $X$ to $Y$ is an $L$-reduction if there exist two positive constants $\alpha$ and $\beta$ (usually 1) such that  for each  instance $x$ of $X$ the following two conditions hold:

\begin{description}
    \item[C1:] $OPT(f(x)) \leq \alpha \cdot OPT(x)$ where $OPT(x)$ and $OPT(f(x))$ are the size of the optimal solutions of $x$ and $f(x)$,  respectively. 
    \item[C2:] For any given solution of  $f(x)$ with cost $C_{f(x)}$,  there exists a polynomial-time algorithm that finds a feasible solution of $x$ with cost $C_x$ such that $|C_x -OPT(x)| \leq \beta \cdot |C_{f(x)} - OPT(f(x))|$. 
\end{description}
 
\begin{lemma}\label{special_3ds_apx_hard}
\spds~is \apx-hard.  
\end{lemma}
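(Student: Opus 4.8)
The plan is to prove \apx-hardness of \spds~via an $L$-reduction from a known \apx-hard problem, following the blueprint of Chan and Grant's treatment of \spsc~\cite{Chan2014}. The natural source problem is the minimum dominating set problem on cubic (3-regular) graphs, or equivalently a restricted minimum vertex cover / set cover variant on bounded-degree instances, which is well known to be \apx-hard. Given an instance $x$ of that problem, I would construct the range space $(\mathcal{U}, \mathcal{S})$ of \spds~as follows: for each of the $m$ "triples" guaranteed by condition~2 of Definition~\ref{special_3ds_def}, introduce the six "gadget" elements $b_t^1,\dots,b_t^6$ and the seven prescribed sets $\{a_i,b_t^1\}, \{b_t^1,b_t^2\}, \{b_t^2,b_t^3\}, \{b_t^3,b_t^4,a_j\}, \{b_t^4,b_t^5\}, \{b_t^5,b_t^6\}, \{b_t^6,a_k\}$, where the indices $i<j<k$ encode the $t$-th hyperedge/triple of $x$ (so the $a_\ell$'s play the role of the vertices/ground elements of $x$ and each triple of sets "covers" a ground element). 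One checks directly that this satisfies both structural conditions: every $b_t^s$ lies in exactly two of the seven sets, and each $a_\ell$ lies in exactly two sets provided the source instance is arranged so each ground element appears in exactly two triples — this is where the constraint $3m = 2n$ comes from (counting incidences: $3m$ triples-to-$a$ incidences matched against $2n$). The chain structure of the seven sets is a "path gadget" forcing any dominating sub-collection to pick roughly every third set along the chain, so that the number of gadget sets needed is a fixed linear function of $m$ plus a correction term that equals the size of a dominating set in $x$.

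The two $L$-reduction conditions then need to be verified. For \textbf{C1}, I would argue that $OPT(\spds\text{-instance}) = \Theta(m) = \Theta(n) \le \alpha \cdot OPT(x)$ for a suitable constant $\alpha$; this uses that $OPT(x) = \Omega(n)$ for bounded-degree instances (a dominating set / vertex cover in a cubic graph has size $\Omega(n)$), so the additive $\Theta(m)$ term from the path gadgets is absorbed. For \textbf{C2}, given any feasible dominating sub-collection $\mathcal{S}'$ of the constructed instance, I would first "canonicalize" it in polynomial time: within each path gadget, any dominating selection can be locally modified (without increasing its size) to one of a small number of canonical patterns — essentially one that picks the two "end" sets $\{a_i,b_t^1\}$-type and $\{b_t^6,a_k\}$-type plus the middle set $\{b_t^3,b_t^4,a_j\}$, or a cheaper pattern when the adjacent $a_\ell$ elements are already dominated from outside. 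From this canonical form one reads off which ground elements $a_\ell$ of $x$ are "covered for free" by the gadget choices, and the remaining selections correspond to a dominating set of $x$ of size $C_x$ with $|C_x - OPT(x)| \le |C_{\mathcal{S}'} - OPT(\spds\text{-instance})|$, giving $\beta = 1$.

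The main obstacle I anticipate is the careful case analysis in \textbf{C2}: showing that the local re-routing inside each seven-set path gadget is always possible without increasing the total cost, and that the gadgets interact cleanly — i.e., a set shared between the gadget and the encoding of $x$ (namely the sets $\{a_i, b_t^1\}$, $\{b_t^3, b_t^4, a_j\}$, $\{b_t^6, a_k\}$ touching an $a_\ell$) contributes consistently to the "domination status" of $a_\ell$ across all gadgets containing it. One must rule out pathological solutions that, say, dominate a gadget entirely "from the middle" in a way that doesn't correspond to any dominating set of $x$, or that exploit the fact that domination (unlike covering) allows a set to be "dominated" rather than "selected". Handling this requires exhibiting an explicit finite menu of optimal gadget configurations and a token/charging argument matching gadget cost to $OPT(x)$ up to the fixed additive constant. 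Once this structural lemma is in place, plugging into the definitions of $L$-reduction and invoking the \apx-hardness of the source problem finishes the proof; \apx-hardness is preserved under $L$-reductions \cite{PAPADIMITRIOU1991}.
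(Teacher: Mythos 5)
Your proposal follows essentially the same route as the paper: an $L$-reduction from minimum dominating set on cubic graphs in which each vertex $v_t$ becomes the seven-set path gadget on $b_t^1,\dots,b_t^6$, each edge becomes an element $a_\ell$ shared by the two incident gadgets (whence $3m=2n$), the optimum satisfies $OPT(I_2)=OPT(I_1)+2m$, and $\alpha=9$, $\beta=1$ follow from $OPT(I_1)\ge m/4$. The gadget case analysis you flag as the main obstacle is precisely the step the paper compresses into ``one can easily verify'' (taking $\{a_i,b_t^1\},\{b_t^3,b_t^4,a_j\},\{b_t^6,a_k\}$ for vertices in the dominating set and two interior sets otherwise), so your plan matches the paper's proof in both structure and level of detail.
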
	

\begin{proof}
We prove the lemma by giving an $L$-reduction from an \apx-hard problem, dominating set on cubic graphs \cite{ALIMONTI}. Let $I_1$ be an instance of dominating set problem on a graph $G  = (V, E)$ with $V = \{v_1, v_2, \ldots, v_m\}$  and $E = \{e_1, e_2, \ldots, e_n\}$ such that the degree of every vertex in $V$ is exactly  three.  We now generate an instance $I_2$ of \spds~from $I_1$ as follows:
\begin{enumerate}
	
\item Let $\mathcal{A} = \{a_1, a_2, \ldots, a_n\}$ and $\mathcal{B} = \mathcal{B}^1 \cup \mathcal{B}^2 \cup \cdots \cup \mathcal{B}^6$ where $\mathcal{B}^i = \{b_1^i, b_2^i, \ldots, b_m^i\}$ for $i=1, 2, \ldots, 6$.
\item For a vertex   $v_t$    in $V$ ($1 \leq t \leq m$),  let $e_i, e_j,$ and $e_k$ ($1\leq i < j < k \leq n$) be the edges incident on $v_t$.  Then add seven sets $\{a_i, b_t^1\}$, $\{b_t^1, b_t^2\}$, $\{b_t^2, b_t^3\}$, $\{b_t^3, b_t^4, a_j\}$, $\{b_t^4, b_t^5\}$, $\{b_t^5, b_t^6\}$, and $\{b_t^6, a_k\}$ into $\mathcal{S}$. Do the same for every vertex in $V$. 
	
\end{enumerate}
	
Let $\mathcal{O}(I_1) \subseteq V$  be an optimal dominating set for the instance $I_1$. We now give a polynomial-time algorithm to find an optimal solution  $\mathcal{O}(I_2)$ for instance $I_2$ of the \spds~problem  from $\mathcal{O}(I_1)$. For every vertex $v_t \in V(G)$, do the following:
\begin{enumerate}
\item If $v_t$ is in $\mathcal{O}(I_1)$ then take the sets $\{a_i, b_t^1\}$, $\{b_t^3, a_j, b_t^4\}$, $\{b_t^6, a_k\}$ in $\mathcal{O}(I_2)$. 
		
\item If $v_t$ is not in $\mathcal{O}(I_1)$ then take the sets $\{b_t^2, b_t^3\}$, $\{b_t^4, b_t^5\}$ in $\mathcal{O}(I_2)$. 
\end{enumerate}
	
One can easily verify that  $\mathcal{O}(I_2)$ is an optimal dominating set for $I_2$ and $|\mathcal{O}(I_2)| = |\mathcal{O}(I_1)| + 2 m $.  Since $|\mathcal{O}(I_1)| \geq m/4$  we get $|\mathcal{O}(I_2)| \leq 9 \cdot  |\mathcal{O}(I_1)| $. 
	Similarly, for any given  feasible solution $\mathsf{F}_2 \subseteq \mathsf{S}$ of $I_2$, one can obtain a feasible solution  $\mathsf{F}_1 \subseteq V(G)$ of $I_1$ such that $|\mathsf{F}_1| \leq |\mathsf{F}_2| - 2m$. 
	
	Thus,  we conclude that the above reduction is an $L$-reduction \cite{PAPADIMITRIOU1991}  with $\alpha=9$ and $\beta=1$. Therefore, the \spds~problem is \apx-hard.  
 \end{proof}

\begin{theorem}\label{apx-hardness-theorem}
	The \mdds~problem is \apx-hard for the following classes of geometric objects.
	\begin{description} 
		
		\item[{\bf A1}]  Axis-parallel rectangles in $\mathbb{R}^2$, even when all rectangles have an upper-left corner inside a square with side length $\epsilon$ and lower-right corner inside a square with side length $\epsilon$ for an arbitrary small $\epsilon>0$.
		
		\item[{\bf A2}] Axis-parallel  ellipses in $\mathbb{R}^2$, even when all the ellipses  contain the origin. 
		
		\item[{\bf A3}] Axis-parallel strips in $\mathbb{R}^2$. 
		\item[{\bf A4}]   Axis-parallel rectangles in $\mathbb{R}^2$,  even when every pair of rectangles intersects either zero or four times. 
		\item[{\bf A5}] Downward shadows of  segments in the plane.

		\item[{\bf A6}]  Downward shadows of cubic polynomials in the plane. 
		
		\item[{\bf A7}] Unit ball in $\mathbb{R}^3$, even when the origin is inside every unit ball.
		
		\item[{\bf A8}] Axis-parallel cubes of similar size in $\mathbb{R}^3$ containing a common point.
		
		\item[{\bf A9}] Half-spaces in $\mathbb{R}^4$. 
		
		\item[{\bf A10}] Fat semi-infinite wedges in $\mathbb{R}^2$ with apices near the origin. 
	\end{description}
	\label{apx_hardness_mdds}
\end{theorem}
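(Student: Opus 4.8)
The plan is to establish, for each class $\mathbf{A1}$--$\mathbf{A10}$, a polynomial-time $L$-reduction from the \spds~problem, which is \apx-hard by Lemma~\ref{special_3ds_apx_hard}, to the \mdds~problem restricted to objects of that class; \apx-hardness of \mdds~then follows immediately.

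First I would isolate the generic part of the reduction, which is the same for every class. Given an instance $(\mathcal{U},\mathcal{S})$ of \spds~with $\mathcal{U}=\mathcal{A}\cup\mathcal{B}$ and $|\mathcal{S}|=7m$, I create one point $p_u$ for every element $u\in\mathcal{U}$ and one object $O_S$ of the prescribed class for every set $S\in\mathcal{S}$, arranged so that $O_S$ covers the point $p_u$ if and only if $u\in S$, and $O_S$ covers no other point of $\mathcal{P}:=\{p_u:u\in\mathcal{U}\}$. Writing $\mathcal{R}:=\{O_S:S\in\mathcal{S}\}$, two objects $O_S,O_{S'}$ are then dominators of each other in $(\mathcal{P},\mathcal{R})$ precisely when $S\cap S'\neq\emptyset$. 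Hence $\{O_S:S\in\mathcal{S}'\}$ is a feasible discrete dominating set of $(\mathcal{P},\mathcal{R})$ if and only if $\mathcal{S}'$ is a feasible solution of the \spds~instance, and the two have equal cardinality; so the optima coincide, every feasible \mdds~solution converts into a \spds~solution of the same cost, and conditions $\mathbf{C1}$ and $\mathbf{C2}$ hold with $\alpha=\beta=1$, completing the $L$-reduction.

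The substantive part is producing these geometric realizations, which I would do class by class following Chan and Grant~\cite{Chan2014}. The key observation is that the set system of \spds~has exactly the structure obtained from a cubic graph: for each vertex there is a chain of size-two sets pinned down by a single size-three set and ending at the three $\mathcal{A}$-elements that correspond to the three incident edges, while each $\mathcal{A}$-element is shared by exactly the two chains of its endpoints. This is precisely the incidence pattern the gadgets of \cite{Chan2014} are built to realize, so for each class I would reuse their placement with at most minor re-indexing. Concretely, for axis-parallel rectangles ($\mathbf{A1}$) I would lay the points $p_u$ along a small patch of a line of slope $-1$, ordered so that each set of size two or three is exactly the set of points stabbed by one long thin axis-parallel rectangle whose upper-left corner lies in an $\epsilon$-square and whose lower-right corner lies in another $\epsilon$-square; since every element of $\mathcal{U}$ lies in exactly two sets, at most two such rectangles pass through any point, and the remaining classes ($\mathbf{A2}$--$\mathbf{A10}$: axis-parallel ellipses through the origin, axis-parallel strips, $0$/$4$-crossing rectangles, downward shadows of segments and of cubic polynomials, unit balls through the origin in $\mathbb{R}^3$, similar-size axis-parallel cubes through a common point, half-spaces in $\mathbb{R}^4$, and fat semi-infinite wedges) are handled by transporting the same point--object incidence pattern into their respective gadgets from \cite{Chan2014}, with the special geometric restriction in each item falling out of the gadget automatically.

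The main obstacle I anticipate is verifying \emph{exactness} of these realizations: for all $7m$ sets simultaneously, the object assigned to a set $S$ must cover precisely the points of $S$, since a single spurious point-coverage introduces an extra domination edge that could only decrease the optimum and would destroy the equivalence of the two problems. This amounts to checking, for each class, that the size-two ``link'' objects and the central size-three object along one \spds~chain can be interleaved or nested without cross-contamination, and that the minor differences between the chain structure of \spds~and the \spsc~system of \cite{Chan2014} do not obstruct the layout --- routine given their constructions, but to be carried out chain by chain for each of the ten classes.
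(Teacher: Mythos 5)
Your proposal matches the paper's proof: the paper likewise reduces from the \apx-hard \spds~problem by creating one point per element of $\mathcal{U}$ and one object per set of $\mathcal{S}$ with exact incidence, reusing the class-by-class gadget placements of Chan and Grant \cite{Chan2014} (with the elements of $\mathcal{B}$ ordered so each set covers consecutive points, which is exactly the exactness concern you flag). The only cosmetic difference is that you spell out the $L$-reduction bookkeeping with $\alpha=\beta=1$ explicitly, while the paper leaves that implicit since the encoding preserves solutions bijectively.
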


\begin{proof}
The proof is essentially similar to the results in  \cite{Chan2014}. In the following, for a given instance of the \spds~problem, we give an encoding of the \mdds~problem for each class of objects. Let $(\mathcal{U}, \mathcal{S})$ be a range space where $\mathcal{U} = \mathcal{A} \cup \mathcal{B}$, $\mathcal{A} = \{a_1, a_2, \ldots, a_n\}$, and  $\mathcal{B}  = \mathcal{B}^1 \cup \mathcal{B}^2 \cup \cdots \cup \mathcal{B}^6$, where $\mathcal{B}^i = \{b_1^i, b_2^i, \ldots, b_m^i\}$, for $1 \leq i \leq 6$, with $3m=  2n$. Further, $\mathcal{S}$ is a collection of $7m$ subsets of $\mathcal{U}$ such that
\begin{enumerate}
		
\item Every element in $\mathcal{U}$ is in exactly two sets in $\mathcal{S}$. 
		
\item For every $t$, ($1 \leq t \leq m$), there exist three integers $ 1 \leq i < j < k \leq n$ such that the sets $\{a_i, b_t^1\}$, $\{b_t^1, b_t^2\}$, $\{b_t^2, b_t^3\}$, $\{b_t^3, b_t^4, a_j\}$, $\{b_t^4, b_t^5\}$, $\{b_t^5, b_t^6\}$, and $\{b_t^6, a_k\}$ are in the collection $\mathcal{S}$. 
\end{enumerate}

Similar to the \spsc~problem defined in \cite{Chan2014}, in the \spds~problem, we order the elements in $\mathcal{B}$ such that every set in $\mathcal{S}$ contains either two consecutive elements from $\mathcal{B}$, or one element from $\mathcal{A}$ and one element from $\mathcal{B}$, or one element from $\mathcal{A}$ and two consecutive elements from $\mathcal{B}$. In particular, in the below embedding, for every $t$ ($1\leq t \leq m$), the six points  $b_t^1, b_t^2, b_t^3, b_t^4, b_t^5, \text{ and } b_t^6$ are together in the same order. Further, similar to the \spsc~problem, in any instance of the \spds~problem,  every element in the ground set $\mathcal{U}$ is present in exactly two sets in $\mathcal{S}$ and a set in $\mathcal{S}$ contains at most three elements from $\mathcal{U}$.  


In the embedding, we consider a point for each element in $\mathcal{U}$, and for each set in $\mathcal{S}$,  we consider a geometric object.  Below, we explain the embedding for each class of objects. 

\begin{description}

\item[- \textbf{(A1):}] We place all the points  in $\mathcal{A}$, in the order $a_1, a_2, \ldots, a_n$, on a segment $\{(x, x-2) \mid x \in [1, 1+\epsilon] \}$ (for a small $\epsilon >0$). Further, place all the points in $\mathcal{B}$ on the segment $\{(x, x+2) \mid x \in [-1, -1+\epsilon] \}$. We note that the sets in $\mathcal{S}$ can be encoded as fat rectangles covering the respective points, as shown in Fig.  \ref{fig-a1}. 

\item[- \textbf{(A2):}] This embedding is similar to the case (A1), except that here each set in $\mathcal{S}$ is encoded as a fat ellipse as opposed to a fat rectangle in (A1).

\item[- \textbf{(A3):}] We place the points in $\mathcal{A}$ on a horizontal line in the order $a_1, a_2, \ldots, a_n$ with the sufficient gap between any two consecutive points as shown in Fig.  \ref{fig-a3}. Recall that each $a_i$ ($i=1,2,\ldots, n$) is contained in exactly two sets in $\mathcal{S}$. If $\{a_i, b_t^1 \}$ is the first set containing $a_i$ (the other case is symmetric), then place $b_t^1$ slightly to the left of $a_i$ and also place $b_t^2$ slightly to the left of  $b_t^1$ such that a vertical strip covers $a_i$ and $b_t^1$, and a horizontal strip covers $b_t^1$ and $b_t^2$. The position of $b_t^3$ (resp. $b_t^6$)  depends on whether the set $\{a_j, b_t^3, b_t^4\}$ (resp. $\{a_k, b_t^6\}$ is either the first or the second set that contains  $a_j$ (resp. $a_k$). Further, $b_t^4$ is vertically below $b_t^3$, and a vertical strip is placed to cover $a_j, b_t^3, \text{ and } b_t^4$. Also, we place a vertical strip to cover $a_k$ and $b_t^6$. The point $b_t^5$ is placed such that a horizontal strip covers the points $b_t^4$ and $b_t^5$ and points $b_t^5$, and a horizontal strip also covers $b_t^6$. 

\item[- \textbf{(A4):}] This is similar to the case (A3), with strips replaced by thin rectangles such that each pair of rectangles either intersect exactly zero or four times. See Fig.  \ref{fig-a4}.

\item
\textbf{(A5):} We place the points in the set $\mathcal{A}$ on the ray $\{(x, -x) : x >0 \}$, in the order $a_1, a_2, \ldots, a_n$,  and place the points in the set $\mathcal{B}$ on the ray $\{(x, x) : x <0 \}$. For each set in $\mathcal{S}$, we place a downward shadow of a segment that covers the corresponding points,  as shown in Fig.  \ref{fig-a5}. 

\item[- \textbf{(A6):}] This embedding is similar to the case (A5). We place all points in $\mathcal{A}$ on a segment $l_1 = \{(z, z) \mid z \in [-1, -1+\epsilon]$, in the order $a_1, a_2, \ldots, a_n$,  and the points in $\mathcal{B}$ are placed on a segment $l_2  = \{(z, 0) \mid z \in [1.5, 1.5+\epsilon] \}$. For any given $(a, a) \in l_1$ and $(b, 0) \in l_2$, the function $f(x) = (x-b)^2[(a+b)x-2a^2] / (b-a)^3$ is tangent to $l_1$ at $x=a$ and tangent to $l_2$ at $x=b$. Thus, the sets of size two in $\mathcal{S}$ can be encoded as cubic polynomials tangent to $l_1$ and $l_2$ at respective points. Further, the sets of size three, $\{a_j, b_t^3, b_t^4 \} \in \mathcal{S}$  can also be encoded as cubic polynomials by considering the cubic polynomial tangent to $l_1$ at $a_j$ and tangent to $l_2$ at $b_t^3$, and slightly shift it upward such that it covers $b_t^4$ also (placing $b_t^3$ and $b_t^4$ sufficiently close).

\item[- \textbf{(A7):}] Place the points in $\mathcal{A}$ and $\mathcal{B}$ on circular arcs $arc_\mathcal{A} = \{(x, y, 0) \mid x^2+y^2=1, x,y \geq 0   \}$ and $arc_\mathcal{B} = \{(0, 0, z) \mid z \in [1-2\epsilon, 1-\epsilon]\}$, respectively.  The sets in $\mathcal{S}$ can be encoded as unit balls in $\mathbb{R}^3$ (see \cite{Chan2014} for full details).

\item[- \textbf{(A8):}] The embedding is similar to (A1). The points in $\mathcal{A}$ are placed on a  segment $l_1 = \{(x, x, 0) \mid x \in (0, 1)\}$ and the points in $\mathcal{B}$ are placed on a  segment $l_2 = \{(0, 3-x, x) \mid x \in (0, 1) \}$.  For any point $p=(x, x, 0) \in l_1$ and any point $q=(0, 3-y, y) \in l_2$, the cube $[-3+y+2x, x] \times [x, 3-y] \times [-3+x+2y, y]$ is  tangent to $l_1$ at $p$ and  tangent to $l_2$ at $q$, and further contains $(0, 1, 0)$. For the sets, $\{a_j, b_t^3, b_t^4\}$ of size three, we can consider the cube that is a tangent to $l_1$ at $a_j$ and a tangent to $l_2$ at $b_t^3$. Further, we can place $b_t^3$ and $b_t^4$ sufficiently close such that the cube covers both points.

\item[- \textbf{(A9):}] This follows from (A7) by using the standard lifting transformation, given in \cite{MarkdeBerg2008}, which maps a point $(x, y, z) \in \mathbb{R}^3$ to a point $(x, y, z, x^2+y^2+z^2) \in \mathbb{R}^4$ and a ball $(x, y, z)$ with $(x-a)^2+(y-b)^2+(z-c)^2 \leq r^2$ to a half-space $(x, y, z, w)$ with $w-2ax-2by+2cz \leq r^2-a^2-b^2-c^2$.

\item[- \textbf{(A10):}] Place the points in $\mathcal{A}$ on the circular arc $arc_\mathcal{A} = \{(\cos t, \sin t): t \in (0, \epsilon)  \}$ and the points in $\mathcal{B}$ on the circular arc $arc_\mathcal{B} = \{\cos t, 2-\sin t): t \in (0, \epsilon)\}$   The sets in $\mathcal{S}$ can be encoded as fat semi-infinite wedges in $\mathbb{R}^2$ (see \cite{Chan2014} for the full details). 

\end{description}

\end{proof}

\begin{figure}[ht!]
\begin{center}
{\subfigure[ ]{\includegraphics[scale=.54]{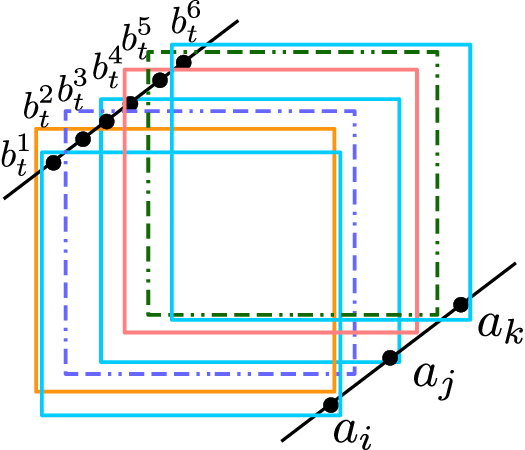}
\label{fig-a1}
}}
{\subfigure[ ]{\includegraphics[scale=.41]{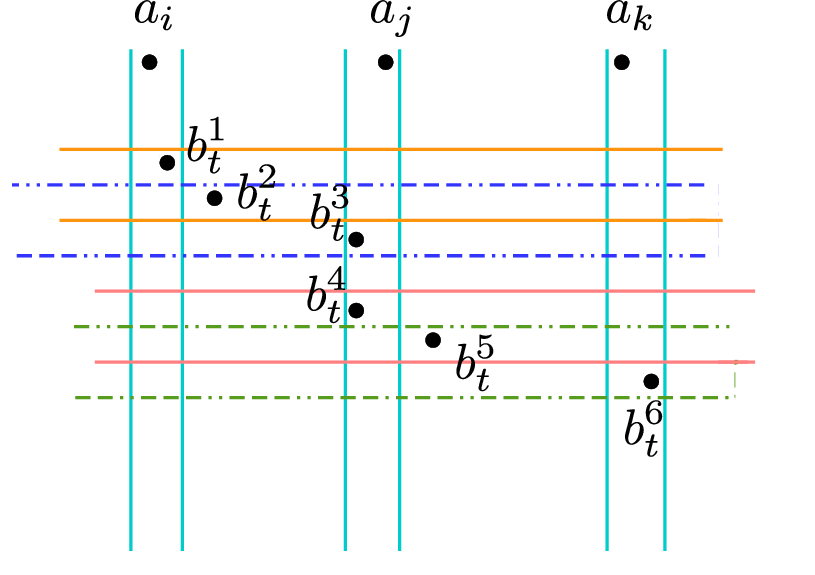}
\label{fig-a3}
}}
{\subfigure[ ]{\includegraphics[scale=.42]{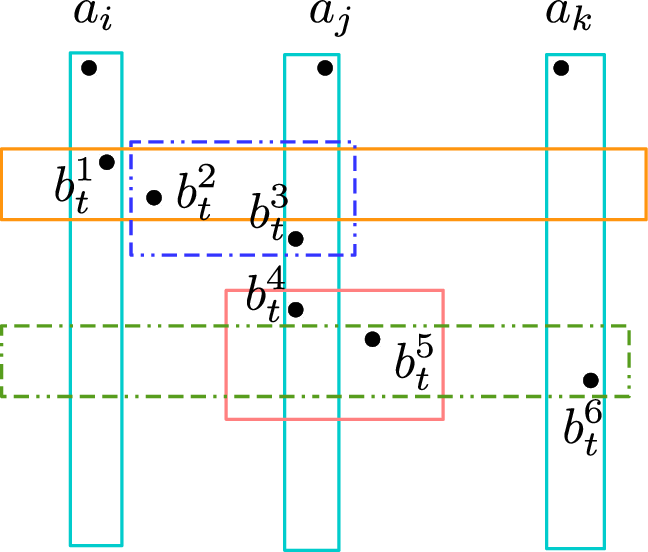}
\label{fig-a4}
}}
{\subfigure[ ]{\includegraphics[scale=.35]{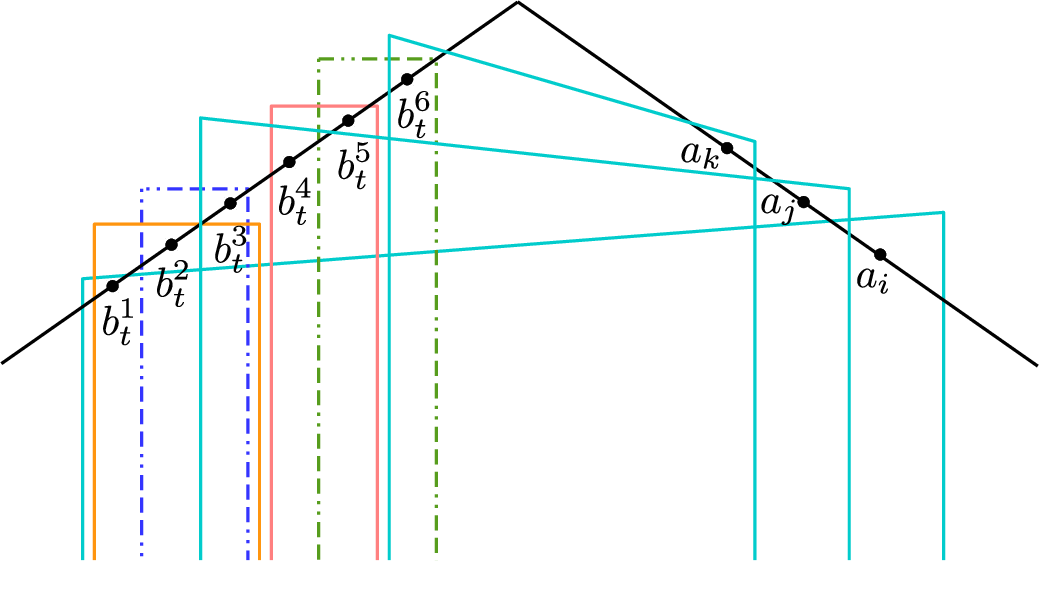}
\label{fig-a5}
}}
\end{center}
\caption{Encoding of \spds~instance into the \mdds~problem instances with various classes of geometric objects. (a) Class $\textbf{A1}$ (b) Class $\textbf{A3}$ (c) Class $\textbf{A4}$ (d) Class $\textbf{A5}$}
\label{encoding}
\end{figure}

We now present some additional \apx-hardness results.

\begin{theorem}
Both \mdis~and \mdds~problems are \apx-hard for the classes of objects $(i)$ fat triangles of similar size and $(ii)$ similar circles. 
\label{more_apx_results}
\end{theorem}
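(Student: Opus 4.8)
The plan is to prove both statements by direct $L$-reductions into the \emph{discrete} problems: the generic bound that \mdis~(resp.\ \mdds) is at least as hard as \mis~(resp.\ \mds) does not help for such restricted families, so we must argue directly. For the \mdds~part I would reduce from \spds, which is \apx-hard by Lemma~\ref{special_3ds_apx_hard}; for the \mdis~part I would reduce from the analogous \apx-hard ``special'' independent-set problem on set systems in the style of Chan and Grant~\cite{Chan2014} (a ground set with each element in exactly two sets, every set of size at most three, and the $\mathcal{B}$-elements arranged in consecutive blocks). In both cases the combinatorial skeleton of the reduction is exactly the one used in Theorem~\ref{apx-hardness-theorem}: place one point per ground-set element and one object per set, so that an object covers precisely the points of its set and the incidence structure contains no spurious coincidences. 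The approximation-preserving constants $\alpha,\beta$ are then inherited verbatim from Lemma~\ref{special_3ds_apx_hard} (resp.\ from its independent-set analogue), so the whole task reduces to exhibiting the geometric encodings for the two new object families.

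For \textbf{(i) fat triangles of similar size}, the key observation is that a fat triangle is simply a truncated fat semi-infinite wedge. Starting from the encoding of class \textbf{A10} --- all points placed on two tiny circular arcs near the origin, every set realized by a fat semi-infinite wedge with apex near the origin --- I would truncate every wedge by the circle of radius $R$ about the origin for a sufficiently large constant $R$. Since all encoded points lie within a bounded neighborhood of the origin, the truncation changes no point--object incidence; meanwhile each truncated wedge now has diameter $\Theta(R)$ with aspect ratio bounded by the fatness of the original wedge, so after a single uniform rescaling all objects become fat triangles whose diameters lie in $(2-\delta,2]$ for the prescribed $\delta>0$. The identical instance serves both \mdis~and \mdds, since only point--object incidences matter in either problem, which gives \apx-hardness for fat triangles of similar size.

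For \textbf{(ii) similar circles} --- where a ``circle'' is its bounding curve and two circles conflict when they both pass through a common point of $\mathcal{P}$ --- I would adapt the two-arc layout used for classes \textbf{A2} and \textbf{A7}: place the points of $\mathcal{A}$ on a short circular arc, the points of $\mathcal{B}$ on a second short arc, keeping the six points $b_t^1,\dots,b_t^6$ of each block clustered in their natural order. For each set $S\in\mathcal{S}$ of size two there is a one-parameter family of circles through its two points, from which I would select one of radius in, say, $[1,1+\epsilon]$; each set $\{a_j,b_t^3,b_t^4\}$ of size three determines a unique circle, and by choosing the (few degrees of freedom in the) positions of its three points inside the arcs one can force its circumradius into the same range. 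A generic perturbation of the whole point set then guarantees that no chosen circle passes through an unintended point of $\mathcal{P}$. All resulting circles have radii within a $(1+\epsilon)$ factor of one another, hence are similar, and again the same instance yields both the \mdis~and \mdds~lower bounds.

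The main obstacle is the geometric realizability under the strong size and fatness constraints: one must verify that the rigid combinatorial structure of \spds (and of its independent-set analogue) --- sets of size at most three, each element in exactly two sets, the consecutive-block ordering of $\mathcal{B}$ --- can be drawn with triangles that are simultaneously fat \emph{and} of near-equal diameter, and with circles of near-equal radius, while avoiding \emph{every} spurious point--object incidence. The wedge-truncation trick disposes of the triangle case cleanly; for circles the delicate point is the general-position argument showing that a suitable perturbation realizes exactly the prescribed incidences and nothing more. Once this is in place the constants of Lemma~\ref{special_3ds_apx_hard} (resp.\ of the independent-set analogue) give the $L$-reductions and the theorem follows.
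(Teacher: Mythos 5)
Your overall strategy differs from the paper's: the paper does not route this theorem through \spds~or any special set-system problem. Following Har-Peled \cite{HarPeled2009}, it reduces directly from independent set (resp.\ dominating set) on cubic graphs \cite{ALIMONTI}: each edge becomes a point, each vertex becomes the object spanned by the points of its three incident edges, and a $4$-edge-coloring (Vizing's theorem) is used to place the edge-points in four tiny clusters where a unit circle meets the coordinate axes. Every object is then the convex hull (a triangle) or the circumcircle of three points lying in three distinct, far-separated clusters on that unit circle, so fatness, diameter close to $2$, and circumradius close to $1$ all come for free; spurious incidences are excluded because a point of the circle not among the three chosen vertices lies strictly outside the inscribed triangle (resp.\ because a small perturbation leaves no four points concyclic). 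Your route is not just different --- two of its geometric claims are unestablished, and one is false as stated.

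First, truncating every wedge of the \textbf{A10} construction by a single circle of radius $R$ and then rescaling uniformly does not yield triangles with diameters in $(2-\delta,2]$: truncation by a circle produces a circular sector rather than a triangle, and even after truncating by chords the diameters of the resulting triangles differ by a multiplicative constant governed by the spread of the apex angles, which no single rescaling can compress into an interval of ratio $2/(2-\delta)$ for small $\delta$. (This is repairable by truncating each wedge at its own distance, but that step is missing.) Second, and more seriously, in the similar-circles case the circumcircle of a size-three set $\{a_j,b_t^3,b_t^4\}$ with $b_t^3,b_t^4$ placed close together on one arc tends to the circle through $a_j$ tangent to that arc at $b_t^3$; its radius is essentially dictated by the mutual position of the two arcs, not by any remaining freedom in the point placement (which is already constrained by the other sets through those elements), and for a generic pair of arcs it will not land in $[1,1+\epsilon]$. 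Forcing all these circumradii to be near $1$ essentially requires the arcs to lie on a common unit circle --- at which point you have reconstructed the paper's construction. So the circle half of your argument has a genuine gap at exactly the point you flag as ``delicate,'' and the triangle half needs the per-object truncation fix.
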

\begin{proof}
The proof is on similar lines to the results of Har-Peled \cite{HarPeled2009}, who showed that the set cover problem  is \apx-hard for rectangles of similar size and similar size circles by giving a reduction from a known \apx-hard problem, the vertex cover problem on cubic graphs \cite{ALIMONTI}.  We first show that the \mdis~problem is \apx-hard for both classes of objects; similar size triangles and similar size circles.

\noindent {\bf The \mdis~problem on similar size fat triangles:}
Let $G= (V, E)$ be a cubic graph. It is known that the independent set problem is \apx-hard on cubic graphs \cite{ALIMONTI}. We now construct an instance of the independent set problem on set systems with range space $\mathcal{X} = (U, \mathcal{S})$. Here, the ground set $U$ contains an element for each edge in the graph $G$, and $\mathcal{S}$ is a collection of $|V|$ subsets of $U$ for each vertex $v$ in $V(G)$, the set $\mathcal{S}$ contains a set $S_v = \{e \mid v \text{ is incident to } e   \text{ and } e \in E(G) \}$. We note that for any independent set of size $t$ for the set system $\mathcal{X} = (U, \mathcal{S})$, there is an independent set for the graph $G$ of the same size $t$. It is known that a graph $G$ with degree 3 is $4$ edge colorable (Vizing's theorem) \cite{BondyMurty}. Thus, one can color the elements in the set $U$ by using four colors such that all the elements in each set $S_v \in \mathcal{S}$ have been assigned a different color. Let $1, 2, 3$, and $4$ be the colors used to color the elements in $U$. Further, for each $i=1, 2, 3, 4$,  let $U_i \subseteq U$ be the set of elements having color $i$. Note that all the sets $U_1, U_2, U_3$, and $U_4$ are pairwise disjoint.

Let $\mathcal{C}$ be the unit radius circle with the center at the origin of the plane. Consider the small circular intervals on the boundary of $\mathcal{C}$ at the intersection with $x$- and $y$- axes. We place the points for the elements in sets $U_1, U_2, U_3$, and $U_4$ at the circular intervals obtained above, one set of points per circular interval. Finally, for each set  $S \in \mathcal{S}$, we consider the convex hull of the points corresponding to the elements in $S$, and the convex hull represents a triangle $T_S$. Here, we note that all such rectangles have similar sizes since these rectangles represent the convex hull of three points such that each point is in different circular intervals defined above.   This gives an encoding of the independent set for $\mathcal{X} = (U, \mathcal{S})$ to the instance of \mdis~problem with similar size triangles. Hence, we conclude that the \mdis~problem is \apx-hard for similar size triangles.  

\noindent {\bf The \mdis~problem on similar circles:} For this case, we slightly perturb the above point-set so that no four points are co-circular. Now, for each set $S \in \mathcal{S}$, we take a circle that passes through the corresponding 3 points. This gives an embedding of the \mdis~problem with similar circles from the independent set problem with set system $\mathcal{X} = (U, \mathcal{S})$. Thus, we conclude that the \mdis~problem is \apx-hard with similar circles.  

Similar reductions of the \mdis~problem for similar size triangles and similar circles lead to the \apx-hardness results of the \mdds~problem for the same classes of objects. However, instead of the maximum independent set problem on cubic graphs, we use the minimum dominating set problem on cubic graphs that are known to be \apx-hard \cite{ALIMONTI}.   
\end{proof}

\section{\texorpdfstring{\boldmath{\np}}~-hardness Results} \label{np_hard_mdis}

In this section, we show that both  \mdis~and \mdds~problems are \np-hard for the following two classes of geometric objects:

\begin{description}
    \item[{\bf B1:}] Unit disks intersecting a horizontal line.  
    \item[{\bf B2:}] Axis-parallel unit squares intersecting a straight  line with slope $-1$. 
\end{description}

For {\bf B1}, the reduction is similar to the reduction of covering points by unit disks where the points and disk centers are constrained to be inside a horizontal strip (the \colb{within strip discrete unit disk cover (WSDUDC)} problem) \cite{Fraser2017}.  On the other hand, for {\bf B2}, the reduction is similar to the reduction of the set cover problem with unit squares where the squares intersect a line with slope $-1$ \cite{Mudgal2015}. For the \mdis~problem, we give a reduction from the known \np-hard problem \textit{maximum independent set}  on planar graphs where the degree of each vertex of the graph is at most 3 (\colb{\misp}~problem) \cite{Garey1977} and for the \mdds~problem  we give a reduction from the \np-hard problem \textit{minimum dominating set}  on planar graphs such that every vertex is of degree at most 3 (\colb{\mdsp}~problem) \cite{Garey1977}.
For the correctness of the reductions, we use the following lemmas.  
\begin{lemma} \label{lemma-mdis-sq-int-diag-np-hard} 
Let $G$ be a graph and $e$ be an edge of $G$, then replacing $e$ by a path with new $2k$ dummy vertices of degree 2 each increases the size of any maximum independent set in $G$ by exactly $k$.
\end{lemma}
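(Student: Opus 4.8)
The plan is to analyze how subdividing an edge $e = (u,v)$ into a path of length $2k+1$ (equivalently, inserting $2k$ new degree-$2$ vertices $w_1, w_2, \ldots, w_{2k}$ so that the path reads $u - w_1 - w_2 - \cdots - w_{2k} - v$) changes the independence number. Let $G$ be the original graph and $G'$ the subdivided graph. I would prove the two inequalities $\alpha(G') \geq \alpha(G) + k$ and $\alpha(G') \leq \alpha(G) + k$ separately, where $\alpha$ denotes the size of a maximum independent set.

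For the lower bound $\alpha(G') \geq \alpha(G) + k$, I would start from a maximum independent set $I$ of $G$ and extend it into $G'$. The new path on $\{w_1, \ldots, w_{2k}\}$ together with the attachment points $u, v$ is a path on $2k+2$ vertices; depending on whether $u$ and/or $v$ lie in $I$, I pick $k$ of the $w_i$'s forming an independent set that also avoids any chosen endpoint. Concretely, if $u \notin I$ choose $w_1, w_3, \ldots, w_{2k-1}$; if $u \in I$ but $v \notin I$ choose $w_2, w_4, \ldots, w_{2k}$; and if both $u, v \in I$ choose $w_2, w_4, \ldots, w_{2k-2}$ together with... — here one must be slightly careful: if both endpoints are in $I$, then among $w_1,\dots,w_{2k}$ the largest independent set avoiding both neighbors $w_1$ and $w_{2k}$ has size $k-1$ on the subpath $w_2,\dots,w_{2k-1}$, which is only $k-1$, not $k$. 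The fix is the standard observation that when both $u,v \in I$ we can instead drop $u$ from the extension and regain a vertex: more cleanly, I would argue that in $G'$ there is always an independent set of the form $(I \setminus \{u\}) \cup \{w_1, w_3, \ldots, w_{2k-1}\} \cup (\text{possibly } u)$, checking in each of the three endpoint cases that the resulting set is independent and has size $|I| + k$. This case analysis on the membership of $u$ and $v$ is the routine but delicate part.

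For the upper bound $\alpha(G') \leq \alpha(G) + k$, I would take a maximum independent set $I'$ of $G'$ and project it back to $G$. Let $j = |I' \cap \{w_1,\ldots,w_{2k}\}|$. Since $w_1,\ldots,w_{2k}$ induce a path on $2k$ vertices, we have $j \leq k$. Restricting $I'$ to $V(G) = V(G') \setminus \{w_1,\ldots,w_{2k}\}$ gives a set $I = I' \cap V(G)$ which is independent in $G$ except possibly for the single edge $uv$ (the only edge of $G$ not present in $G'$); if both $u,v \in I$ I remove one of them, losing at most one vertex. A short counting argument then shows that whenever $j = k$ (the path contributes the maximum), the endpoints $u,v$ cannot both be in $I'$ — because a size-$k$ independent set in the $2k$-vertex path $w_1\cdots w_{2k}$ must contain at least one of $w_1, w_{2k}$, forcing the corresponding endpoint out of $I'$ — so no vertex is lost in that case, giving $|I| \geq |I'| - j \geq |I'| - k$, hence $\alpha(G) \geq \alpha(G') - k$. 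When $j \leq k-1$ one can afford to lose the at-most-one vertex from breaking edge $uv$. Combining the two inequalities yields $\alpha(G') = \alpha(G) + k$, and the lemma follows by induction / by applying this to the single edge $e$.

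The main obstacle is purely bookkeeping: getting the endpoint case analysis exactly right in both directions, in particular the interaction between "both endpoints selected" and "the path contributes a full $k$ vertices," which cannot happen simultaneously and is precisely what makes the bound tight. Everything else (independence of a path, $\alpha(P_n) = \lceil n/2 \rceil$) is elementary.
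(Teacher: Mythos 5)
The paper states this lemma without proof --- it is used as a standard fact about edge subdivision in the style of the WSDUDC reduction of Fraser and L\'opez-Ortiz --- so there is no in-paper argument to compare against. Your two-inequality plan is the standard way to prove it, and your upper-bound direction is fully correct: the observation that a size-$k$ independent set in the path $w_1\cdots w_{2k}$ must contain $w_1$ or $w_{2k}$ (else it would be an independent set of size $k$ in a path on $2k-2$ vertices) is exactly what shows that when the path contributes its full $k$ vertices, at most one endpoint of $e$ survives into $I'$, so no vertex is lost when projecting back to $G$.

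The one wrinkle is in your lower bound. The case you flag as delicate --- both $u$ and $v$ lying in the maximum independent set $I$ of $G$ --- cannot occur at all, because $e=uv$ is an edge of $G$ and $I$ is independent in $G$. Noticing this collapses the case analysis to two cases (neither endpoint in $I$, or exactly one endpoint in $I$), each of which you handle correctly and each of which yields an independent set of size $|I|+k$ in $G'$. As written, however, your proposed ``fix'' for the vacuous case does not deliver the claimed count: $(I\setminus\{u\})\cup\{w_1,w_3,\ldots,w_{2k-1}\}$ has size $|I|+k-1$, and $u$ cannot be added back since it is adjacent to $w_1$; a path on $2k$ internal vertices has no independent set of size $k$ avoiding both $w_1$ and $w_{2k}$. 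So if that case could arise, the lower bound would genuinely fail there --- the argument is saved only by the observation that it cannot. Add that one sentence and the proof is complete.
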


\begin{lemma} \label{lemma-mdds-sq-int-diag-np-hard} 
Let $G$ be a graph and $e$ be an edge of $G$, then replacing $e$ by a path with new $3k$ dummy vertices of degree 2 each increases the size of any minimum dominating set in $G$ by exactly $k$.
\end{lemma}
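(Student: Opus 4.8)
The plan is to show that $\gamma(H)=\gamma(G)+k$, where $\gamma(\cdot)$ denotes the domination number and $H$ is obtained from $G$ by replacing the edge $e=uv$ with a path $u,x_1,\dots,x_{3k},v$. First I would reduce to the case $k=1$: replacing $e$ by a path on $3k$ new internal vertices is the same as performing, $k$ times, the operation ``subdivide an edge of the current $u$--$v$ path by inserting three new degree-$2$ vertices''. Hence it suffices to prove that one such operation changes $\gamma$ by exactly $1$, and then induct on $k$. So from now on write the inserted path as $u-a-b-c-v$ with $a,b,c$ the three new vertices.

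For the upper bound $\gamma(H)\le\gamma(G)+1$, I would take a minimum dominating set $D$ of $G$ and extend it. Every vertex of $G$ other than $u,v$ has the same closed neighbourhood in $G$ and in $H$, so $D$ still dominates $V(G)\setminus\{u,v\}$ inside $H$. If $D\cap\{u,v\}=\emptyset$, then $u$ and $v$ are dominated in $G$ by neighbours other than each other, hence still dominated in $H$; adding $b$ (which dominates $a,b,c$) yields a dominating set of $H$ of size $|D|+1$. If, say, $u\in D$, then $u$ dominates $a$ in $H$, and adding $c$ (which dominates $b,c,v$) again gives a dominating set of size $|D|+1$.

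For the lower bound $\gamma(H)\ge\gamma(G)+1$ I would start from a minimum dominating set $D'$ of $H$ chosen, among all minimum dominating sets, to maximize $|D'\cap\{u,v\}|$. Since $N_H[b]=\{a,b,c\}$, the set $D'$ meets $\{a,b,c\}$; write $j=|D'\cap\{a,b,c\}|\ge 1$. Let $D=(D'\setminus\{a,b,c\})\cup Z$, where $Z\subseteq\{u,v\}$ contains $u$ exactly when $D'\setminus\{a,b,c\}$ fails to dominate $u$ in $G$, and likewise for $v$; this $D$ dominates $G$ (vertices other than $u,v$ keep a dominator, since $a,b,c$ can dominate only $u$ or $v$). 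The point is that $|D|\le|D'|-1$, i.e.\ $j\ge|Z|$. One checks that $u\in Z$ forces $a\in D'$ and $v\notin D'$, and symmetrically $v\in Z$ forces $c\in D'$ and $u\notin D'$. Thus $Z=\{u,v\}$ would give $\{a,c\}\subseteq D'$ and $u,v\notin D'$; but then $(D'\setminus\{a\})\cup\{u\}$ is again a minimum dominating set ($b$ stays dominated by $c$, and $a$ by $u$) with strictly larger intersection with $\{u,v\}$, contradicting the choice of $D'$. So $|Z|\le 1$. If $|Z|=0$ then $|D|=|D'|-j\le|D'|-1$. If $|Z|=1$, say $Z=\{u\}$, then $a\in D'$, $v\notin D'$; were $j=1$ we would have $b,c\notin D'$, forcing $v\in D'$ to dominate $c$ — a contradiction — so $j\ge 2=|Z|+1$, and again $|D|\le|D'|-1$. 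Hence $\gamma(G)\le\gamma(H)-1$, completing the $k=1$ case; induction then gives $\gamma(H)=\gamma(G)+k$.

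I expect the lower-bound case analysis to be the main obstacle, in particular avoiding the apparently wasteful situation where both $u$ and $v$ must be re-added after deleting only few path vertices; this is exactly what the extremal choice of $D'$ (and the observation that an ``unused'' side of the path would otherwise have to be dominated all the way from its far endpoint) is designed to rule out. The analogue for Lemma~\ref{lemma-mdis-sq-int-diag-np-hard} is easier and uses the same ``subdivide and induct'' strategy, but with blocks of two new vertices and the elementary fact that a maximum independent set of a path $P_{2k+2}$ restricted to the internal vertices contributes exactly $k$.
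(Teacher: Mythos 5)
Your argument is correct, and it is worth noting that the paper itself states Lemma~\ref{lemma-mdds-sq-int-diag-np-hard} (and likewise Lemma~\ref{lemma-mdis-sq-int-diag-np-hard}) with no proof at all, so there is nothing in the text to compare against; your write-up supplies exactly the missing justification. The reduction to $k=1$ by iterated subdivision is sound (each intermediate graph is again an instance of the lemma with $k=1$ applied to an edge of the current $u$--$v$ path), the upper bound $\gamma(H)\le\gamma(G)+1$ is handled correctly in both cases, and the lower bound is where the real content lies: your extremal choice of $D'$ maximizing $|D'\cap\{u,v\}|$, together with the observations that $u\in Z$ forces $a\in D'$ and $v\notin D'$ (and symmetrically), correctly rules out $Z=\{u,v\}$ via the swap $(D'\setminus\{a\})\cup\{u\}$, and the $|Z|=1$ subcase correctly extracts $j\ge 2$ from the fact that $c$ must be dominated within $\{b,c,v\}$. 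All the bookkeeping ($Z\cap D'=\emptyset$, so $|D|=|D'|-j+|Z|$) checks out, so $\gamma(H)=\gamma(G)+k$ follows. The only stylistic remark is that the final aside about the independent-set analogue is not needed for this lemma.
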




We now prove the following theorem.

\begin{theorem}
    Both the \mdis~and \mdds~problems are \np-hard for both $\textbf{B1}$ and $\textbf{B2}$ classes of  objects. 
\end{theorem}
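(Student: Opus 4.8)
The plan is to obtain all four statements from one reduction template: reduce \misp~to \mdis~and \mdsp~to \mdds, in each case simultaneously for the object classes \textbf{B1} and \textbf{B2}. Given an instance $G=(V,E)$ of \misp~or \mdsp~(a planar graph of maximum degree $3$), I would build a geometric instance $(\mathcal{P},\mathcal{R})$ whose \emph{conflict graph} (the graph on the vertex set $\mathcal{R}$ in which two objects are adjacent exactly when they cover a common point of $\mathcal{P}$) is a prescribed subdivision $G'$ of $G$. A set $\mathcal{R}'\subseteq\mathcal{R}$ is a feasible discrete independent set (resp.\ discrete dominating set) iff $\mathcal{R}'$ is an independent set (resp.\ dominating set) of this conflict graph, so the optimal values of \mdis~and \mdds~on $(\mathcal{P},\mathcal{R})$ equal $\alpha(G')$ and $\gamma(G')$; Lemmas~\ref{lemma-mdis-sq-int-diag-np-hard} and~\ref{lemma-mdds-sq-int-diag-np-hard} then tie these back to $\alpha(G)$ and $\gamma(G)$.

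I would describe the construction for \textbf{B2} first, the case \textbf{B1} being identical with unit disks in place of axis-parallel unit squares and a horizontal line in place of the line of slope $-1$; the geometric layout follows Mudgal and Pandit~\cite{Mudgal2015} (and, for \textbf{B1}, Fraser and L\'opez-Ortiz~\cite{Fraser2017}). Fix a drawing of $G$ inside a bounded-width strip around the line so that every vertex occupies a small region and every edge is a curve routed inside the strip; edges may cross, which is harmless because two objects will be in conflict only at a point of $\mathcal{P}$ that we place deliberately in their common region. Replace each vertex $v$ by a unit square $S_v$ meeting the line together with a constant-size junction gadget, arranged so that $S_v$ shares a distinct point of $\mathcal{P}$ with the first square of each of the at most three chains coming from the edges incident to $v$, while those first squares are pairwise conflict-free. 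Replace each edge $e=uv$ by a chain of unit squares $S_e^1,\dots,S_e^{\ell_e}$ snaking along the route of $e$ inside the strip, placing one point of $\mathcal{P}$ in every $S_e^i\cap S_e^{i+1}$ and one in each of $S_u\cap S_e^1$ and $S_e^{\ell_e}\cap S_v$, all points put in general position so that each point of $\mathcal{P}$ lies in exactly the two squares it was created for. If for \mdis~we make every $\ell_e$ even and for \mdds~we make every $\ell_e$ a multiple of $3$, the conflict graph becomes the subdivision $G'$ in which edge $e$ is replaced by a path with $2k_e$ new degree-$2$ vertices (respectively $3k_e$ such vertices), where $K:=\sum_e k_e$ is a fixed integer read off from the construction.

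Closing the reduction: by Lemma~\ref{lemma-mdis-sq-int-diag-np-hard}, $\alpha(G')=\alpha(G)+K$, so a maximum discrete independent set of $(\mathcal{P},\mathcal{R})$ has size $\alpha(G)+K$, and since $K$ is computable in polynomial time the map $G\mapsto(\mathcal{P},\mathcal{R})$ is a polynomial-time many-one reduction from \misp~to \mdis; hence \mdis~is \np-hard for \textbf{B1} and \textbf{B2}. Using \mdsp~as the source together with the $3k_e$-subdivision, Lemma~\ref{lemma-mdds-sq-int-diag-np-hard} gives $\gamma(G')=\gamma(G)+K$ and the same argument shows \mdds~is \np-hard for \textbf{B1} and \textbf{B2}. (The junction gadgets, being of constant size and of fixed behaviour, change both optima only by an additive constant, which I would simply absorb into $K$.)

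\textbf{Main obstacle.} The work is purely geometric, and that is where I expect the difficulty to lie: one must actually realize, using unit squares (resp.\ unit disks) that all meet the line, (i) a correct degree-$3$ junction gadget that produces exactly the three intended conflicts and no others; (ii) a routing of every chain inside the bounded-width strip that keeps each created point of $\mathcal{P}$ private to its two squares even where many chains run close together or cross; and (iii) enough freedom in the chain lengths --- ideally, the ability to lengthen a chain by a single square via a local detour --- so that for every edge a valid length of the required residue ($\ell_e$ even for \mdis, $\ell_e$ divisible by $3$ for \mdds) exists. Items (i) and (ii) are exactly the gadgets engineered in~\cite{Mudgal2015,Fraser2017} for the set cover version; the remaining task is to check that those gadgets still behave correctly in the packing and domination settings, after which correctness follows mechanically from the two subdivision lemmas.
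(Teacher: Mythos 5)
Your proposal follows essentially the same route as the paper: both reduce \misp~to \mdis~and \mdsp~to \mdds, realize a subdivision $G'$ of the input planar degree-$3$ graph as the conflict graph of unit squares/disks meeting the line (with the edge-subdivision counts forced to be even, respectively multiples of $3$), invoke Lemmas~\ref{lemma-mdis-sq-int-diag-np-hard} and~\ref{lemma-mdds-sq-int-diag-np-hard} to relate $\alpha(G')$ and $\gamma(G')$ back to $G$, and defer the geometric gadgetry to Fraser--L\'opez-Ortiz and Mudgal--Pandit exactly as the paper does. The only cosmetic difference is that you build the subdivision directly into the geometric chains while the paper first constructs the abstract subdivided graph $G'$ and then embeds it one disk/square per vertex and one point per edge; the substance is identical.
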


\begin{proof} We first prove that the \mdis~problem is \np-hard for {\bf B1} and {\bf B2} classes of objects. Next, we prove that the \mdds~problem is \np-hard for {\bf B1} and {\bf B2} classes of objects.

\noindent \paragraph{\textbf{The \mdis~problem for {\bf B1}:}}

Here, we use the reduction similar to the {\it WSDUDC} problem \cite{Fraser2017}. We give a reduction from a known \np-hard problem the \misp~problem.  We borrow the constructions and proofs of the hardness result from Fraser and L\'opez-Ortiz \cite{Fraser2017}. For the sake of completeness, we briefly describe the result here.

We make the reduction in two phases, Phase 1 and Phase 2. In Phase 1, from an instance $G$ of the \misp~problem, another instance $G'$ of the same \misp~problem is generated. Next, in Phase 2, from $G'$, an instance ${\tt M_{G'}}$ of the \mdis~problem for {\bf B1} is generated. 

\paragraph{Phase 1 (Constructing $G'$ from $G$):} This phase is identical to \cite{Fraser2017}. In $G$, we add dummy vertices to generate $G'$. The addition is made in the following four steps. Since $G$ is a planar graph, it can be embedded in the plane such that no two vertices of $G$ have the same either $x$- or $y$-coordinates. For a vertex, we say that an edge is incident to it either from the left or right. The edges that are incident to a vertex from exactly one side (either left or right) can be ordered in the $y$-direction.

\begin{description}
    \item[Step 1:] Let $v$ be a degree 3 vertex where all 3 edges incident to $v$ are either from left or from the right. We replace the bottom edge $e$ with either a `<' type edge (if $e$ is incident to $v$ from right) or a `>' type edge (if $e$ is incident to $v$ from left) by adding a new dummy vertex at the corner. See ``\includegraphics[scale=.9]{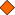}''-shaped vertex in Fig.  \ref{fig-nph-is-2}. Let $G_1$ be the resulting graph generated at the end of this step.
    
    \item[Step 2:] Through each vertex $v$ of $G_1$, draw a vertical line and add a dummy vertex at the intersection point between the vertical line and an edge of $G_1$. See ``\includegraphics[scale=.9]{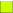}''-shaped vertex in Fig.  \ref{fig-nph-is-3}. Let $G_2$ be the resulting graph generated at the end of this step.

    \item[Step 3:] If the difference between the number of vertices on two consecutive vertical lines differs by more than 1, then add a vertical line between these two consecutive vertical lines. Add a dummy vertex at the intersection point between each newly added vertical line and each edge of $G_2$. See ``\includegraphics[scale=.9]{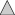}''-shaped vertex in Fig.  \ref{fig-nph-is-4}. Let $G_3$ be the resulting graph generated at the end of this step.

    \item[Step 4:] If the number of dummy vertices added during Steps 1 through 3 to an edge $e$ in $G$ is odd, then consider two consecutive vertical lines $\ell$ and $\ell'$ through two consecutive vertices (maybe dummy vertices) on $e$. We add 2 vertical lines $l_1,l_2$ between $\ell$ and $\ell'$. Add a dummy vertex at the intersection point between each vertical line $l_i$ and each edge of $G_3$. See ``\includegraphics[scale=1]{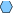}''-shaped vertex in Fig.  \ref{fig-nph-is-5}. Finally, add a dummy vertex immediately to the right of the dummy vertex at the intersection between $e$ and $l_1$. See ``\includegraphics[scale=1]{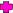}''-shaped vertex in Fig.  \ref{fig-nph-is-6}. Let $G'$ be the resulting graph generated at the end of this step.  
\end{description}

\begin{figure}[ht!]
\begin{center}
{\subfigure[ ]{\includegraphics[width=.46\textwidth]{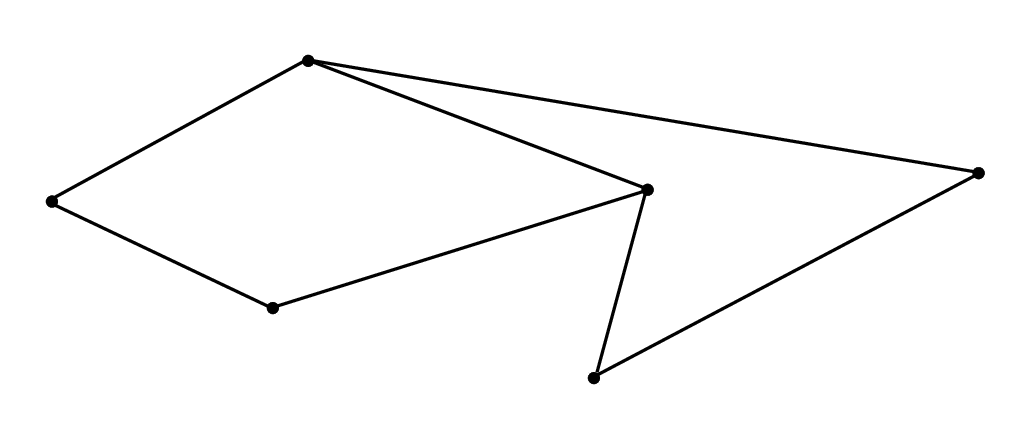}
\label{fig-nph-is-1}
}}
{\subfigure[ ]{\includegraphics[width=.46\textwidth]{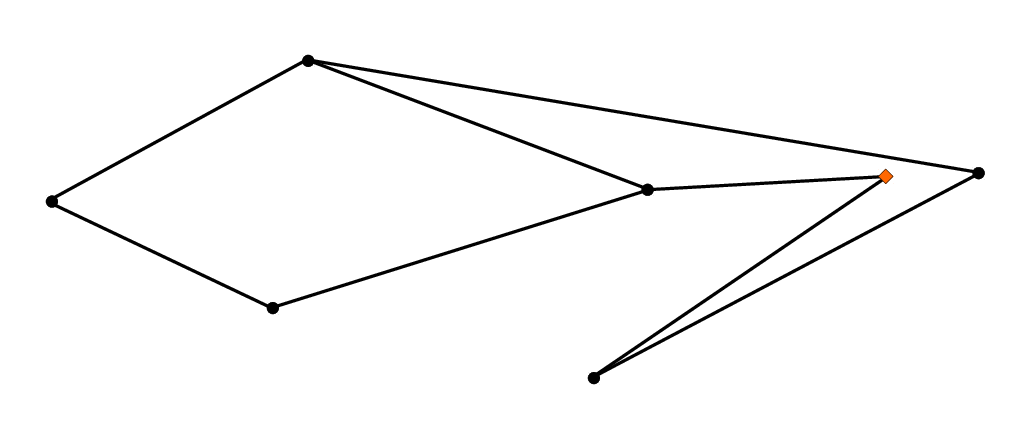}
\label{fig-nph-is-2}
}}
{\subfigure[ ]{\includegraphics[width=.46\textwidth]{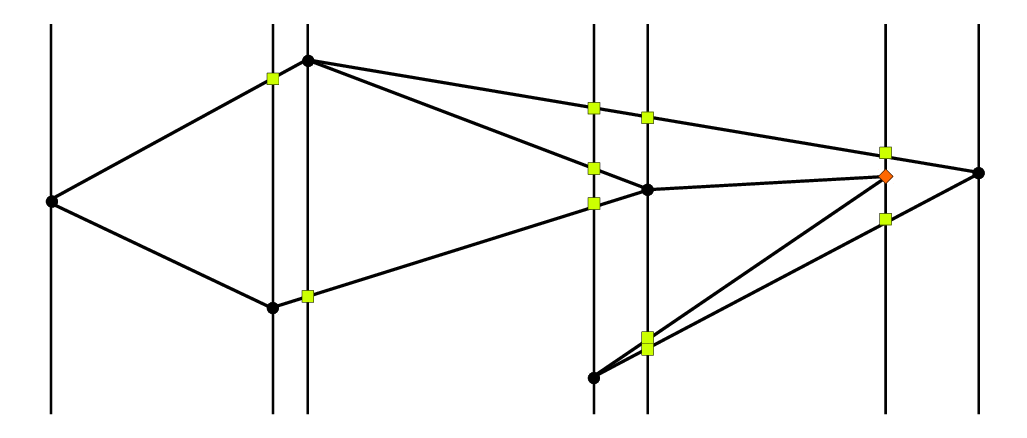}
\label{fig-nph-is-3}
}}
{\subfigure[ ]{\includegraphics[width=.46\textwidth]{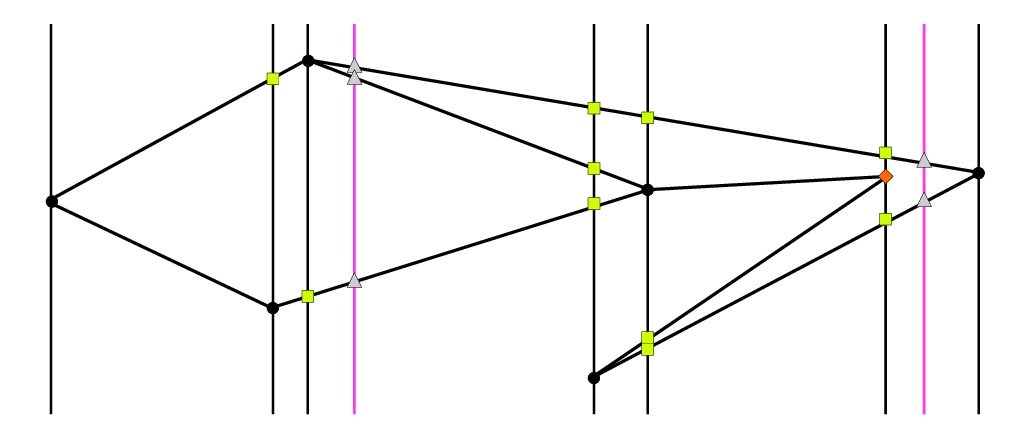}
\label{fig-nph-is-4}
}}
{\subfigure[ ]{\includegraphics[width=.46\textwidth]{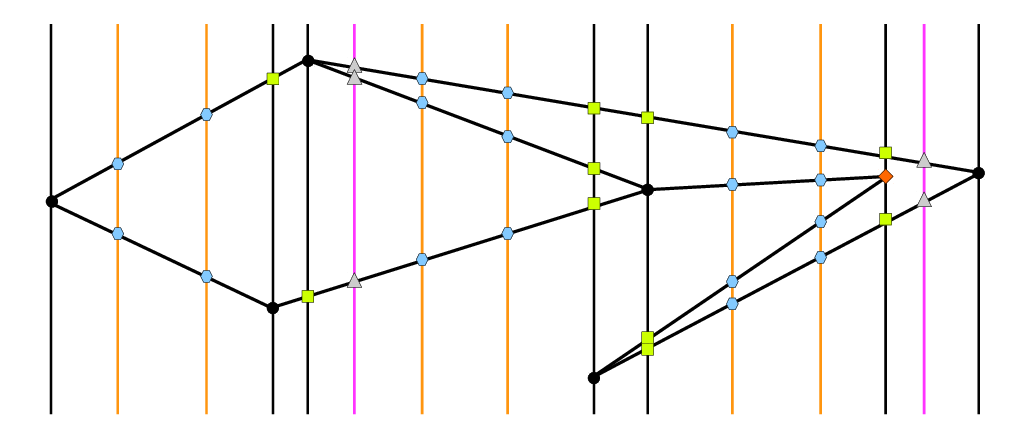}
\label{fig-nph-is-5}
}}
{\subfigure[ ]{\includegraphics[width=.46\textwidth]{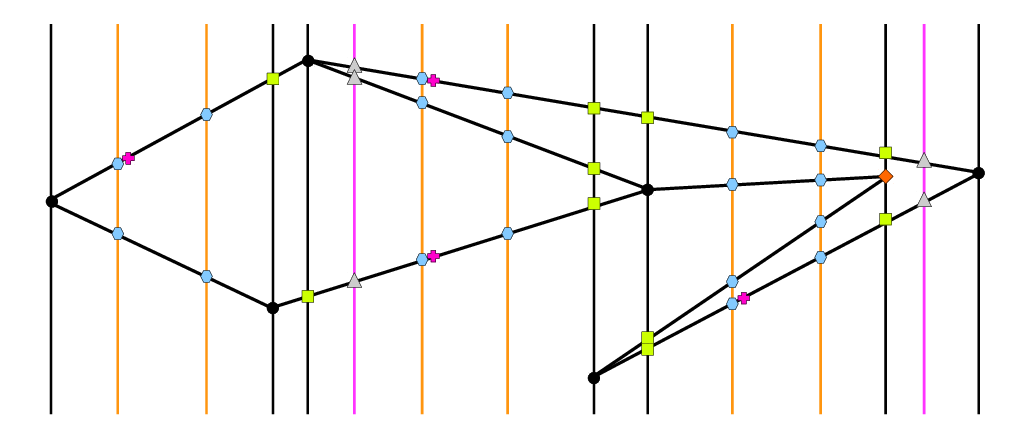}
\label{fig-nph-is-6}
}}
\end{center}
\caption{Different steps of Phase 1 that generates an instance $G'$ of the \misp~problem from an instance $G$ of the same \misp~problem. (a) The given graph, (b) Step 1, (c) Step 2, (d) Step 3, (e)-(f) Step 4.}
\label{fig-nph-is}
\end{figure}


Let $G'$ be the graph returned at the end of Phase 1. Clearly, $G'$ is an instance of the \misp~problem. By applying Lemma \ref{lemma-mdis-sq-int-diag-np-hard}, we say that the \misp~problem for the kind of graph generated in Phase 1 is \np-hard.

\paragraph{Phase 2 (Constructing ${\tt M_{G'}}$ from $G'$)}:

Phase 2 is identical to the construction given in \cite{Fraser2017}. Here, for each vertex $v$ in $G'$, take a unit disk $d_u$, and for each edge $e$ in $G'$, take a point $p_e$ in ${\tt M_{G'}}$. Two vertices $u$ and $v$ are connected by an edge $e$ if and only if their corresponding disks $d_u$ and $d_v$  cover the point $p_e$.

Clearly, ${\tt M_{G'}}$ is an exact embedding of $G'$. Therefore, finding a minimum size independent set of vertices in $G'$ is
equivalent to finding a minimum size independent set of unit disks in ${\tt M_{G'}}$. Hence the \mdis~problem for {\bf B1} is \np-hard.

\noindent \paragraph{\textbf{The \mdis~problem for {\bf B2}:}}

We give a reduction from the \misp~problem. Here, we also make the reduction in two phases. Phase 1 is identical to Phase 1 that we described above for the \mdis~problem for {\bf B1}. We create an instance $G'$ of the \misp~problem from $G$, an instance of the \misp~problem. Clearly, using Lemma \ref{lemma-mdis-sq-int-diag-np-hard}, we can say that the \misp~problem on the type of graph $G'$ generated from the \misp~problem instance $G$ is \np-hard.  

Phase 2 is identical to Phase 2 of the \np-hardness reduction of the Set Cover problem in \cite{Mudgal2015}. We create an instance $M_{G'}$ of the \mdis~problem from $G'$. Here, for each vertex $v$ in $G'$, take a unit square $t_v$, and for each edge $e$ in $G'$, take a point $p_e$. Two vertices $u$ and $v$ are joined by an edge if and only if both $t_u$ and $t_v$ cover the point $p_e$.

Actually, ${\tt M_{G'}}$ is an exact embedding of $G'$. Therefore, finding a minimum size independent set of vertices in $G'$ is
equivalent to finding a minimum size independent set of unit squares in ${\tt M_{G'}}$. Hence, the \mdis~problem for {\bf B2} is \np-hard.

\noindent \paragraph{\textbf{The \mdds~problem for {\bf B1}:}}

Here, we give a reduction from the \mdsp~problem. The reduction is similar to the reduction described for the \mdds~problem for {\bf B1} with a few differences. Here, the reduction is also composed of two phases. In Phase 1, an instance $G'$ of the \mdsp~problem is generated from an instance $G$ of the \mdsp~problem. Next, in Phase 2, an instance $M_{G'}$ of the \mdds~problem with unit squares is generated from $G'$.

To prove that the \mdsp~problem on $G'$ is \np-hard, we apply Lemma \ref{lemma-mdds-sq-int-diag-np-hard}. For this purpose, we must modify only Step 4 of Phase 1 for the \mdds~problem for {\bf B1}. The other Steps remain the same. 

Modification in Step 4:  In order to prove that the \mdsp~problem on $G'$ is \np-hard, we apply Lemma \ref{lemma-mdds-sq-int-diag-np-hard}. Thus in each edge, $e$ in $G$, the dummy vertices added at the end of Phase 1 must be a multiple of 3. To ensure this, we do the following.

Consider an edge $e$ in $G$. Let the number of dummy vertices added on $e$ during Steps 1 through 3 is $d$ that is not a multiple of 3, i.e., $d\neq 3k$ for some integer $k \geq 0$. In this case, consider two consecutive vertical lines $\ell$ and $\ell'$ through two consecutive vertices (maybe dummy vertices) on $e$. We add 6 vertical lines $l_1,l_2,\ldots, l_6$ between $\ell$ and $\ell'$. Add a dummy vertex at the intersection point between each vertical line $l_i$ and each edge of $G_3$. See ``\includegraphics[scale=1]{figures/fig404.eps}''-shaped vertex in Fig.  \ref{fig-nph-ds-1}. Now, two cases can arise.
\begin{itemize}
    \item 
$d=3k+1$ for some integer $k>=0$: As in Step 4 of the \mdis~problem for {\bf B1}, add a dummy vertex immediate to the right of the dummy vertex at the intersection between $e$ and $l_2$. See ``\includegraphics[scale=1]{figures/fig405.eps}''-shaped vertex in Fig.  \ref{fig-nph-ds-2}.

\item $d=3k+2$ for some integer $k>=0$: As in Step 4 of the \mdis~problem for {\bf B1}, add one dummy vertex immediate to the right of the dummy vertex at the intersection between $e$ and $l_2$ and add another dummy vertex immediate to the right of the dummy vertex at the intersection between $e$ and $l_5$. See ``\includegraphics[scale=1]{figures/fig405.eps}''-shaped vertex in Fig.  \ref{fig-nph-ds-2}.
\end{itemize}

\begin{figure}[ht!]
\begin{center}
{\subfigure[ ]{\includegraphics[width=.46\textwidth]{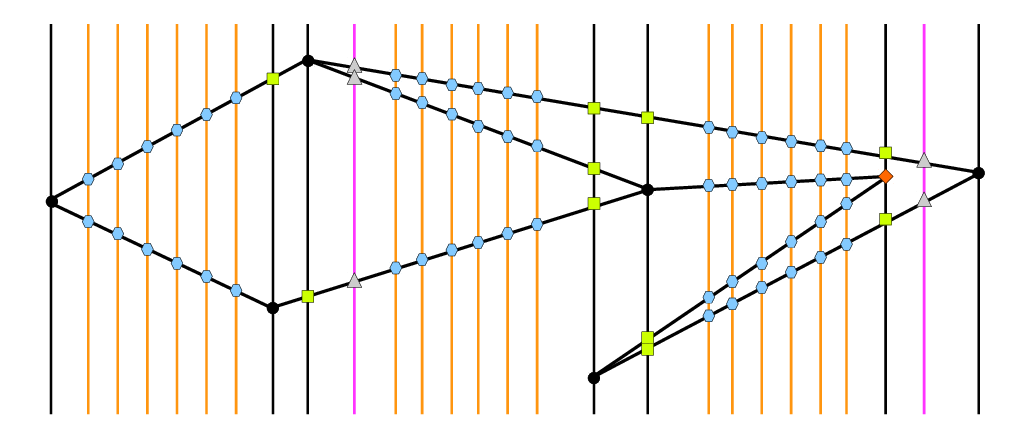}
\label{fig-nph-ds-1}
}}
{\subfigure[ ]{\includegraphics[width=.46\textwidth]{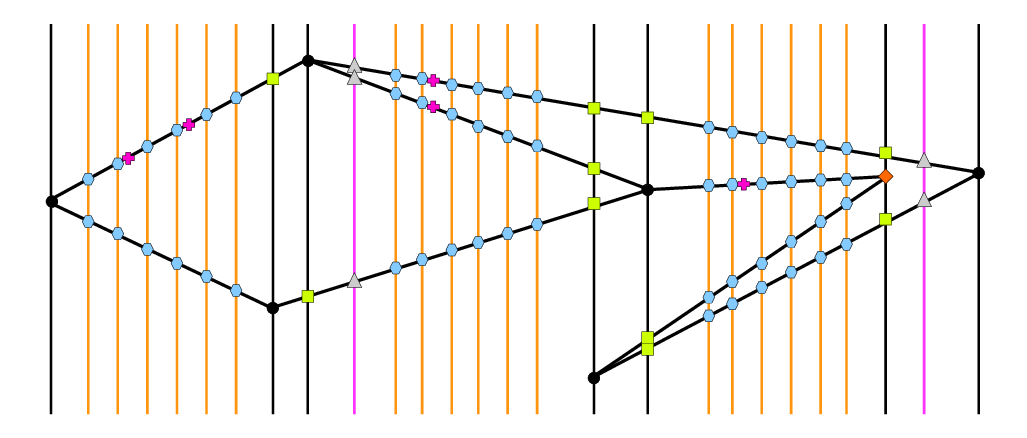}
\label{fig-nph-ds-2}
}}
\end{center}
\caption{(a)-(b) Step 4 of Phase 1 of generating an instance $G'$ of the \mdsp~problem from an instance $G$ of the same \mdsp~problem.}
\label{fig-nph-ds}
\end{figure}

It is easy to observe that finding a minimum size dominating set of vertices in $G'$ is
equivalent to finding a minimum size dominating set of unit disks in ${\tt M_{G'}}$. Hence The \mdds~problem for {\bf B1} is \np-hard.

\noindent \paragraph{\textbf{The \mdds~problem for {\bf B2}:}} For this also we give a reduction from the \mdsp~problem. The reduction consists of two phases. Phase 1 is identical to Phase 1 of the \mdds~problem for {\bf B1} above, and it generates the graph $G'$. Phase 2 is identical with phase 2 of the \mdis~problem for {\bf B2} that generates an instance $M_{G'}$ of the \mdds~problem for {\bf B2}.

Since $M_{G'}$ is an exact embedding of $G'$, it implies that finding a minimum size dominating set of vertices in $G'$ is
equivalent to finding a minimum size dominating set of unit squares in ${\tt M_{G'}}$. Hence The \mdds~problem for {\bf B2} is \np-hard. 

Hence, the theorem is proved. 
\end{proof}



%
%

\section{Conclusion}

In this paper, for both \mdis~and \mdds~problems,  we design local search-based \ptas es when the objects are arbitrary radii disks and arbitrary side length axis-parallel squares. These results partially address the question posed by Chan and Har-Peled \cite{Chan2012}  about designing a \ptas~for the \mdis~problem with pseudo-disks.   Further, we show that the \mdds~problem is \apx-hard for various types of geometric objects in $\mathbb{R}^2$ and $\mathbb{R}^3$.   Finally, we prove that both  \mdis~and \mdds~problems are \np-hard for unit disks intersecting a horizontal line and axis-parallel unit squares intersecting a straight line with slope $-1$. A natural open question is the existence of \ptas es for the \mdis~and \mdds~problems with pseudo-disks.

	\bibliographystyle{plain}
	\bibliography{bib}

\end{document}